\documentclass[a4paper,UKenglish,cleveref, autoref, thm-restate]{lipics-v2021}

\pdfoutput=1 
\hideLIPIcs  

\usepackage{mathtools}
\usepackage{tikz-cd}
\usepackage{xparse}
\usepackage{xspace}
\usepackage{mathpartir}
\usepackage{environ}
\usepackage{pict2e}
\usepackage{tikz}

\newtheorem{component}[theorem]{Component}
\newtheorem{notation}[theorem]{Notation}

\crefname{component}{Component}{Components}

\usetikzlibrary{arrows,shapes}

\DeclarePairedDelimiter\angles\langle\rangle
\DeclarePairedDelimiter\braces\lbrace\rbrace
\DeclarePairedDelimiterX\braceset[2]\lbrace\rbrace{#1 \mathrel{\delimsize\vert} #2}

\makeatletter
\newcommand{\DeclareAbbrevation}[2]{\newcommand{#1}{\@ifnextchar{.}{#2}{#2.\@\xspace}}}
\makeatother

\DeclareAbbrevation{\ie}{i.e}
\DeclareAbbrevation{\eg}{e.g}
\DeclareAbbrevation{\cf}{cf}
\DeclareAbbrevation{\etc}{etc}
\DeclareAbbrevation{\resp}{resp}
\DeclareAbbrevation{\etal}{et al}
\DeclareAbbrevation{\ibid}{ibid}
\DeclareAbbrevation{\ca}{ca}

\newcommand{\defeq}{\coloneqq}
\newcommand{\co}{\colon}

\tikzset{
  epi/.style = {commutative diagrams/two heads},
  mono/.style = {commutative diagrams/tail},
  cof/.style = {{Triangle[reversed,scale=0.9]}->},
  fib/.style = {-{Triangle[open,scale=0.9]}},
  weq/.style = {"\sim"'{sloped,font=\tiny,#1}},
  tcof/.style = {cof,weq={#1}},
  tcof/.default = {above},
  tfib/.style = {fib,weq={#1}},
  tfib/.default = {above},
  weq/.default = {above},
  bifib/.style = {{Triangle[reversed,scale=0.9]}-{Triangle[open,scale=0.9]}},
}

\tikzcdset{
  arrow style=tikz,
  phantomcenter/.style = {phantom,start anchor=center,end anchor=center}
}

\NewDocumentCommand{\pushout}{D<>{0} O{6ex}}{%
  \arrow[phantom,start anchor=center,to path={-- ++({#1+135}:#2) \tikztonodes}, "\ulcorner"]
}
\NewDocumentCommand{\pullback}{D<>{0} O{6ex}}{%
  \arrow[phantom,start anchor=center,to path={-- ++({#1-45}:#2) \tikztonodes}, "\lrcorner"]
}

\NewDocumentCommand{\id}{o}{\mathrm{id}\IfValueT{#1}{_{#1}}}
\NewDocumentCommand{\Id}{o}{\mathrm{Id}\IfValueT{#1}{_{#1}}}

\newcommand{\pair}[2]{(#1, #2)}

\newcommand{\Slice}[2]{{#1} / {#2}}
\newcommand{\Coslice}[2]{{#1} / {#2}}

\NewDocumentCommand{\mono}{s}{\IfBooleanTF{#1}{\leftarrowtail}{\rightarrowtail}}
\NewDocumentCommand{\epi}{s}{\IfBooleanTF{#1}{\twoheadleftarrow}{\twoheadrightarrow}}

\NewDocumentCommand\YoSymScaled{m}{
  \raisebox{#1em * \real{-.02}}{%
    \includegraphics[quiet=true,draft=false,height=#1em * \real{.67},keepaspectratio]{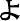}%
    \hspace{0.1em}%
  }
}

\NewDocumentCommand{\yosym}{}{
  \mathord{%
    \mathchoice{\YoSymScaled{1}}{\YoSymScaled{1}}{\YoSymScaled{\defaultscriptratio}}{\YoSymScaled{\defaultscriptscriptratio}}
  }%
}

\NewDocumentCommand{\PSh}{o m}{\mathrm{PSh}\IfValueT{#1}{_{#1}}(#2)}
\NewDocumentCommand{\yo}{g}{\yosym\IfValueT{#1}{#1}}

\newcommand{\subst}[1]{#1^*}

\newcommand{\ran}[1]{{#1}_*}

\NewDocumentCommand{\arrowofstyle}{m m m}{
  \NewDocumentCommand#1{o}{
    \mathrel{\hspace{-0.63ex}\tikz[baseline=-0.7ex, shorten <=3pt, shorten >=2pt]
      \draw[#2,line width=0.08ex](0,0) -- (1.6em,0)
      \IfValueT{##1}{node [midway,scale=0.75,yshift=0.5ex,xshift=#3] {$##1$}};
    \hspace{-0.41ex}}
  }
}

\arrowofstyle{\ordinary}{->}{-0.2ex}
\arrowofstyle{\ordinaryleft}{<-}{0.4ex}
\arrowofstyle{\cof}{cof}{0.0ex}
\arrowofstyle{\fib}{fib}{-0.3ex}
\NewDocumentCommand{\weq}{s}{\IfBooleanTF{#1}{\ordinaryleft[\sim]}{\ordinary[\sim]}}

\newcommand{\II}{\mathbb{I}}

\newcommand{\upsc}[1]{\textup{\textsc{#1}}}

\newcommand{\cat}[1]{\mathcal{#1}}
\renewcommand{\mod}[1]{\mathcal{#1}}
\newcommand{\rmc}[1]{\mathbb{#1}}
\newcommand{\RMC}{\mathbf{RMC}}
\newcommand{\rmcCl}[1]{\mathbb{C}\upsc{l}(#1)}
\newcommand{\rmcII}{\mathbb{I}\upsc{nt}}
\newcommand{\rmcDTT}{\mathbb{M}\upsc{ltt}}
\newcommand{\rmcCTT}{\mathbb{C}\upsc{tt}}
\newcommand{\rmcCTTs}{\mathbb{C}\upsc{tt}_s}
\newcommand{\rmcMLTT}{\mathbb{M}\upsc{ltt}_{\Sigma,\mathsf{Id}}}
\newcommand{\rmcMLTTU}{\mathbb{M}\upsc{ltt}_{\Sigma,\mathsf{Id},\ccU}}
\newcommand{\rmcCof}{\mathbb{C}\upsc{of}}
\newcommand{\rmcMLTTPi}{\mathbb{M}\upsc{ltt}_{\Sigma,\mathsf{Id},\Pi}}

\newcommand{\Mod}[1]{\mathbf{Mod}(#1)}
\newcommand{\Ctx}[1]{#1(\star)}
\newcommand{\heart}[1]{{#1}^{\heartsuit}}

\newcommand{\flipII}{\mathrm{Flip}}
\newcommand{\thRev}[2]{\mathrm{Rev}_{#2}#1}
\newcommand{\twistII}{\mathrm{T}}
\newcommand{\twistCTT}{\mathrm{T}}

\newcommand{\thCart}{\Phi_{\mathrm{cart}}}
\newcommand{\thDL}{\Phi_{\mathrm{DL}}}

\NewDocumentCommand{\spanCTT}{s m m}{\mathrm{S}\IfBooleanF{#1}{^{#2}_{#3}}}
\NewDocumentCommand{\apexCTT}{s m m}{\mathrm{M}\IfBooleanF{#1}{^{#2}_{#3}}}

\newcommand{\spanMLTT}{\mathrm{Refl}}
\newcommand{\apexMLTT}{\mathrm{P}}
\newcommand{\relMLTT}{\mathrm{R}}

\newcommand{\soO}{\square}
\newcommand{\soR}{\star}
\newcommand{\soto}{\Rightarrow}
\newcommand{\sov}[1]{\mathtt{#1}}
\newcommand{\soabs}[1]{\angles{#1}}

\newcommand{\ccTy}{\mathsf{Ty}}
\newcommand{\ccTm}{\mathsf{Tm}}
\newcommand{\ccII}{\mathbb{I}}
\newcommand{\ccCof}{\mathsf{Cof}}
\newcommand{\ccTrue}{\mathsf{True}}

\newcommand{\clTm}{\pi_{\ccTm}}
\newcommand{\clTrue}{\pi_{\ccTrue}}

\newcommand{\cce}[1]{\mathsf{#1}}

\newcommand{\ccTop}{\top}
\newcommand{\ccBot}{\bot}
\newcommand{\ccCup}{\cup}
\newcommand{\ccCap}{\cap}
\newcommand{\ccAbortTy}{\mathsf{elim}^{\ccTy}_\bot}
\newcommand{\ccAbortTm}{\mathsf{elim}^{\ccTm}_\bot}
\newcommand{\ccCaseTy}{\mathsf{elim}^{\ccTy}_\cup}
\newcommand{\ccCaseTm}{\mathsf{elim}^{\ccTm}_\cup}
\newcommand{\ccCofEq}{\approx}

\newcommand{\ccFill}{\mathsf{fill}}
\newcommand{\ccCoe}{\mathsf{coe}}

\newcommand{\ccSym}{\mathsf{sym}}

\newcommand{\ccSigma}{\Sigma}
\newcommand{\ccPair}{\mathsf{pair}}
\newcommand{\ccFst}{\mathsf{fst}}
\newcommand{\ccSnd}{\mathsf{snd}}
\newcommand{\ccTimes}{\mathbin{\times}}

\newcommand{\ccPi}{\Pi}
\newcommand{\ccLam}{\lambda}

\newcommand{\ccPath}{\mathsf{Path}}
\NewDocumentCommand{\ccPathLam}{s}{\IfBooleanTF{#1}{\lambda}{\lambda^\II}}
\newcommand{\ccPathApp}{\mathbin{\textup{\textsf{@}}}}

\newcommand{\ccId}{\mathsf{Id}}

\newcommand{\ccU}{\mathsf{U}}
\newcommand{\ccEl}{\mathsf{El}}

\newcommand{\ccSusp}{\mathsf{Susp}}
\newcommand{\ccNorth}{\mathsf{north}}
\newcommand{\ccSouth}{\mathsf{south}}
\newcommand{\ccMerid}{\mathsf{merid}}
\newcommand{\ccSuspElim}{\mathsf{elim}}
\newcommand{\ccSuspElimMerid}{\mathsf{merid\beta}}
\newcommand{\ccSuspMap}{\mathsf{map}}

\newcommand{\ccIsContr}{\mathsf{isContr}}
\newcommand{\ccIsEquiv}{\mathsf{isEquiv}}

\newcommand{\ccFwd}{\mathsf{fwd}}

\newcommand{\ccThicken}{\mathsf{thicken}}
\newcommand{\ccAnti}{\mathsf{anti}}

\newcommand{\ccCong}{\mathsf{cong}}

\newcommand{\ccDecode}[1]{\mathsf{decode}^{#1}}

\newcommand{\ccGraph}{\mathsf{Graph}}

\newcommand{\Span}[1]{\mathrm{Span}(#1)}
\newcommand{\spanl}{d^0}
\newcommand{\spanr}{d^1}
\newcommand{\R}{{\mathrm{r}}}

\newcommand{\ccTyEqv}{\mathsf{Ty}^{\simeq}}
\newcommand{\ccTmEqv}{\mathsf{Tm}^{\simeq}}
\newcommand{\ccTmId}{\mathsf{Tm}^{\asymp}}

\newcommand{\ccGlue}{\mathsf{Glue}}
\newcommand{\ccEnglue}{\mathsf{glue}}
\newcommand{\ccUnglue}{\mathsf{unglue}}

\newcommand{\unitl}{\eta^0}
\newcommand{\unitr}{\eta^1}

\newcommand{\Synt}[1]{\mathbf{0}_{#1}}

\newcommand{\Omegadec}{\Omega_{\mathrm{dec}}}
\newcommand{\OPCof}{\Omega_{\mathrm{cof}}}

\newcommand{\gencof}[2]{#1 \otimes_{#2} \delta}

\title{Eliminating reversals from cubical type theories}

\author{Evan Cavallo}{Department of Computer Science and Engineering, University of Gothenburg and Chalmers University of Technology, Sweden \and \url{https://ecavallo.net} }{evan.cavallo@gu.se}{https://orcid.org/0000-0001-8174-7496}{Knut and Alice Wallenberg Foundation (KAW), Grant No.\ 2019.0116}

\author{Christian Sattler}{Department of Computer Science and Engineering, Chalmers University of Technology and University of Gothenburg, Sweden \and \url{https://www.cse.chalmers.se/~sattler/}}{sattler@chalmers.se}{0000-0001-6374-4427}{US Air Force Office of Scientific Research, award number FA9550-24-1-0302}

\authorrunning{E. Cavallo and C. Sattler} 

\Copyright{Evan Cavallo and Christian Sattler} 

\ccsdesc[500]{Theory of computation~Type theory}
\ccsdesc[500]{Theory of computation~Constructive mathematics}

\keywords{Dependent type theory, univalence, cubical type theory}

\category{} 

\relatedversion{} 

\nolinenumbers 

\EventEditors{Claudia Faggian and Joost-Pieter Katoen}
\EventNoEds{2}
\EventLongTitle{41st Annual Symposium on Logic in Computer Science (LICS 2026)}
\EventShortTitle{LICS 2026}
\EventAcronym{LICS}
\EventYear{2026}
\EventDate{July 20--23, 2026}
\EventLocation{Lisbon, Portugal}
\EventLogo{}
\SeriesVolume{380}
\ArticleNo{82}

\begin{document}

\maketitle

\begin{abstract}
  Cubical type theories are designed around an abstract unit interval from which types of paths, used to represent equalities, are defined.
  Varying the operations available on this interval yields different type theories.
  A reversal is an involutive operator on the interval that swaps its two endpoints.
  We show that for cubical type theories with self-dual interval theories, such as the minimal theory of two endpoints or the theory of a bounded distributive lattice, the extension of the theory with a reversal that internalizes the duality is a conservative extension.
  The key tool is a ``twist construction'': the product of an interval and its dual is again an interval with a reversal given by swapping coordinates.

  Our conservativity result applies to ``opaque'' cubical type theories, without strict equations reducing the filling operator at concrete type formers or eliminators from higher inductive types at path constructors.
  Using the same twist construction, we also construct models of strict cubical type theory with reversals in categories of cubical sets without reversals.
  We thereby give the first model of a theory with reversals whose homotopy theory corresponds to that of topological spaces.
\end{abstract}

\section{Introduction}

Cubical type theories \cite{cohen-coquand-huber-mortberg:15,angiuli-favonia-harper:18,abcfhl:21} extend Martin-L\"of's dependent type theory \cite{martin-lof:75} with an abstract unit interval $\II$ which behaves much like a type.
Types of \emph{paths} $a_0 \sim^A a_1$, \ie, of terms $i : \II \vdash a(i) : A$ varying over the interval with fixed endpoints $a(0) = a_0$ and $a(1) = a_1$, play the role of equality types.
As equality types, path types are remarkably well-behaved.
For example, they natively satisfy function extensionality: equalities of functions correspond to families of pointwise equalities.
With additional type formers, cubical type theories can also support Voevodsky's univalence axiom and higher inductive types (HITs) \cite{coquand-huber-mortberg:18,cavallo-harper:19}, making them models of homotopy type theory (HoTT) \cite{hott-book}.

Path types satisfy different strict equations than Martin-L\"of's identity types.
On the one hand, they do not support a J eliminator with a strict computation rule \cite[\S9.1]{cohen-coquand-huber-mortberg:15}.
On the other hand, for example, one has an operator witnessing that functions $f : A \to B$ preserve paths,
$\ccCong_f \defeq \ccLam p. \ccPathLam* i. f(p(i)): (a_0 \sim^A a_1) \to (f(a_0) \sim^B f(a_1))$,
that commutes \emph{strictly} with function composition: $\ccCong_g \circ \ccCong_f = \ccCong_{g \circ f}$.
Such equations make cubical type theory a convenient setting for \emph{synthetic homotopy theory} (see, \eg, M\"ortberg and Pujet \cite{mortberg-pujet:20}), homotopy theory developed in the language of type theory, which can involve complicated manipulations with iterated identity/path types.

The range of strict equations satisfied by a cubical type theory's path types depends on its \emph{interval theory}, the collection of operations available on $\II$.
Given a \emph{reversal} operator $i : \II \vdash \lnot i : \II$ such that $\lnot 0 = 1$, $\lnot 1 = 0$, and $\lnot \lnot i = i$, we can define a path inversion operator
\[
  \ccSym \defeq \ccLam p. \ccPathLam* i. p(\lnot i) : (a_0 \sim^A a_1) \to (a_1 \sim^A a_0)
\]
that is strictly involutive ($\ccSym \circ \ccSym = \id$) and commutes strictly with the action of functions ($\ccCong_f \circ \ccSym = \ccSym \circ \ccCong_f$).
\emph{Connections} $i : \II, j : \II \vdash i \land j : \II$ and $i : \II, j : \II \vdash i \lor j : \II$ behaving like the $\min$ and $\max$ functions on the topological interval are similarly useful for higher-dimensional manipulations.
Cohen, Coquand, Huber, and M\"ortberg's original cubical type theory \cite{cohen-coquand-huber-mortberg:15} includes $\lnot$, $\land$, and $\lor$ with the equational theory of the free De Morgan algebra.
On the other hand, Angiuli, Favonia, and Harper's theory \cite{angiuli-favonia-harper:18} demonstrates that none of these operators is \emph{necessary} to set up a well-behaved cubical type theory.

While convenient for the user of the type theory, additional operations on the interval are less convenient for the semanticist.
To justify the project of synthetic homotopy theory, a cubical type theory should at least have a model in \emph{$\infty$-groupoids}, an abstract description of the homotopy theory of topological spaces.
Constructive models classically equivalent to $\infty$-groupoids were found first for cubical type theory without any interval operations by Awodey, Cavallo, Coquand, Riehl, and Sattler~\cite{accrs:24} and then for the theory with one connection $\lor$ by Cavallo and Sattler~\cite{cavallo-sattler:25}.
Most recently, the second-named author announced \cite{sattler:25} a model constructively equivalent to $\infty$-groupoids that can interpret cubical type theory with two connections, $\land$ and $\lor$, and the equations of a bounded distributive lattice.
However, none of these models interpret a reversal.
This is a particularly unfortunate state of affairs because Cubical Agda \cite{vezzosi-mortberg-abel:21}, the most widely used proof assistant for cubical type theory, is based on Cohen et al.'s type theory and thus includes $\lnot$ along with $\lor$ and $\land$, and its substantial standard library \cite{agda-cubical} relies extensively on these operators.

\subsection{Contributions}
\label{sec:contributions}

We show that a reversal is an essentially harmless extension to cubical type theory.

The key fact is that when $\II$ is an interval object with endpoints $0$ and $1$, its square $\II \times \II$ is an interval object with endpoints $(0,1)$ and $(1,0)$ and a reversal $\lnot (i_0,i_1) \defeq (i_1,i_0)$ that swaps the axes of the square.
When $\II$ has connections defining a distributive lattice $(\II,\land,\lor,0,1)$, $\II \times \II$ is a De Morgan algebra with connections given by $(i_0,i_1) \land (j_0,j_1) \defeq (i_0 \land j_0, i_1 \lor j_1)$ and $(i_0,i_1) \lor (j_0,j_1) \defeq (i_0 \lor j_0, i_1 \land j_1)$.
In general, when $\II$ has some self-dual algebraic structure (in a sense we make precise in \cref{sec:extension-by-reversal}), $\II \times \II$ has the same structure as well as a reversal.
A variety of constructions in this mold appear in the algebraic literature (e.g., in lattice and order theory), where they are called \emph{twist constructions}.
This name originates with Kracht \cite{kracht:98}, who applies it to a construction of Nelson algebras from Heyting algebras taken from Vakarelov \cite{vakarelov:77}.
Fidel and Brignole \cite{fidel-brignole:91} and Rivieccio \cite[\S7]{rivieccio:20} consider the case of building De Morgan algebras from distributive lattices, which is of particular interest to us.

We derive two main results from this simple construction.

\subsubsection{Conservativity for opaque cubical type theory}

First, we prove that a reversal is a conservative extension for ``opaque'' cubical type theories with self-dual interval theories.
Similar to a theory considered by Coquand, Huber, and Sattler \cite{coquand-huber-sattler:22}, these opaque theories are cubical type theories where certain strict equations are either omitted or replaced with terms of path type.
Specifically, we
\begin{enumerate}[(a)]
\item\label{conservativity-omit-equations} omit equations that reduce uses of the filling operator at concrete type formers, and
\item weaken equations for the reduction of HIT eliminators on path constructors to paths.
\end{enumerate}
The equations of \eqref{conservativity-omit-equations} always hold up to paths, by higher dimensional-instances of filling, so omitting them also amounts to weakening them to paths.

Building on the twist construction, we define a \emph{twist interpretation} from opaque cubical type theories with reversals to corresponding theories without reversals, interpreting judgments $\Gamma \vdash i : \II$ as pairs $(\Gamma \vdash i_0 : \II, \Gamma \vdash i_1 : \II)$ and path types as \emph{square} types---encoded as iterated path types---with fixed values at the points $(0,1)$ and $(1,0)$.
We use this translation to prove a conservativity result: for a context of term variables $\Gamma$ and type $\Gamma \vdash A$ in the base theory and a term $\Gamma \vdash_{\lnot} N : A$ in the extended theory, there is a term $\Gamma \vdash M : A$ in the base theory with a path $\Gamma \vdash_\lnot P : M \sim^A N$.
Similarly, if $\Gamma \vdash_\lnot B$ is a type in the extended theory, then there is a type $\Gamma \vdash A$ with an equivalence $\Gamma \vdash_\lnot E : A \simeq B$.

Coquand, Huber, and Sattler \cite{coquand-huber-sattler:22} prove \emph{homotopy canonicity} for their opaque cubical type theory: every closed term of natural number type is connected by a path to a concrete numeral.
Our hope is that similar techniques can be used to show that strict cubical type theories may be conservative over their opaque counterparts in general.
Some progress towards a framework for coherence theorems dealing with strictification of equations has been made by Bocquet \cite{bocquet:23}.
With such a result, the program of relating cubical theories with different interval structure could be conducted in the simpler world of opaque theories, as we do here for reversals, then mechanically extended to strict theories.

\subsubsection{Models for strict cubical type theory with reversals in spaces}

In lieu of a hoped-for conservativity result for strict over opaque cubical type theories, we separately use the same basic twist construction to build concrete models of strict cubical type theory with reversals.
We work with the parameterized model construction of Angiuli, Brunerie, Coquand, Harper, Favonia, and Licata (ABCHFL) \cite{abcfhl:21}, showing that any model of cubical type theory given by this construction can be upgraded to a model of a cubical type theory with reversals in the same target category.

Combining this with our prior work showing that the homotopy theory of a certain ABCHFL model is classically equivalent to $\infty$-groupoids \cite{cavallo-sattler:25}, we obtain a model of strict cubical type theory with a reversal whose homotopy theory is classically equivalent to $\infty$-groupoids.
This is the first known model of its kind.
Although Cohen, Coquand, Huber, and M\"ortberg~\cite{cohen-coquand-huber-mortberg:15} give a model of their type theory, which includes reversals, this model and others like it have pathological homotopy theories \cite{sattler:18}.
Our models avoid these pathologies.

To obtain a similar model for strict cubical type theory with reversals \emph{and} connections, we would need an input ABCHFL model of the theory with connections in $\infty$-groupoids.
At present, no such model is known: while the theory with connections has an ABCHFL model in cartesian cubical sets with connections, it is an open problem to characterize its homotopy theory, as discussed by Streicher and Weinberger \cite{streicher-weinberger:21}.
The second-named author has claimed a model of cubical type theory with connections whose homotopy theory is that of $\infty$-groupoids \cite{sattler:25}, but it is not a direct instance of the ABCHFL construction.
We expect that our work adapts to this model, but we leave this for future work.

\subsection{Outline}

In \S\ref{sec:tt}, we set the stage to study cubical type theories in generality by reviewing Uemura's framework of \emph{second-order generalized algebraic theories} and their semantics based on \emph{representable map categories} \cite{uemura:21,uemura:23}.
We recall Kapulkin and Lumsdaine's definition of \emph{weak equivalence} for models of type theory with $\Sigma$ and identity types \cite{kapulkin-lumsdaine:18}, which we will use to state our conservativity theorem.
In \S\ref{sec:cubical}, we define cubical type theory in these terms.

We begin studying opaque cubical type theory in \S\ref{sec:twist}, where we define the extension of a self-dual interval theory with a reversal and the \emph{twist interpretation} from a cubical type theory extended with a reversal back to the original theory.
In \S\ref{sec:span}, we define the \emph{representable map category of spans} and develop tools to relate pairs of interpretations between cubical type theories.
We use these in \S\ref{sec:conservativity} to prove a general theorem for deriving weak equivalences from interpretations; we apply it with the twist interpretation to deliver the conservativity of reversals over opaque cubical type theories with self-dual interval theories.
In \S\ref{sec:spaces}, we construct a model of strict cubical type theory with reversals whose homotopy theory is classically that of $\infty$-groupoids.

\section{Type theories and models}
\label{sec:tt}

\subsection{SOGATs}
\label{sec:sogats}

We use Uemura's framework of \emph{second-order generalized algebraic theories (SOGATs)}~\cite[Chapter 4]{uemura:21} and their semantic counterparts, \emph{representable map categories}~\cite[Chapter 4]{uemura:21}~\cite{uemura:23} (also called \emph{categories with representable maps}).
The language of SOGATs is a \emph{logical framework} (\cf Harper, Honsell, and Plotkin~\cite{harper-honsell-plotkin:93}) with which we can specify type theories themselves in type-theoretic language.
In a SOGAT, we specify the judgment forms of a type theory while marking some as hypothesizable or \emph{representable}.
To begin with an example, \emph{basic dependent type theory} $\rmcDTT$~\cite[Example 4.6.1]{uemura:21} is specified by two sorts, one for types and one for terms:
\[
  \begin{array}{lcl}
    \ccTy &:& () \soto \soO
  \end{array}
  \qquad\qquad
  \begin{array}{lcl}
    \ccTm &:& (\sov{A} : () \to \ccTy) \soto \soR
  \end{array}
\]
The type sort $\ccTy$ takes no parameters ($() \soto \ldots$) and is \emph{non-representable} ($\soO$), so hypotheses ``$A : \ccTy$'' cannot appear in the contexts of the type theory being specified.
The term sort $\ccTm$ takes one type $(\sov{A} : () \to \ccTy)$ as a parameter and is \emph{representable} ($\soR$), meaning we allow hypotheses ``$a : \ccTm(A)$''.

In general, a SOGAT consists of declarations $\Phi \soto s$ where $\Phi$ is an \emph{environment} and $s$ is either $\soO$, $\soR$, a previously introduced sort, or an equation between expressions such a sort.
An environment is a list $(\sov{A}_1 : \Gamma_1 \to e_1, \ldots, \sov{A}_n : \Gamma_n \to e_n)$ of metavariables $\sov{A}_i$ that take a \emph{context} $\Gamma_i$ and output an $e_i$ which is either a previously defined sort or an equation between expressions in such a sort.
Finally, a \emph{context} is a list $(\sov{a}_1 : A_1, \ldots, \sov{a}_n : A_n)$ of variables $\sov{a}_i$ of representable sorts $A_i$.
Everything can depend on what precedes it.
We refer to Uemura~\cite[Chapter 4]{uemura:21} for a much more precise description.

\begin{notation}
  We omit empty contexts $(() \to \ldots)$ and environments $(() \soto \ldots)$, writing, \eg, $\ccTy : \soO$ and $\ccTm : (\sov{A} : \ccTy) \soto \soR$.
  We also omit variable names when what follows does not depend on them, as in $\ccTm : \ccTy \soto \soR$.
  We surround arguments in square brackets when we intend to leave them implicit, as in the following example of $\Sigma$ types.
  Specifically to $\rmcDTT$: we leave the $\ccTm$ operator implicit and write $\sov{a} : \sov{A}$ rather than $\sov{a} : \ccTm(\sov{A})$.
\end{notation}

\begin{notation}[{{cf.~\cite[Remark 5.4.6]{uemura:21}}}]
  Any context can be treated as an environment; one level up, any environment $\Phi$ over a SOGAT $T$ can be treated as an extension of $T$ with new declarations.
  We write $T[\Phi]$ for the extended SOGAT combining $T$ with $\Phi$.
\end{notation}

Uemura gives encodings of standard type formers of Martin-L\"of type theory such as (negative) unit types, (negative) dependent sums, and dependent products~\cite[\S4.6.1]{uemura:21}.
For example, dependent sums are specified by the declarations
\[
  \begin{array}{lcl}
    \ccSigma &:& (\sov{A} : \ccTy, \sov{B} : \sov{A} \to \ccTy) \soto \ccTy \\
    \ccFst &:& ([\sov{A} : \ccTy, \sov{B} : \sov{A} \to \ccTy], \ccSigma(\sov{A}, \sov{B})) \soto \sov{A} \\
    \ccSnd &:& ([\sov{A} : \ccTy, \sov{B} : \sov{A} \to \ccTy], \ccSigma(\sov{A}, \sov{B})) \soto \sov{B}(\sov{a}) \\
    \ccPair &:& ([\sov{A} : \ccTy, \sov{B} : \sov{A} \to \ccTy], \sov{a} : \sov{A}, \sov{b} : \sov{B}(\sov{a})) \soto \ccSigma(\sov{A}, \sov{B})
  \end{array}
\]
and, over $\Phi_\ccSigma = (\sov{A} : \ccTy, \sov{B} : \sov{A} \to \ccTy)$, equations
\[
\begin{array}{lcl}
    \_ &:& (\Phi_\ccSigma, \sov{a} : \sov{A}, \sov{b} : \sov{B}(\sov{a})) \soto \ccFst(\ccPair(\sov{a},\sov{b})) \equiv \sov{a} : \sov{A} \\
    \_ &:& (\Phi_\ccSigma, \sov{a} : \sov{A}, \sov{b} : \sov{B}(\sov{a})) \soto \ccSnd(\ccPair(\sov{a},\sov{b})) \equiv \sov{b} : \sov{B}(\sov{a}) \\
    \_ &:& (\Phi_\ccSigma, \sov{s} : \ccSigma(\sov{A},\sov{B})) \soto \sov{s} \equiv \ccPair(\ccFst(\sov{s}),\ccSnd(\sov{s})) : \ccSigma(\sov{A},\sov{B})
  \end{array}
\]

\begin{notation}
  We write $\rmcMLTT$ for the extension of $\rmcDTT$ with $\Sigma$ types, unit types (which we think of as nullary $\Sigma$ types), and identity types~\cite[Examples 4.6.4--4.6.6]{uemura:21}.
  We write $\rmcMLTTPi$ for its further extension with $\Pi$ types \cite[Example 4.6.3]{uemura:21}.
\end{notation}

\begin{notation}
  We write $\ccSigma \sov{a}{:}A.\ B$ as shorthand for $\ccSigma(A, \soabs{\sov{a}} B)$, where $\soabs{\sov{a}}$ denotes abstraction over the variable $\sov{a}$.
  If $B$ does not depend on $\sov{a}$, we write $A \ccTimes B$.
  We write $(a, b)$ for the pairing $\ccPair(a, b)$ and $s.1$ and $s.2$ for $\ccFst(s)$ and $\ccSnd(s)$, respectively.
  For $\Pi$ types, we write $\ccPi \sov{a}{:}A.\ B$ for $\ccPi(A, \soabs{\sov{a}} B)$ and $A \to B$ when $B$ does not depend on $\sov{a}$.
  For identity types, we write the type $\ccId({A},{a}_0,{a}_1)$ of identities in $A$ from $a_0$ to $a_1$ as ${a}_0 \asymp^{{A}} {a}_1$ or ${a}_0 \asymp {a}_1$.
\end{notation}

\begin{notation}
  The unit type and dependent sums justify types of dependent $n$-tuples for $n \geq 0$.
  We write these with tupling $(a_1, \ldots, a_n)$ and projections $s.1, \ldots, s.n$.
\end{notation}

\subsection{Representable map categories}

To specify the notion of \emph{model} of a SOGAT, Uemura first introduces \emph{representable map categories}, also called \emph{categories with representable maps}.

\begin{definition}[{Uemura \cite[Definition 3.2.1]{uemura:21}}]
  A \emph{representable map category (RMC)} is a finite limit category $\rmc{R}$ equipped with a class of morphisms, the \emph{representable maps}, such that
  \begin{enumerate}[(a)]
  \item the representable maps are closed under pullback, and
  \item for each representable map $f \colon Y \to X$, the pullback functor $\subst{f} \colon \Slice{\rmc{R}}{X} \to \Slice{\rmc{R}}{Y}$ has a right adjoint $f_* \colon \Slice{\rmc{R}}{Y} \to \Slice{\rmc{R}}{X}$ (called \emph{pushforward}).
  \end{enumerate}
  We use the arrow style $\fib$ to indicate representable maps.
  A \emph{representable map functor} or \emph{RMC functor} $F \colon \rmc{R} \to \rmc{S}$ between representable map categories is a functor that preserves finite limits, representable maps, and pushforwards along representable maps.
\end{definition}

\begin{example}[{Uemura \cite[Example 3.2.2]{uemura:21}}]
  \label{presheaf-rmc}
  Let $\cat{C}$ be a small category.
  The category of presheaves $\PSh{\cat{C}}$ becomes an RMC when equipped with the class of morphisms $f \colon B \to A$ such that for every map $a \colon \yo c \to A$ from a representable presheaf, there is a pullback square
  \[
    \begin{tikzcd}
      \yo d \pullback \ar[dashed]{d} \ar[dashed]{r} & B \ar{d}{f} \\
      \yo c \ar{r}[below]{a} & A
    \end{tikzcd}
  \]
  for some $d \in \cat{C}$.
\end{example}

The collection of representable map categories, representable map functors between them, and natural isomorphisms defines a (2,1)-category $\RMC$.
Each SOGAT $T$ induces a ``\emph{syntactic}'' RMC $\rmcCl{T}$ \cite[\S4.8]{uemura:21} whose objects are environments $\Phi$ over $T$ and whose morphisms are instantiations: an \emph{instantiation} $I \colon \Phi \to \Psi$ where $\Psi = (\sov{A}_1 : \Gamma_1 \to e_1, \ldots, \sov{A}_n : \Gamma_n \to e_n)$ is an assignment $(\sov{A}_1 \defeq \soabs{\vec{\sov{a}_1}}t_1, \ldots, \sov{A}_n \defeq \soabs{\vec{\sov{a}_n}}t_n)$ sending each metavariable $\sov{A}_i$ in the target to an expression $t_i : e_i$ in context $\vec{\sov{a}_i} : \Gamma_i$ over $T[\Phi]$.
An instantiation is representable when it is isomorphic to the projection $\Phi. \Gamma \to \Phi$ for an extension of an environment $\Phi$ by a context $\Gamma$.
For concrete SOGATs, we usually suppress $\rmcCl{-}$ and use the same name for the SOGAT and its induced RMC.

The RMC $\rmcCl{T}$ has a (2, 1)-categorical universal property that characterizes RMC functors $\rmcCl{T} \to \rmc{R}$ up to isomorphism as \emph{interpretations} \cite[Theorem 4.8.18]{uemura:21}.
An interpretation of $T$ in $\rmc{S}$ is a specification of the image of each declaration of $T$ inside $\rmc{S}$.
For example, an RMC functor $F \colon \rmcDTT \to \rmc{S}$ is determined up to isomorphism by an object $F\ccTy \in \rmc{S}$ and a representable map $F\clTm \colon F\ccTm \fib F\ccTy$, which specifies the image of $\clTm \colon (\sov{A} : \ccTy, \sov{a} : \ccTm(\sov{A})) \to (\sov{A} : \ccTy)$.
As a special case, we can speak of interpretations of a SOGAT $T$ in another SOGAT $S$ as interpretations of $T$ in $\rmcCl{S}$.

\subsection{Models}

An interpretation $\rmcCl{T} \to \rmc{S}$ is a model of a SOGAT as a \emph{second-order} theory.
To recover a notion of \emph{first-order} model, corresponding for example to categories with families \cite{dybjer:96} for $\rmcDTT$, Uemura uses presheaf categories with representable maps:

\begin{definition}[{{\cite[\S3.2.4]{uemura:21}}}]
  A \emph{model} $\mod{M} = (\cat{C},M)$ of an RMC $\rmc{R}$ is a category $\cat{C}$ with a terminal object and an RMC functor $M \colon \rmc{R} \to \PSh{\cat{C}}$ to the presheaf RMC of \cref{presheaf-rmc}.
  We write $\Ctx{\mod{M}}$ for $\cat{C}$ and $\mod{M}(X) \defeq MX \in \PSh{\Ctx{\mod{M}}}$ for $X \in \rmc{R}$.

  A \emph{morphism} $\mod{F} = (F,\alpha) \colon \mod{M} \to \mod{N}$ between models is a functor $F \colon \Ctx{\mod{M}} \to \Ctx{\mod{N}}$ and family of natural transformations $\alpha_X \colon \mod{M}(X) \to \subst{F}\mod{N}(X)$, natural in $X \in \rmc{R}$, such that for each representable $f \colon Y \fib X$, the naturality square for $\alpha$ at $f$ satisfies a Beck--Chevalley condition.
  For $c \in \Ctx{\mod{M}}$, we write $\mod{F}(c) \in \Ctx{\mod{N}}$ for $Fc$.
  For $x \colon \yo c \to \mod{M}(X)$ in $\PSh{\cat{C}}$, we write $\mod{F}_X(x) \colon \yo \mod{F}(c) \to \mod{N}(X)$ for the map corresponding by Yoneda to $\alpha_X \circ x \colon \yo c \to \subst{F}\mod{N}(X)$.
\end{definition}

Models of $\rmcDTT$ in this sense correspond directly to natural models as defined by Awodey \cite{awodey:18}, and thereby to categories with families: $\Ctx{\mod{M}}$ interprets the context judgment, and $\mod{M}(\ccTy)$ and $\mod{M}(\ccTm)$ the type and term judgments over a context.
With an appropriate notion of 2-morphism, the collection of models of an RMC $\rmc{R}$ forms a (2,1)-category $\Mod{\rmc{R}}$.

\begin{definition}[{\cite[Definitions 5.1.4 \& 5.1.6]{uemura:21}}]
  \label{democracy}
  The class of \emph{contextual objects} in $\Ctx{\mod{M}}$ for a model $\mod{M} \in \Mod{\rmc{R}}$ is inductively generated as follows:
  \begin{enumerate}
  \item terminal objects $1 \in \Ctx{\mod{M}}$ are contextual;
  \item for each contextual $c \in \Ctx{\mod{M}}$, representable $f \colon Y \fib X$ in $\rmc{R}$, and pullback square
    \[
      \begin{tikzcd}[sep=small]
        \yo{d} \pullback \ar{d} \ar{r} & \mod{M}(Y) \ar[fib]{d}{\mod{M}(f)} \\
        \yo{c} \ar{r} & \mod{M}(X)
      \end{tikzcd}
    \]
    with $d \in \Ctx{\mod{M}}$, the object $d$ is contextual.
  \end{enumerate}
  A model is \emph{democratic} when all of its objects are contextual.
  The \emph{heart} (or \emph{contextual core}) $\heart{\mod{M}} \in \Mod{\rmc{R}}$ of $\mod{M}$ is defined by taking $\Ctx{\heart{\mod{M}}}$ to be the full subcategory of contextual objects in $\Ctx{\mod{M}}$ and $\heart{\mod{M}}(X)$ to be the restriction of $\mod{{M}}(X)$ to a presheaf on $\Ctx{\heart{\mod{M}}}$.
\end{definition}

\subsection{Weak equivalences}

Kapulkin and Lumsdaine \cite[Deﬁnition 3.1]{kapulkin-lumsdaine:18} define \emph{weak equivalences} of contextual categories with identity types.
We translate their definition into Uemura's framework as a property of morphisms in $\Mod{\rmcMLTT}$.
First, we define the environment $\ccTyEqv$ of \emph{1-to-1 correspondences}, pairs of types connected by a type-valued relation that associates each element of one type with a unique element of the other.
This is one way of defining \emph{equivalence} between types \cite[Exercise 4.2]{hott-book}.
Similarly, we have an environment $\ccTmId$ of pairs of identified elements within a type.

\begin{definition}
  \label{correspondence}
  Over $\sov{A} : \ccTy$, define $\Phi_\ccIsContr(\sov{A}) \defeq (\sov{a}_0{:} \sov{A}, \sov{p} : (\sov{a}_1{:}\sov{A}) \to \sov{a}_0 \asymp^{\sov{A}} \sov{a}_1)$.
  Write $\ccTyEqv \in \rmcMLTT$ for
  \[
  \begin{array}{l}
    (\sov{A} : \ccTy, \sov{A}' : \ccTy, \overline{\sov{A}} : (\sov{a} : \sov{A}, \sov{a}' : \sov{A}') \to \ccTy, \\
    \phantom{(}{\_} : (\sov{a} : \sov{A}) \to \Phi_{\ccIsContr}(\ccSigma \sov{a}'{:}\sov{A}'.\ \overline{\sov{A}}(\sov{a},\sov{a}')),\ {\_} : (\sov{a}' : \sov{A}') \to \Phi_{\ccIsContr}(\ccSigma \sov{a}{:}\sov{A}.\ \overline{\sov{A}}(\sov{a},\sov{a}')))
    \end{array}
  \]
  and $\spanl, \spanr \colon \ccTyEqv \to \ccTy$ for the maps projecting $\sov{A}$ and $\sov{A}'$ respectively.
\end{definition}

\begin{definition}
  Set $\ccTmId \defeq (\sov{A} : \ccTy, \sov{a} : \sov{A}, \sov{a}' : \sov{A}, \overline{\sov{a}} : \sov{a} \asymp^{\sov{A}} \sov{a}')$ and write $\spanl, \spanr \colon \ccTmId \to \ccTm$ for the maps projecting $(\sov{A},\sov{a})$ and $(\sov{A},\sov{a}')$.
\end{definition}

\begin{definition}
  \label{weak-equivalence}
  A morphism $\mod{F} \colon \mod{M} \to \mod{N}$ in $\Mod{\rmcMLTT}$ is a \emph{weak equivalence} if the following hold for all $\Gamma \in \Ctx{\mod{M}}$:
  \begin{enumerate}[(a)]
  \item weak type lifting: for every $B \colon \yo \mod{F}(\Gamma) \to \mod{N}(\ccTy)$, there exist $A \colon \yo \Gamma \to \mod{M}(\ccTy)$ and $E \colon \yo \mod{F}(\Gamma) \to \mod{N}(\ccTyEqv)$ fitting in a commutative diagram
    \[
      \begin{tikzcd}[column sep=large]
        & \yo \mod{F}(\Gamma) \ar{dl}[above left]{\mod{F}_\ccTy(A)}  \ar{d}{E} \ar{dr}{B} \\
        \mod{N}(\ccTy) & \ar{l}[below]{\mod{N}(\spanl)} \mod{N}(\ccTyEqv) \ar{r}[below]{\mod{N}(\spanr)} & \mod{N}(\ccTy) \rlap{.}
      \end{tikzcd}
    \]
  \item weak term lifting: for every $A \colon \yo \mod{F}(\Gamma) \to \mod{N}(\ccTy)$ and $b \colon \yo \Gamma \to \mod{M}(\ccTm)$ with $\clTm b = \mod{F}_\ccTy(A)$,
    there exist $a \colon \yo \Gamma \to \mod{M}(\ccTm)$ with $\clTm a = A$ and $p \colon \yo \mod{F}(\Gamma) \to \mod{N}(\ccTmId)$ fitting in a commutative diagram
    \[
      \begin{tikzcd}[column sep=large]
        & \yo \mod{F}(\Gamma) \ar{dl}[above left]{\mod{F}_\ccTm(a)}  \ar{d}{p} \ar{dr}{b} \\
        \mod{N}(\ccTm) & \ar{l}[below]{\mod{N}(\spanl)} \mod{N}(\ccTmId) \ar{r}[below]{\mod{N}(\spanr)} & \mod{N}(\ccTm) \rlap{.}
      \end{tikzcd}
    \]
  \end{enumerate}
\end{definition}

Though we state \Cref{weak-equivalence} for arbitrary models, it is generally only well-behaved for democratic models.
In informal turnstile notation, $\mod{F} \colon \mod{M} \to \mod{N}$ is a weak equivalence when
\begin{enumerate}[(a)]
\item for every type $\mod{F}(\Gamma) \vdash_{\mod{N}} B$ in the target model, there is a type $\Gamma \vdash_{\mod{M}} A$ in the source model whose image by $\mod{F}$ is equivalent to $B$, and 
\item for every term $\mod{F}(\Gamma) \vdash_{\mod{N}} b : \mod{F}_\ccTy(A)$ in the target model, there is a term $\Gamma \vdash_{\mod{M}} a : A$ whose image by $\mod{F}$ is identified with $b$.
\end{enumerate}

We apply the notion of weak equivalence to ``syntactic'' models of $\rmcMLTT$, coming from extensions of the SOGAT of $\rmcMLTT$, in order to speak about conservativity relations between type theories (\cf for example Isaev \cite{isaev:20}, Bocquet \cite{bocquet:23}, Kapulkin and Li \cite{kapulkin-li:25}).

\begin{definition}
  \label{models-from-syntax}
  For an RMC functor $F \colon \rmc{R} \to \rmc{S}$, we write $\Synt{F} \defeq \heart{(\rmc{R},\yo \circ F)} \in \Mod{\rmc{R}}$ for the heart of the model of $\rmc{R}$ given by the RMC functor
  \begin{tikzcd}[cramped,sep=small]
    \rmc{R} \ar{r}{F} & \rmc{S} \ar{r}{\yo} & \PSh{\rmc{S}}
  \end{tikzcd}.
  When $F$ is understood from context, we write $\Synt{\rmc{S}} \in \Mod{\rmc{R}}$.
\end{definition}

A morphism $G \colon (\rmc{S},F) \to (\rmc{S}',F')$ in the coslice (2,1)-category $\Coslice{\rmc{R}}{\RMC}$ induces a morphism $\Synt{G} \colon \Synt{F} \to \Synt{F'}$ of models of $\rmc{R}$.
A special case of the above construction is its application to the identity $\Id \colon \rmc{R} \to \rmc{R}$: the model $\Synt{\rmc{R}} \in \Mod{\rmc{R}}$ is a bi-initial object in $\Mod{\rmc{R}}$, the \emph{initial model} of $\rmc{R}$ \cite[\S5.4.1]{uemura:21}.

\section{Cubical type theories}
\label{sec:cubical}

\subsection{The interval}

Before defining cubical type theory, we first introduce a simple SOGAT specifying the interval alone.
This will let us easily speak about cubical type theories with different interval theories.

\begin{definition}
  The SOGAT $\rmcII$ of an \emph{interval} has one representable sort with two points:
  \[
    \begin{array}{lcl}
      \ccII &:& () \soto \soR \\
    \end{array}
    \qquad\qquad
    \begin{array}{lcl}
      \cce0, \cce1 &:& () \soto \ccII
    \end{array}
  \]
\end{definition}

\begin{definition}
  An \emph{interval theory} is an environment $\Phi \in \rmcII$.
\end{definition}

Per \S\ref{sec:sogats}, a context in $\rmcII$ is simply a list $(\sov{i}_1 : \II, \ldots, \sov{i}_n : \II)$.
An environment $\Phi$ consists of declarations of the form $\sov{r} : \Gamma \to \ccII$ and $\_ : \Gamma \to r_1 \equiv r_2 : \ccII$.
In other words, $\Phi$ specifies a single-sorted algebraic theory extending the theory of two points $\cce0$, $\cce1$.

\begin{example}
  The \emph{cartesian interval theory} $\thCart$ is the trivial environment $1 \defeq () \in \rmcII$.
  The \emph{distributive lattice interval theory} $\thDL$ is the environment beginning with
  \[
    \begin{array}{lcl}
      ({-} \land {-}), ({-} \lor {-}) &:& (\sov{i} : \ccII, \sov{j} : \ccII) \to \ccII \\
      \_ &:& (\sov{i}\;\sov{j}\;\sov{k} : \ccII) \to \sov{i} \land (\sov{j} \lor \sov{k}) \equiv (\sov{i} \land \sov{j}) \lor (\sov{i} \land \sov{k}) : \ccII
    \end{array}
  \]
  and continuing with the other equations of a bounded distributive lattice, as enumerated for example by Buchholtz and Morehouse \cite[Table 1]{buchholtz-morehouse:17}: associativity and commutativity of $\land$ and $\lor$, unit laws $\sov{i} \land \cce1 \equiv \sov{i}$ and $\sov{i} \lor \cce0 \equiv \sov{i}$, and absorption laws $\sov{i} \land (\sov{i} \lor \sov{j}) \equiv \sov{i}$ and $\sov{i} \lor (\sov{i} \land \sov{j}) \equiv \sov{i}$.
\end{example}

\subsection{Cofibrations}

In addition to $\ccTy$, $\ccTm$, and $\ccII$, cubical type theory has a sort of \emph{cofibrations} and, over the sort of cofibrations, a representable sort for \emph{cofibration truth}:
\[
  \begin{array}{lcl}
    \ccCof &:& \soO
  \end{array}
  \qquad\qquad
  \begin{array}{lcl}
    \ccTrue &:& (\sov{P} : \ccCof) \soto \soR
  \end{array}
\]
We write $\rmcCof$ for the SOGAT consisting of these two sorts.
In fact we have $\rmcCof \cong \rmcDTT$, but it will be useful to have distinct notation for this sub-SOGAT of our cubical type theories.

\subsection{Opaque cubical type theory}
\label{sec:opaque}

We define opaque cubical type theory, $\rmcCTT$, as a mutual extension of the SOGATs of Martin-L\"of type theory with $\Sigma$, $\Pi$, and identity types ($\rmcMLTTPi$), an interval ($\rmcII$), and a cofibration classifier ($\rmcCof$).
We roughly follow Uemura's encodings of cubical type theory~\cite[\S4.6.3]{uemura:21}~\cite[Example 5.14]{uemura:23}.
We introduce the declarations of $\rmcCTT$ (beyond those of $\rmcMLTTPi$, $\rmcII$, and $\rmcCof$) in stages over the course of this section (\S\ref{sec:opaque}).

\begin{remark}
  Given that path types serve as equality types in cubical type theories, it may seem strange that we include Martin-L\"of's identity types in $\rmcCTT$, though their coexistence is semantically justified~\cite[\S9.1]{cohen-coquand-huber-mortberg:15}~\cite[\S3.3]{cavallo-harper:19}~\cite[\S2.16]{abcfhl:21}.
  We do so partly in order to reuse Kapulkin and Lumsdaine's tools for comparing type theories \cite{kapulkin-lumsdaine:18}, though we could have redeveloped these with path types.
  The more technical reason is that we want to include (higher) inductive types in $\rmcCTT$.
  The span interpretation (\S\ref{sec:span}) that we use to prove conservativity interprets inductive types as inductive families \cite{dybjer:94}, and we use identity types to define these families.
  This is the one place where identities cannot straightforwardly be replaced with paths.
\end{remark}

\subsubsection{Cofibrations}

In $\rmcCTT$, we add new operators and equations for the sorts of $\rmcCof$.
A cofibration can be thought of as a constraint on interval terms.
Cofibration truth is a \emph{strict proposition} in the sense that any two witnesses to truth of a cofibration are strictly equal:
\[
  \begin{array}{lcl}
    \_ &:& (\sov{P} : \ccCof, \sov{u}\;\sov{v} : \ccTrue(\sov{P})) \soto \sov{u} \equiv \sov{v} : \ccTrue(\sov{P})
  \end{array}
\]

As with $\ccTm$, we will leave the $\ccTrue$ operator implicit.
Cofibrations are closed under finite conjunction ($\ccTop$, $\ccCap$) and disjunction ($\ccBot$, $\ccCup$):
\[
  \begin{array}[t]{lcl}
    \ccTop, \ccBot &:& \ccCof \\
    ({-} \ccCap {-}) &:& (\sov{P}\;\sov{Q} : \ccCof) \soto \ccCof \\
    ({-} \ccCup {-}) &:& (\sov{P}\;\sov{Q} : \ccCof) \soto \ccCof \\
    \_ &:& \top \\
    \_ &:& (\sov{P} : \ccCof, \bot) \soto \sov{P}
  \end{array}
  \qquad
  \begin{array}[t]{lcl}
    \_ &:& (\sov{P}\;\sov{Q} : \ccCof, \sov{P}, \sov{Q}) \soto \sov{P} \ccCap \sov{Q} \\
    \_ &:& (\sov{P}\;\sov{Q} : \ccCof, \sov{P} \ccCap \sov{Q}) \soto \sov{P} \\
    \_ &:& (\sov{P}\;\sov{Q} : \ccCof, \sov{P} \ccCap \sov{Q}) \soto \sov{Q} \\
    \_ &:& (\sov{P}\;\sov{Q} : \ccCof, \sov{P}) \soto \sov{P} \ccCup \sov{Q} \\
    \_ &:& (\sov{P}\;\sov{Q} : \ccCof, \sov{Q}) \soto \sov{P} \ccCup \sov{Q} \\
    \_ &:& (\sov{P}\;\sov{Q}\;\sov{R} : \ccCof, \sov{P} \to \sov{R}, \sov{Q} \to \sov{R}, \sov{P} \ccCup \sov{Q}) \soto \sov{R}
  \end{array}
\]
Eliminators for the nullary and binary disjunction ($\ccAbortTy$, $\ccAbortTm$, $\ccCaseTy$, $\ccCaseTm$) allow us to define types and terms by case analysis.
We abbreviate
\[
  \begin{array}{lcl}
  \Phi_{\ccCup\ccTy} &=& (\sov{P}\;\sov{Q} : \ccCof, \sov{A} : [\sov{P}] \to \ccTy, \sov{B} : [\sov{Q}] \to \ccTy, [\sov{P} \ccCap \sov{Q} \to \sov{A} \equiv \sov{B} : \ccTy]) \\
    \Phi_{\ccCup\ccTm} &=& (\sov{P}\;\sov{Q} : \ccCof, \sov{A} : [\sov{P} \ccCup \sov{Q}] \to \ccTy, \sov{a} : [\sov{P}] \to \sov{A}, \sov{b} : [\sov{Q}] \to \sov{A}, [\sov{P} \ccCap \sov{Q} \to \sov{a} \equiv \sov{b} : \sov{A}])
  \end{array}
\]
and specify
\[
  \def\arraystretch{1.2}
  \begin{array}{lcll}
    \ccAbortTy &:& [\ccBot] \soto \ccTy \\
    \ccAbortTm &:& (\sov{A} : [\ccBot] \to \ccTy, [\ccBot]) \soto \sov{A} \\
    \ccCaseTy &:& (\Phi_{\ccCup\ccTy}, [\sov{P} \ccCup \sov{Q}]) \soto \ccTy \\
    \_ &:& (\Phi_{\ccCup\ccTy}, \sov{P}) \soto \ccCaseTy(\sov{P},\sov{Q},\sov{A},\sov{B}) \equiv \sov{A} : \ccTy \\
    \_ &:& (\Phi_{\ccCup\ccTy}, \sov{Q}) \soto \ccCaseTy(\sov{P},\sov{Q},\sov{A},\sov{B}) \equiv \sov{B} : \ccTy \\
    \ccCaseTm &:& (\Phi_{\ccCup\ccTm}, [\sov{P} \ccCup \sov{Q}]) \soto \sov{A} \\
    \_ &:& (\Phi_{\ccCup\ccTm}, \sov{P}) \soto \ccCaseTm(\sov{P},\sov{Q},\sov{A},\sov{a},\sov{b}) \equiv \sov{a} : \sov{A} \\
    \_ &:& (\Phi_{\ccCup\ccTm}, \sov{Q}) \soto \ccCaseTm(\sov{P},\sov{Q},\sov{A},\sov{a},\sov{b}) \equiv \sov{b} : \sov{A}
  \end{array}
\]
The basic cofibrations are equations on interval terms, which we write with $\ccCofEq$.
The two endpoints $\cce0$ and $\cce1$ are distinct, and we can convert between $\ccCofEq$ and strict equality $\equiv$.
\[
  \def\arraystretch{1.2}
  \begin{array}{lcl}
    {-} \ccCofEq {-} &:& (\sov{i}\;\sov{j} : \ccII) \soto \ccCof \\
    \_ &:& (\cce0 \ccCofEq \cce1) \soto \ccBot
  \end{array}
  \qquad\qquad
  \begin{array}{lcl}
    \_ &:& (\sov{i} : \ccII) \soto \sov{i} \ccCofEq \sov{i} \\
    \_ &:& (\sov{i}\;\sov{j} : \ccII, \sov{i} \ccCofEq \sov{j}) \soto \sov{i} \equiv \sov{j} : \ccII
  \end{array}
\]

\begin{remark}
  We could have included various algebraic laws for cofibrations, such as $\sov{P} \ccCap \sov{Q} \equiv \sov{Q} \ccCap \sov{P}$, or \emph{cofibration extensionality} $(\sov{P}:\ccCof,\sov{Q}:\ccCof,\sov{P} \to \sov{Q},\sov{Q} \to \sov{P}) \to \sov{P} \equiv \sov{Q} : \ccCof$.
  Our proofs go through for such variations without much change.
\end{remark}

\subsubsection{Filling}

Cofibrations are used to specify the \emph{filling operator}.
We first introduce the abbreviation $\Phi_{\ccFill}$ for the environment
\[
  \begin{array}{l}
    (\sov{A} : \ccII \to \ccTy, \sov{P} : \ccCof, \sov{a} : ([\sov{P}], \sov{i} : \ccII) \to \sov{A}(\sov{i}), \sov{j} : \ccII, \sov{a}_0 : \sov{A}(\sov{j}), [\sov{P} \to \sov{a}(\sov{j}) \equiv \sov{a}_0 : \sov{A}(\sov{j})]) \rlap{.}
  \end{array}
\]
This environment specifies a line ($\ccII$-indexed family) of types $\sov{A}$ and a ``partial'' line of terms $\sov{a}$ over it, defined whenever some cofibration $\sov{P}$ is true, together with a fully-defined term $\sov{a}_0$ at some index $\sov{A}(\sov{j})$ that coincides with $\sov{a}(\sov{j})$ when $\sov{P}$ holds.
Given this input, the filling operator outputs a line $(\sov{k} : \ccII) \to \sov{A}(\sov{k})$ that ``extends'' both $\sov{a}$ and $\sov{a}_0$ in the following sense.
\[
  \begin{array}{lcl}
    \ccFill &:& (\Phi_{\ccFill}, \sov{k} : \ccII) \soto \sov{A}(\sov{k}) \\
    \_ &:& (\Phi_{\ccFill}, \sov{k} : \ccII, \sov{P}) \soto \ccFill(\sov{A},\sov{P},\sov{a},\sov{j},\sov{a}_0,\sov{k}) \equiv \sov{a}(\sov{k}) : \sov{A}(\sov{k}) \\
    \_ &:& (\Phi_{\ccFill}) \soto \ccFill(\sov{A},\sov{P},\sov{a},\sov{j},\sov{a}_0,\sov{j}) \equiv \sov{a}_0 : \sov{A}(\sov{j})
  \end{array}
\]

The special case where $\sov{P} = \ccBot$ is called \emph{coercion} by Angiuli et al.~\cite[\S2.7]{abcfhl:21} and converts a term at some index $\sov{a}_0 : \sov{A}(\sov{j})$ to a term at any other index $\sov{A}(\sov{k})$.

\begin{notation}
  Over the environment $(\sov{A} : (\sov{i} : \ccII) \to \ccTy, \sov{j} : \ccII, \sov{a}_0 : \sov{A}(\sov{j}), \sov{k} : \ccII)$, write $\ccCoe(\sov{A},\sov{j},\sov{a}_0,\sov{k}) \defeq \ccFill(\sov{A},\ccBot,\soabs{\sov{i}}\ccAbortTm(\sov{A}(\sov{i})), \sov{j}, \sov{a}_0, \sov{k}) : \sov{A}(\sov{k})$.
\end{notation}

\begin{notation}
  We write $\ccFill^{j \to k}(A, [P_1 \mapsto a_1, \ldots, P_n \mapsto a_n], a_0)$ for
  $\ccFill(A,P,a,j,a_0,k)$ where $P = P_1 \ccCup \cdots \ccCup P_n$ (with some choice of parentheses) and $a$ is defined from $a_1,\ldots,a_n$ by cases using $\ccCaseTm$.
  We write $\ccCoe^{j \to k}(A,a_0)$ for $\ccCoe(A,j,a_0,k)$.
\end{notation}

\begin{remark}
  This definition of $\ccFill$ is a suitable base for strict cubical type theory over arbitrary interval theories.
  In the presence of certain interval structure, it can be reduced to special cases.
  For theories with connections, $\ccFill^{\cce0 \to \cce1}$ and $\ccFill^{\cce1 \to \cce0}$ suffice; see Cavallo, M\"ortberg, and Swan~\cite[Theorem 14 with Lemma 8]{cavallo-mortberg-swan:20}.
  With two connections and a reversal, this can be further reduced to $\ccFill^{\cce0 \to \cce1}$, as in Cohen et al.'s type theory~\cite{cohen-coquand-huber-mortberg:15}; see Angiuli et al.~\cite[\S3.4]{abcfhl:21}.
  We refer to Cavallo, M\"ortberg, and Swan~\cite{cavallo-mortberg-swan:20} for more detailed comparisons.
\end{remark}

\subsubsection{Paths}

A path is an $\ccII$-indexed term taking two fixed values at the endpoints $\cce0,\cce1 : \ccII$.
Path types internalize paths:
\[
  \begin{array}{lcl}
    \ccPath &:& (\sov{A} : (\sov{i} : \ccII) \to \ccTy, \sov{a}_0 : \sov{A}(\cce0), \sov{a_1} : \sov{A}(\cce1)) \soto \ccTy \\
    \ccPathLam &:& ([\sov{A} : (\sov{i} : \ccII) \to \ccTy], \sov{a} : (\sov{i} : \ccII) \to \sov{A}(i)) \soto \ccPath(\sov{A},\sov{a}(\cce0),\sov{a}(\cce1)) \\
    {-} \ccPathApp {-} &:& ([\sov{A} : (\sov{i} : \ccII) \to \ccTy, \sov{a}_0 : \sov{A}(\cce0), \sov{a_1} : \sov{A}(\cce1)], \sov{p} : \ccPath(\sov{A},\sov{a}_0,\sov{a_1}), \sov{i} : \ccII) \soto \sov{A}(\sov{i})
  \end{array}
\]
The equations for path types state that $\ccPathLam(\sov{a}) \ccPathApp \sov{i} \equiv \sov{a}(\sov{i})$ and $\sov{p} \equiv \ccPathLam(\soabs{\sov{i}}\sov{p} \ccPathApp \sov{i})$, as for function types, as well as that $\sov{p} \ccPathApp \cce0 \equiv \sov{a}_0$ and $\sov{p} \ccPathApp \cce1 \equiv \sov{a}_1$ for $\sov{p} : \ccPath(\sov{A},\sov{a}(\cce0),\sov{a}(\cce1))$.
See Uemura \cite[\S4.6.3, Type constructors]{uemura:21} for a fully formal presentation.

\begin{notation}
  We write $\ccPathLam* \sov{i}.a$ as shorthand for $\ccPathLam \soabs{\sov{i}} a$.
  We abbreviate ``non-dependent'' path types $\ccPath(\soabs{\_}A,a_0,a_1)$, where the line of types is constant, as $a_0 \sim^A a_1$ or simply $a_0 \sim a_1$.
\end{notation}

\subsubsection{Glue types}

In cubical type theories, univalence is not an axiom but is instead derived from a type former that can construct $\ccII$-indexed types from equivalences.
Universes closed under this type former can be shown to be univalent.
Following Cohen et al.~\cite{cohen-coquand-huber-mortberg:15}, Angiuli et al.\ implement univalence using so-called glue types \cite[\S2.11]{abcfhl:21}; Angiuli, Favonia, and Harper's $\mathsf{V}$-types are an alternative solution \cite[\S5.6]{angiuli-favonia-harper:18}.

First, we define equivalences \cite[Definition 4.4.1]{hott-book} using path types.
Over $\sov{A} : \ccTy$, define $\ccIsContr(\sov{A}) \defeq \ccSigma \sov{a}_0{:}\sov{A}.\ \ccPi \sov{a}_1{:}\sov{A}.\ \sov{a}_0 \sim^{\sov{A}} \sov{a}_1 : \ccTy$.
Over $([\sov{A}\;\sov{B} : \ccTy], \sov{f} : \sov{A} \to \sov{B})$, define $\ccIsEquiv(\sov{f}) \defeq \Pi \sov{b}:\sov{B}. \ccIsContr(\ccSigma \sov{a}:\sov{A}.\ \sov{f}(\sov{a}) \sim \sov{b}) : \ccTy$.
Finally, over $(\sov{A}\;\sov{B} : \ccTy)$, define the type of \emph{equivalences} $(\sov{A} \simeq \sov{B}) \defeq \Sigma \sov{f}:\sov{A} \to \sov{B}.\ \ccIsEquiv(\sov{f})$.
The $\ccGlue$ type former takes a type $\sov{A}$, a cofibration $\sov{P}$, and a partial type $\sov{T}$ and equivalence $\sov{e} : \sov{T} \simeq \sov{A}$ defined when $\sov{P}$ holds.
Its output is a total type that reduces to $\sov{T}$ when $\sov{P}$ holds.
\[
  \begin{array}{lcl}
    \ccGlue &:& (\sov{A} : \ccTy, \sov{P} : \ccCof, \sov{T} : [\sov{P}] \to \ccTy, \sov{e} : [\sov{P}] \to \sov{T} \simeq \sov{A}) \soto \ccTy \\
    \_ &:& (\sov{A} : \ccTy, \sov{P} : \ccCof, \sov{T} : [\sov{P}] \to \ccTy, \sov{e} : [\sov{P}] \to \sov{T} \simeq \sov{A}, \sov{P}) \soto \ccGlue(\sov{A},\sov{P},\sov{T},\sov{e}) \equiv \sov{T} : \ccTy
  \end{array}
\]
We now abbreviate $\Phi_{\ccGlue} = (\sov{A} : \ccTy, \sov{P} : \ccCof, \sov{T} : [\sov{P}] \to \ccTy, \sov{e} : [\sov{P}] \to \sov{T} \simeq \sov{A})$.
The $\ccGlue$ type has an introduction form $\ccEnglue$ and an elimination form $\ccUnglue$.
Each reduces when $\sov{P}$ holds, and we have computation and uniqueness equations.
\[
  \begin{array}{lcl}
    \ccEnglue &:& ([\Phi_{\ccGlue}], \sov{a} : \sov{A}, \sov{t} : [\sov{P}] \to \sov{T}, [\sov{P} \to \sov{e}.1(\sov{t}) \equiv \sov{a} : \sov{A}]) \soto \ccGlue(\sov{A},\sov{P},\sov{T},\sov{e}) \\
    \_ &:& ([\Phi_{\ccGlue}], \sov{a} : \sov{A}, \sov{t} : [\sov{P}] \to \sov{T}, [\sov{P} \to \sov{e}.1(\sov{t}) \equiv \sov{a} : \sov{A}], \sov{P}) \soto \ccEnglue(\sov{a},\sov{t}) \equiv \sov{t} : \sov{T} \\
    \ccUnglue &:& ([\Phi_{\ccGlue}], \sov{g} : \ccGlue(\sov{A},\sov{P},\sov{T},\sov{e})) \soto \sov{A} \\
    \_ &:& ([\Phi_{\ccGlue}], \sov{g} : \ccGlue(\sov{A},\sov{P},\sov{T},\sov{e}), \sov{P}) \soto \ccUnglue(\sov{g}) \equiv \sov{e}.1(\sov{g}) : \sov{A} \\
    \_ &:& ([\Phi_{\ccGlue}], \sov{a} : \sov{A}, \sov{t} : [\sov{P}] \to \sov{T}, [\sov{P} \to \sov{e}.1(\sov{t}) \equiv \sov{a} : \sov{A}]) \soto \ccUnglue(\ccEnglue(\sov{a},\sov{t})) \equiv \sov{a} : \sov{A} \\
    \_ &:& ([\Phi_{\ccGlue}], \sov{g} : \ccGlue(\sov{A},\sov{P},\sov{T},\sov{e})) \soto \sov{g} \equiv \ccEnglue(\ccUnglue(\sov{g}),\sov{g}) : \ccGlue(\sov{A},\sov{P},\sov{T},\sov{e}) \\
  \end{array}
\]
The eliminator $\ccUnglue$ can be shown to be an equivalence $\ccGlue(\sov{A},\sov{P},\sov{T},\sov{e}) \simeq \sov{A}$ that reduces to $\sov{e}$ when $\sov{P}$ holds.
Univalence is derived using an instance where $\sov{P}$ is $(\sov{i} \ccCofEq \cce0 \ccCup \sov{i} \ccCofEq \cce1)$ for some $\sov{i} : \ccII$; see Cohen et al.~\cite[\S7.2]{cohen-coquand-huber-mortberg:15} or Angiuli et al.~\cite[\S2.12]{abcfhl:21}.

\subsubsection{Universe}

We include one universe: a type $\ccU$ whose elements are regarded as types via a decoding function $\ccEl$.
\[
  \begin{array}{lcl}
    \ccU &:& \ccTy
  \end{array}
  \qquad\qquad
  \begin{array}{lcl}
    \ccEl &:& (\sov{A} : \ccU) \soto \ccTy
  \end{array}
\]
We often leave the coercion $\ccEl$ from $\ccU$ to types implicit.
For a universe to be useful, it should be closed under type formers such as $\ccSigma$ and $\ccPi$; for it to be univalent, it should be closed under $\ccGlue$.
We refer to Uemura \cite[Example 4.6.11]{uemura:21} for an example formulation of these closure conditions, but we omit cases for type formers in the universe in our proofs: handling them always amounts to repeating the construction used for the type formers outside of the universe.

\subsubsection{Higher inductive types}

We include \emph{suspension} types \cite[\S6.5]{hott-book} as a representative example of an HIT.
See \cite{coquand-huber-mortberg:18,cavallo-harper:19} for general descriptions of HITs in cubical type theory.
We specify the formation and introduction forms as follows:
\[
  \begin{array}[t]{lcl}
    \ccSusp &:& (\sov{A} : \ccTy) \soto \ccTy
  \end{array}
  \qquad
  \begin{array}[t]{lcl}
    \ccNorth, \ccSouth &:& [\sov{A} : \ccTy] \soto \ccSusp(\sov{A}) \\
    \ccMerid &:& ([\sov{A} : \ccTy], \sov{a} : \sov{A}) \soto \ccNorth \sim^{\ccSusp(\sov{A})} \ccSouth
  \end{array}
\]
For elimination, we fix the environment
\[
  \begin{array}{lcl}
    \Phi_{\ccSuspElim} &=& ([\sov{A} : \ccTy], \sov{C} : (\sov{t} : \ccSusp(\sov{A})) \to \ccTy, \\
                       && \phantom{(} \sov{n} : \sov{C}(\ccNorth), \sov{s} : \sov{C}(\ccSouth), \sov{m} : (\sov{a} : \sov{A}) \to \ccPath(\soabs{\sov{i}}\sov{C}(\ccMerid(\sov{a}) \ccPathApp \sov{i}), \sov{n}, \sov{s}))
  \end{array}
\]
and specify
\[
  \begin{array}{lcl}
    \ccSuspElim &:& (\Phi_{\ccSuspElim}, \sov{t} : \ccSusp(\sov{A})) \soto \sov{C}(\sov{t}) \\
    \_ &:& (\Phi_{\ccSuspElim}) \soto \ccSuspElim(\sov{C},\sov{n},\sov{s},\sov{m},\ccNorth) \equiv \sov{n} : \sov{C}(\ccNorth) \\
    \_ &:& (\Phi_{\ccSuspElim}) \soto \ccSuspElim(\sov{C},\sov{n},\sov{s},\sov{m},\ccSouth) \equiv \sov{s} : \sov{C}(\ccSouth) \\
    \ccSuspElimMerid &:& (\Phi_{\ccSuspElim}, \sov{a} : \sov{A}) \soto \ccPathLam*\sov{i}.\ccSuspElim(\sov{C},\sov{n},\sov{s},\sov{m},\ccMerid(\sov{a}) \ccPathApp \sov{i}) \sim  \sov{m}
  \end{array}
\]
This is an ``opaque'' suspension type in that $\ccSuspElimMerid$ constructor is a path rather than a strict equality.
This is how HITs are usually formulated in Book HoTT \cite[\S6.2]{hott-book}, strict computation rules being characteristic of cubical type theory.

\subsection{Strict cubical type theory}

\emph{Strict} cubical type theories---\ie, cubical type theories as they are usually defined---are designed to satisfy strict canonicity \cite{huber:18,angiuli-favonia-harper:18}, the property that every closed term of type $\mathbb{N}$ is strictly equal to a numeral.
This requires two adjustments to our opaque cubical type theory, which we sketch here.
A full description of the specific strict theory we model in \S\ref{sec:spaces} can be found in Angiuli et al.~\cite{abcfhl:21}.
We write $\rmcCTTs$ for the extension of the SOGAT $\rmcCTT$ with the symbols and equations indicated below, following Angiuli et al.'s specification.

First, we add equations for each concrete type former for evaluating applications of the filling operator at that type.
For $\Sigma$ types, for example, we have an equation reducing $\ccFill(\soabs{\sov{i}}\ccSigma\sov{a}{:}\sov{A}(\sov{i}).\sov{B}(\sov{i},\sov{a}),\sov{P},\sov{s},\sov{j},\sov{s}_0,\sov{k})$ to a pair of two calls to the filling operator, one over $\sov{A}$ and one over an instance of $\sov{B}$.
For higher inductive types such as $\ccSusp$, some applications of the filling operator are treated as values (\ie, not reduced), and equations are instead introduced for reducing the eliminator at these values \cite[\S2.15]{abcfhl:21}.

Second, we strictify the path $\ccSuspElimMerid$, replacing it with a strict equation or, to express strict cubical type theory as an extension of opaque cubical type theory, introducing the strict equation and equating $\ccSuspElimMerid$ with the reflexive path.

\section{The twist interpretation}
\label{sec:twist}

To prove conservativity of opaque cubical type theories with reversals over the corresponding theories without reversals, we first construct interpretations from the former to the latter.
In \S\S\ref{sec:span}--\ref{sec:conservativity} we show that the existence of these interpretations abstractly implies conservativity.
As sketched in \S\ref{sec:contributions}, we exploit \emph{twist constructions}: the fact that the ``square'' environment $\ccII \times \ccII = (\sov{i}_0 : \ccII, \sov{i}_1 : \ccII) \in \rmcCTT$ is an interval object with a reversal and inherits certain algebraic structure from $\ccII$.
Thus, we call our translation the \emph{twist interpretation}.

\subsection{Extension by a reversal}
\label{sec:extension-by-reversal}

We have an interpretation $\flipII$ of $\rmcII$ in itself by taking $\flipII(\ccII) \defeq \ccII$, $\flipII(\cce0) \defeq \cce1$, and $\flipII(\cce1) \defeq \cce0$.
By the (2, 1)-categorical universal property of $\rmcCl{\rmcII}$ \cite[Theorem 4.8.18]{uemura:21}, this determines (up to isomorphism) an RMC functor $\flipII \colon \rmcII \to \rmcII$ with an isomorphism $\theta \colon \flipII \circ \flipII \cong \Id$ satisfying $\theta \circ \flipII = \flipII \circ \theta$ and $\theta_{\ccII} = \id$.

\begin{definition}
  A \emph{self-dual interval theory} $(\Phi,\phi)$ is an interval theory $\Phi$ equipped with an isomorphism $\phi \colon \flipII(\Phi) \cong \Phi$ such that $\flipII(\phi) \circ \phi = \theta_\Phi : \flipII(\flipII(\Phi)) \cong \Phi$.
\end{definition}

In a self-dual interval theory $(\Phi,\phi)$, the value of $\phi$ at an operator $\sov{r} : \II^n \to \II$ in $\Phi$ is an expression $\phi(\sov{r}) : \II^n \to \II$ over $\Phi$: the dual of $\sov{r}$.

\begin{example}
  \label{example-dual-theories}
  The cartesian theory $\thCart$ is self-dual with the trivial isomorphism $1 \cong 1$.
  The theory $\thDL$ is self-dual with $\phi$ defined by $\phi(- \land -)(\sov{i},\sov{j}) = \sov{i} \lor \sov{j}$ and vice versa.
\end{example}

\begin{definition}
  Given a self-dual interval theory $(\Phi,\phi)$, its \emph{extension by a reversal} $\thRev{\Phi}{\phi} \in \rmcII$ is the extension of $\Phi$ with
  \begin{enumerate}[(a)]
  \item an operator $\lnot : \ccII \to \ccII$,
  \item equations $\lnot \cce0 \equiv \cce1 : \ccII$, $\lnot \cce1 \equiv \cce0 : \ccII$, and $(\sov{i} : \ccII) \to \lnot (\lnot (\sov{i})) \equiv \sov{i} : \ccII$, and
  \item for each $\sov{r} : (\sov{i}_1 : \II, \ldots, \sov{i}_n : \II) \to \II$ in $\Phi$, an equation
    \[
      (\sov{i}_1 : \II, \ldots, \sov{i}_n : \II) \to \lnot(\sov{r}(\sov{i}_1,\ldots,\sov{i}_n)) \equiv \phi(\sov{r})(\lnot(\sov{i}_1),\ldots,\lnot(\sov{i}_n)) : \II \rlap{.}
    \]
  \end{enumerate}
\end{definition}

\begin{example}
  The interval theory $\thRev{\thCart}{\phi}$ for $\phi \colon 1 \cong 1$ consists simply of the operator $\lnot : \II \to \II$ and equations $\lnot \cce0 \equiv \cce1 : \ccII$, $\lnot \cce1 \equiv \cce0 : \ccII$, and $(\sov{i} : \ccII) \to \lnot (\lnot (\sov{i})) \equiv \sov{i} : \ccII$.
  The interval theory $\thRev{\thDL}{\phi}$, for the isomorphism $\phi \colon \thDL \cong \thDL$ from \Cref{example-dual-theories}, is the algebraic theory of a De Morgan algebra bounded by $\cce0$ and $\cce1$.
\end{example}

\begin{definition}[Twist interpretation of the interval]
  \label{interval-twist}
  For a self-dual interval theory $(\Phi,\phi)$, we define a representable map functor $\twistII \colon \rmcII[\thRev{\Phi}{\phi}] \to \rmcII[\Phi]$ by the following interpretation:
  \begin{enumerate}
  \item On sorts, we set $\twistII\II \defeq \II \times \II$.
  \item We interpret $\cce0$ as $(\cce0,\cce1)$ and $\cce1$ as $(\cce1,\cce0)$.
  \item We interpret each $\sov{r} : (\sov{i}_1 : \II, \ldots, \sov{i}_n : \II) \to \II$ in $\Phi$ by
    \[
      \twistII\sov{r}((\sov{i}_{10},\sov{i}_{11}), \ldots, (\sov{i}_{n0},\sov{i}_{n1})) \defeq (\sov{r}(\sov{i}_{10},\ldots,\sov{i}_{n0}), \phi(\sov{r})(\sov{i}_{11},\ldots,\sov{i}_{n1})) \rlap{.}
    \]
  \item We interpret $\lnot$ by $\twistII\lnot((\sov{i}_0,\sov{i}_1)) \defeq (\sov{i}_1,\sov{i}_0)$.
  \end{enumerate}
\end{definition}

\subsection{Interpreting cubical type theory}

Cubical type theory being an extension of the theory of an interval, any environment $\Phi$ over $\rmcII$ can also be regarded as an environment $\iota\Phi$ over $\rmcCTT$, from which we can produce a new SOGAT $\rmcCTT[\iota\Phi]$: cubical type theory with the interval theory $\Phi$.

We now extend $\twistII \colon \rmcII[\thRev{\Phi}{\phi}] \to \rmcII[\Phi]$ for a self-dual interval theory $(\Phi,\phi)$ to an interpretation $\twistCTT \colon \rmcCTT[\iota\thRev{\Phi}{\phi}] \to \rmcCTT[\iota\Phi]$.
The specification of this interpretation occupies the remainder of this section; we summarize in \Cref{twist-theorem}.

\begin{component}[$\twistCTT$, sorts]
  We set $\twistCTT\ccII \defeq \ccII \times \ccII$ and interpret all other sorts by themselves: $\twistCTT\ccTy \defeq \ccTy$, $(\twistCTT\ccTm)(\sov{A}) \defeq \ccTm(\sov{A})$, $\twistCTT\ccCof \defeq \ccCof$, $(\twistCTT\ccTrue)(\sov{P}) \defeq \ccTrue(\sov{P})$.
\end{component}

\begin{notation}
  For infix operators, we use a subscript to denote interpretation, for example writing $\ccCofEq_\twistCTT$ instead of $\twistCTT(- \ccCofEq -)$.
\end{notation}

\begin{component}[$\twistCTT$, interval theory]
  We interpret the operations of $\thRev{\Phi}{\phi}$ in $\rmcCTT[\iota\Phi]$ as in \Cref{interval-twist}.
\end{component}

\begin{component}[$\twistCTT$, cofibration theory]
  We interpret the cofibration-forming operations as follows.
  \[
    \def\arraystretch{1.6}
    \begin{array}{c}
      (\sov{i}_0,\sov{i}_1) \ccCofEq_\twistCTT (\sov{j}_0,\sov{j}_1) \defeq (\sov{i}_0 \ccCofEq \sov{j}_0) \ccCap (\sov{i}_1 \ccCofEq \sov{j}_1) \\
      \twistCTT\ccTop \defeq \ccTop \qquad\qquad
    \sov{P} \ccCap_\twistCTT \sov{Q} \defeq \sov{P} \ccCap \sov{Q} \qquad\qquad
      \twistCTT\ccBot \defeq \ccBot \qquad\qquad
      \sov{P} \ccCup_\twistCTT \sov{Q}  \defeq \sov{P} \ccCup \sov{Q}
    \end{array}
  \]
  These definitions validate the associated axioms for the $\ccTrue$ judgment.
  We interpret the $\ccAbortTy$, $\ccAbortTm$, $\ccCaseTy$, and $\ccCaseTm$ eliminators as themselves.
\end{component}

\begin{component}[$\twistCTT$, filling]
  \label{twist-filling}
  We interpret filling by iterated filling, first in one component of the interval variable and then in the other, defining $\twistCTT\ccFill(\sov{A},\sov{P},\sov{a},(\sov{j}_0,\sov{j}_1),\sov{a}_0,(\sov{k}_0,\sov{k}_1))$ to be
  \[
    \ccFill^{\sov{j}_1 \to \sov{k}_1}(\soabs{\sov{i}_1}\sov{A}(\sov{k}_0,\sov{i}_1),[\sov{P} \mapsto \soabs{\sov{i}_1}\sov{a}(\sov{k}_0,\sov{i}_1)], \ccFill^{\sov{j}_0\to\sov{k}_0}(\soabs{\sov{i}_0}\sov{A}(\sov{i}_0,\sov{j}_1),[\sov{P} \mapsto \soabs{\sov{i}_0}\sov{a}(\sov{i}_0,\sov{j}_1)],\sov{a}_0)) \rlap{.}
  \]
\end{component}

We interpret type formers that do not involve the interval, namely $\Sigma$ and $\Pi$ types, identity types, and $\ccU$ and $\ccEl$, as themselves.
This leaves path types, glue types, and suspensions.
We interpret the path type as a type of squares with fixed values at the coordinates $\twistCTT\cce0 = (\cce0,\cce1)$ and $\twistCTT\cce1 = (\cce1,\cce0)$, encoded as an iterated path type consisting of the two unfixed points at $(\cce0,\cce0)$ and $(\cce1,\cce1)$, four 1-dimensional paths forming a boundary, and a 2-dimensional path relating them.

\begin{component}[$\twistCTT$, path types]
  \label{twist-path}
  We define $\twistCTT\ccPath(\sov{A},\sov{a}_{01},\sov{a}_{10})$ to be the iterated path type
  \[
    \begin{array}{l}
      \ccSigma \sov{a}_{00}{:}\sov{A}(\cce0,\cce0).\ \ccSigma\sov{a}_{11}{:}\sov{A}(\cce1,\cce1).\ \\
      \ccSigma\sov{p}_{\bullet0}{:}\ccPath(\soabs{\sov{i}_0}\sov{A}(\sov{i}_0,\cce0),\sov{a}_{00},\sov{a}_{10}).\ \ccSigma\sov{p}_{\bullet1}{:}\ccPath(\soabs{\sov{i}_0}\sov{A}(\sov{i}_0,\cce1),\sov{a}_{01},\sov{a}_{11}).\ \\
      \ccSigma\sov{p}_{0\bullet}{:}\ccPath(\soabs{\sov{i}_1}\sov{A}(\cce0,\sov{i}_1),\sov{a}_{00},\sov{a}_{01}).\ \ccSigma\sov{p}_{1\bullet}{:}\ccPath(\soabs{\sov{i}_1}\sov{A}(\cce1,\sov{i}_1), \sov{a}_{10}, \sov{a}_{11}).\ \\
      \ccPath(\soabs{\sov{i}_0}\ccPath(\soabs{\sov{i}_1}\sov{A}(\sov{i}_0,\sov{i}_1),\sov{p}_{\bullet0} \ccPathApp \sov{i}_0,\sov{p}_{\bullet1} \ccPathApp \sov{i}_0),\sov{p}_{0\bullet},\sov{p}_{1\bullet})
    \end{array}
    \quad
    \begin{tikzcd}
      a_{00} \ar[-]{d}[left]{p_{0\bullet}} \ar[-]{r}{p_{\bullet0}} &[-1em] a_{10} \ar[-]{d}{p_{1\bullet}} \\
      a_{01} \ar[-]{r}[below]{p_{\bullet1}} & a_{11}
    \end{tikzcd}
  \]
  and set
  $
  \twistCTT\ccPathLam(\sov{a}) \defeq (\_,\_,\_,\_,\_,\ccPathLam*\sov{i}_0.\ccPathLam*\sov{i}_1.\sov{a}(\sov{i}_0,\sov{i}_1))
  $
  and
  $\sov{t} \ccPathApp_\twistCTT (\sov{i}_0,\sov{i}_1) \defeq \sov{t}.6 \ccPathApp \sov{i}_0 \ccPathApp \sov{i}_1$, where the first five components in $\twistCTT\ccPathLam$ are determined by the final one.
  
  We write $\twistCTT\ccPathLam* \sov{i}_0,\sov{i}_1.a$ as shorthand for $\twistCTT\ccPathLam (\soabs{\sov{i}_0,\sov{i}_1}a)$.
\end{component}

\begin{remark}
  This iterated path type could be naturally expressed as an \emph{extension type}.
  Introduced by Riehl and Shulman for simplicial type theory \cite[\S2.2]{riehl-shulman:17} and discussed by Angiuli \cite[\S3.5]{angiuli:19} in the context of cubical type theory, these are types of $n$-cubes with fixed values on some cofibration.
  In a theory with these types, $\twistCTT\ccPath$ could be defined as an extension type over the cofibration in two variables $\sov{i}_0 : \ccII, \sov{i}_1 : \ccII \vdash (\sov{i}_0 \ccCofEq \cce0 \ccCap \sov{i}_1 \ccCofEq \cce1) \ccCup (\sov{i}_0 \ccCofEq \cce1 \ccCap \sov{i}_1 \ccCofEq \cce0)$.
\end{remark}

To interpret glue and suspension types, we need to convert between inhabitants of $\ccPath(\sov{C}, \sov{c}_0, \sov{c}_1)$ and inhabitants of $\twistCTT\ccPath(\soabs{\sov{i}_0,\sov{i}_1}\sov{C}(\sov{i}_0), \sov{c}_0, \sov{c}_1)$.
First, the easy direction:

\begin{notation}
  Over the environment $([\sov{C} : \ccII \to \ccTy, \sov{c}_0 : \sov{C}(\cce0), \sov{c}_1 : \sov{C}(\cce1)], \sov{p} : \ccPath(\sov{C},\sov{c}_0,\sov{c}_1))$, we define
  $\ccThicken(\sov{p}) \defeq \twistCTT\ccPathLam*\sov{i}_0,\sov{i}_1. \sov{p} \ccPathApp \sov{i}_0 : \twistCTT\ccPath(\soabs{\sov{i}_0,\sov{i}_1}\sov{C}(\sov{i}_0), \sov{c}_0, \sov{c}_1)$.
\end{notation}

For the inverse, we extract the ``anti-diagonal'' of a square by inverting it along one axis---a standard construction using the filling operation---and then extracting the diagonal.

\begin{definition}[Path inversion]
  Over the environment $([\sov{C} : \ccTy, \sov{c}_0\;\sov{c}_1 : \sov{C}], \sov{p} : \sov{c}_0 \sim^{\sov{C}} \sov{c}_1)$, we define
  $\ccSym(\sov{p}) \defeq \ccPathLam*\sov{i}. \ccFill^{\cce1 \to \cce0}(\soabs{\sov{\_}}\sov{C},[\sov{i} \ccCofEq \cce0 \mapsto \soabs{\_}\sov{c}_1, \sov{i} \ccCofEq \cce1 \mapsto \soabs{\sov{j}}\sov{p} \ccPathApp \sov{j}],\sov{c}_1) : \sov{c}_1 \sim^{\sov{C}} \sov{c}_0$.
\end{definition}

\begin{definition}
  \label{antidiagonal}
  Over $([\sov{C} : \ccII \to \ccTy, \sov{c}_0 : \sov{C}(\cce0), \sov{c}_1 : \sov{C}(\cce1)], \sov{q} : \twistCTT\ccPath(\soabs{\sov{i}_0,\sov{i}_1}\sov{C}(\sov{i}_0), \sov{c}_0, \sov{c}_1))$, we define
  $\ccAnti(\sov{q}) \defeq \ccPathLam*\sov{i}. \ccSym(\soabs{\sov{j}}\sov{q} \ccPathApp_\twistCTT (\sov{i},\sov{j})) \ccPathApp \sov{i} : \ccPath(\sov{C},\sov{c}_0,\sov{c}_1)$.
\end{definition}

To show that these constitute an equivalence, we use the contractibility of dependent singleton types:

\begin{proposition}
  \label{singleton-contractibility}
  Over the environment $(\sov{C} : \ccII \to \ccTy, \sov{c}_0 : \sov{C}(\cce0))$, we have a term of type $\ccIsContr(\ccSigma \sov{c}_1{:}\sov{C}(\cce1). \ccPath(\sov{C},\sov{c}_0,\sov{c}_1))$.
\end{proposition}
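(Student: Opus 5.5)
We need a center of contraction and a contraction, using only $\ccFill$ over an arbitrary interval theory, which may be cartesian with no connections. Throughout, write $\Sigma_C$ for $\ccSigma \sov{c}_1{:}\sov{C}(\cce1).\ \ccPath(\sov{C},\sov{c}_0,\sov{c}_1)$.

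\emph{Center.} Take
\[
  \bigl(\ccCoe^{\cce0\to\cce1}(\sov{C},\sov{c}_0),\ \ccPathLam*\sov{i}.\ccCoe^{\cce0\to\sov{i}}(\sov{C},\sov{c}_0)\bigr).
\]
The path has the correct endpoints because of the filling equation $\ccFill(\ldots,\sov{j},\sov{a}_0,\sov{j}) \equiv \sov{a}_0$, which gives the value $\sov{c}_0$ at $\cce0$.

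\emph{Contraction.} Given $(\sov{c}_1,\sov{p})$, we must produce a path in $\Sigma_C$ from the center to $(\sov{c}_1,\sov{p})$. Using the $\Sigma$ and $\ccPath$ rules, it suffices to construct a term $h(\sov{k},\sov{i}) : \sov{C}(\sov{i})$ in context $\sov{k},\sov{i}:\ccII$ with the following boundary:
\begin{itemize}
  \item $h(\sov{k},\cce0) \equiv \sov{c}_0$;
  \item $h(\cce0,\sov{i}) \equiv \ccCoe^{\cce0\to\sov{i}}(\sov{C},\sov{c}_0)$;
  \item $h(\cce1,\sov{i}) \equiv \sov{p}\ccPathApp\sov{i}$.
\end{itemize}
We then pair $\ccPathLam*\sov{k}.h(\sov{k},\cce1)$ with $\ccPathLam*\sov{k}.\ccPathLam*\sov{i}.h(\sov{k},\sov{i})$. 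Strictly, this gives a term of a $\ccPath$ type over a line of $\Sigma$-types, which must be repackaged as a path in $\Sigma_C$; this is a routine $\Sigma$/$\ccPath$ correspondence.

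Without connections, we cannot simply write $\sov{p}\ccPathApp(\sov{k}\land\sov{i})$. The plan is instead to use one $\ccFill$ in direction $\cce0\to\sov{i}$ with varying cofibration data. Define
\[
  h(\sov{k},\sov{i}) \defeq \ccFill^{\cce0\to\sov{i}}\bigl(\sov{C},\,[\sov{k}\ccCofEq\cce0 \mapsto \soabs{\sov{i}}\ccCoe^{\cce0\to\sov{i}}(\sov{C},\sov{c}_0),\ \sov{k}\ccCofEq\cce1 \mapsto \soabs{\sov{i}}\sov{p}\ccPathApp\sov{i}],\,\sov{c}_0\bigr).
\]
This is well-formed for three reasons:
\begin{itemize}
  \item The two partial lines agree on $\sov{k}\ccCofEq\cce0 \ccCap \sov{k}\ccCofEq\cce1$, since that cofibration implies $\ccBot$ via the axiom $\cce0\ccCofEq\cce1 \soto \ccBot$, so $\ccCaseTm$ applies.
  \item At the source index $\cce0$, both partial lines equal $\sov{c}_0$: the coercion line by the filling equation, and $\sov{p}\ccPathApp\cce0 \equiv \sov{c}_0$ by the path boundary rule. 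Hence the compatibility side condition with $\sov{a}_0 = \sov{c}_0$ holds.
  \item The boundary equations then follow. The equation $h(\sov{k},\cce0)\equiv\sov{c}_0$ comes from the second filling equation. The equations at $\sov{k}=\cce0$ and $\sov{k}=\cce1$ come from the first filling equation, after converting $\sov{k}\ccCofEq\cce0$ to truth via the axiom $(\sov{i}:\ccII) \soto \sov{i}\ccCofEq\sov{i}$ after substitution.
\end{itemize}

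The main obstacle is keeping the substitutions straight. We need the equation $h(\cce0,\sov{i}) \equiv \ccCoe$ to hold judgmentally after substituting $\sov{k}\defeq\cce0$. This requires the first filling equation for $\ccFill$ applied under the cofibration $\cce0\ccCofEq\cce0 \ccCup \cce0\ccCofEq\cce1$, with the $\ccCaseTm$ computation rule reducing to the first branch. Both steps are available in the rules stated above, so I expect the proof to go through directly.
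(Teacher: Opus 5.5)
Your proposal is correct and is essentially the paper's own proof. The paper likewise takes the coercion path $\ccPathLam*\sov{i}.\ccCoe^{\cce0\to\sov{i}}(\sov{C},\sov{c}_0)$ as center, and contracts with the single fill $\ccFill^{\cce0\to\sov{i}}$ under $\sov{k}\ccCofEq\cce0 \ccCup \sov{k}\ccCofEq\cce1$, whose two branches are the center and the given path. No repackaging step is needed: $\ccPathLam*\sov{k}.\pair{h(\sov{k},\cce1)}{\ccPathLam*\sov{i}.h(\sov{k},\sov{i})}$ already typechecks directly as a path in the (constant) $\Sigma$-type, which is exactly how the paper writes it.
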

\begin{proof}[{{{Proof (cf.\ \cite[\S3.2]{angiuli:19})}}}]
  For the center of contraction, take the pair $\sov{s}_0 \defeq \pair{\_}{\ccPathLam* \sov{i}. \ccCoe^{\cce0\to\sov{i}}(\sov{C},\sov{c}_0)}$ (whose first component is determined by its second).
  Given a singleton $\sov{s}$, we have a path
  $\ccPathLam*\sov{j}. \pair{\_}{\ccPathLam* \sov{i}. \ccFill^{\cce0\to\sov{i}}(\sov{C},[\sov{j} \ccCofEq \cce0 \mapsto \soabs{\sov{k}} \sov{s}_0 \ccPathApp \sov{k}, \sov{j} \ccCofEq \cce1 \mapsto \soabs{\sov{k}}\sov{s} \ccPathApp \sov{k}],\sov{c}_0)}$
  from $\sov{s}_0$ to $\sov{s}$.
\end{proof}

\begin{lemma}
  \label{thicken-is-equiv}
  Over the environment $(\sov{C} : \ccTy, \sov{c}_0 : \sov{C}, \sov{c}_1 : \sov{C})$, the function $\lambda \sov{p}. \ccThicken(\sov{p}) : \ccPath(\sov{C},\sov{c}_0,\sov{c}_1) \to \twistCTT\ccPath(\soabs{\sov{i}_0,\sov{i}_1}\sov{C}(\sov{i}_0),\sov{c}_0,\sov{c}_1)$ is an equivalence.
\end{lemma}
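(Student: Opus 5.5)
The plan is to reorganize the iterated $\Sigma$-type $\twistCTT\ccPath(\soabs{\sov{i}_0,\sov{i}_1}\sov{C},\sov{c}_0,\sov{c}_1)$ of \Cref{twist-path} by a chain of equivalences. The goal is to exhibit $\ccThicken$ as the inclusion of a fiber over the center of a contractible base. Since $\sov{C}$ is constant, unfolding the definition gives
\[
  \begin{array}{l}
  \ccSigma \sov{a}_{00}{:}\sov{C}.\ \ccSigma\sov{a}_{11}{:}\sov{C}.\ \ccSigma \sov{p}_{\bullet0}{:}\sov{a}_{00}\sim\sov{c}_1.\ \ccSigma\sov{p}_{\bullet1}{:}\sov{c}_0\sim\sov{a}_{11}.\ \ccSigma\sov{p}_{0\bullet}{:}\sov{a}_{00}\sim\sov{c}_0.\ \ccSigma\sov{p}_{1\bullet}{:}\sov{c}_1\sim\sov{a}_{11}.\ \mathrm{Sq},
  \end{array}
\]
where $\mathrm{Sq}$ is the square type. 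The first observation is that $\ccThicken(\sov{p})$ has components $\sov{a}_{00}=\sov{c}_0$, $\sov{a}_{11}=\sov{c}_1$, $\sov{p}_{0\bullet}=\ccPathLam*\_.\sov{c}_0$, $\sov{p}_{1\bullet}=\ccPathLam*\_.\sov{c}_1$, $\sov{p}_{\bullet0}=\sov{p}_{\bullet1}=\sov{p}$, and the square $\ccPathLam*\sov{i}_0.\ccPathLam*\_.\sov{p}\ccPathApp\sov{i}_0$.

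\textbf{Step 1.} I first reassociate and reorder the $\Sigma$-components so that $(\sov{a}_{00},\sov{p}_{0\bullet})$ and $(\sov{a}_{11},\sov{p}_{1\bullet})$ come first. This is a strict isomorphism by the $\Sigma$ $\eta$-laws. Both pairs range over singleton types. $\ccSigma\sov{a}_{11}.\,\sov{c}_1\sim\sov{a}_{11}$ is contractible by \Cref{singleton-contractibility} with constant line, and its center $(\sov{c}_1,\ccPathLam*\_.\sov{c}_1)$ coincides strictly with the center given there, since $\ccCoe$ in a constant line at $\ccBot$ need not be strictly constant. I will instead simply take any center and use contractibility. $\ccSigma\sov{a}_{00}.\,\sov{a}_{00}\sim\sov{c}_0$ is contractible by the evident mirror-image argument: fill from $\cce1$ instead of $\cce0$, or compose with $\ccSym$.

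I then use the standard HoTT fact that for a contractible base $B$ with point $b$, the fiber inclusion $F(b)\to\ccSigma B.F$ is an equivalence. This reduces the type to the fiber over $\sov{a}_{00}=\sov{c}_0,\sov{p}_{0\bullet}=\mathrm{refl}$, $\sov{a}_{11}=\sov{c}_1,\sov{p}_{1\bullet}=\mathrm{refl}$:
\[
\ccSigma \sov{p}_{\bullet0}{:}\sov{c}_0\sim\sov{c}_1.\ \ccSigma \sov{p}_{\bullet1}{:}\sov{c}_0\sim\sov{c}_1.\ \mathrm{Sq}',
\]
where $\mathrm{Sq}'$ is a square whose vertical sides are constant.

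\textbf{Step 2.} I transpose the square. By $\eta$ and $\beta$ for path types, $\ccPath(\soabs{\sov{i}_0}\ccPath(\ldots),\ldots)$ with constant vertical sides is strictly isomorphic to $\sov{p}_{\bullet0}\sim^{(\sov{c}_0\sim\sov{c}_1)}\sov{p}_{\bullet1}$, via swapping the order of the two path abstractions. The remaining type $\ccSigma\sov{p}_{\bullet1}.\,\sov{p}_{\bullet0}\sim\sov{p}_{\bullet1}$ is a singleton and is contractible by \Cref{singleton-contractibility} applied to the constant line at $\sov{c}_0\sim\sov{c}_1$. Hence the projection to $\sov{p}_{\bullet0}$ is an equivalence $\twistCTT\ccPath(\ldots)\simeq(\sov{c}_0\sim\sov{c}_1)$.

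\textbf{Step 3.} I conclude by 2-out-of-3 for equivalences, or directly by exhibiting $\ccThicken$ as a section of an equivalence. By Steps 1 and 2, the map $\sov{p}\mapsto(\sov{c}_0,\sov{c}_1,\sov{p},\sov{p},\mathrm{refl},\mathrm{refl},\mathrm{refl}_{\sov{p}})$ is a composite of fiber inclusions at points of contractible bases and strict isomorphisms, hence an equivalence. After transposition, this map is exactly $\ccThicken$.

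The main obstacle I expect is bookkeeping. The fiber-inclusion lemma needs the chosen point to be the one used by $\ccThicken$, not necessarily the center produced by \Cref{singleton-contractibility}. This is harmless, because any point of a contractible type works. However, I must make sure the fiber inclusion is stated for an arbitrary point, or else transport along the contraction path, which only changes the map up to homotopy. I will also check that the transposition is genuinely a strict isomorphism compatible with the fixed boundary $\sov{p}_{\bullet0}\ccPathApp\sov{i}_0$, $\sov{p}_{\bullet1}\ccPathApp\sov{i}_0$; this follows from the boundary equations for $\ccPathApp$.
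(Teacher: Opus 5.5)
Your proof is correct, but it takes a different route from the paper's. The paper never unfolds $\twistCTT\ccPath$. It notes that the proof of \Cref{singleton-contractibility} uses only constructs already interpreted by $\twistCTT$. Applying $\twistCTT$ to that proof, and using $\ccAnti$ to turn the $\twistCTT$-paths inside $\twistCTT\ccIsContr$ into ordinary paths, shows that $\ccSigma \sov{c}_1{:}\sov{C}(\cce1).\,\twistCTT\ccPath(\soabs{\sov{i}_0,\sov{i}_1}\sov{C}(\sov{i}_0),\sov{c}_0,\sov{c}_1)$ is contractible. The map $\ccThicken$ on total spaces is then a map between contractible types, and the fiberwise statement follows from \cite[Theorem 4.7.7]{hott-book}.

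You instead compute with the seven components directly. You contract the free-corner singletons $(\sov{a}_{00},\sov{p}_{0\bullet})$ and $(\sov{a}_{11},\sov{p}_{1\bullet})$. You then transpose the remaining square, whose vertical sides are degenerate, into a path between paths using $\beta$ and $\eta$ for $\ccPath$. Finally you contract the last singleton. The paper's argument is shorter given the machinery and illustrates that $\twistCTT$ transports proofs for free. Yours is self-contained, needs neither $\ccAnti$ nor the mechanical translation, and shows concretely why the extra boundary data of a twisted path is redundant.

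Two small remarks. First, the paper's proof really handles a line $\sov{C} : \ccII \to \ccTy$ (note its $\sov{C}(\cce1)$), which is what the suspension component needs. Your argument carries over verbatim: the corner singletons live in $\sov{C}(\cce0)$ and $\sov{C}(\cce1)$, and the transposed square becomes a homogeneous path in $\ccPath(\sov{C},\sov{c}_0,\sov{c}_1)$. Second, your sentence saying the chosen center ``coincides strictly'' with that of \Cref{singleton-contractibility} contradicts its own justification; in the opaque theory it does not coincide. You already avoid the issue, though, by allowing an arbitrary point of the contractible base.
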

\begin{proof}
  Our proof of \Cref{singleton-contractibility} uses only constructs on which we have already defined $\twistCTT$.
  Thus we can mechanically derive from it a term of type $\twistCTT\ccIsContr(\ccSigma \sov{d}_1{:}\sov{D}(\cce1,\cce0). \twistCTT\ccPath(\sov{D},\sov{d}_0,\sov{d}_1))$ over $(\sov{D} : \ccII \times \ccII \to \ccTy, \sov{d}_0 : \sov{D}(\cce0,\cce1))$.
  Using \Cref{antidiagonal}, we can go from $\twistCTT\ccIsContr$ to $\ccIsContr$.
  Taking $\sov{D}(\sov{i}_0,\sov{i}_1) \defeq \sov{C}(\sov{i}_0)$ and $\sov{d}_0 \defeq \sov{c}_0$ gives $\ccIsContr(\ccSigma \sov{c}_1{:}\sov{C}(\cce1). \twistCTT\ccPath(\soabs{\sov{i}_0,\sov{i}_1}\sov{C}(\sov{i}_0),\sov{c}_0,\sov{c}_1))$.
  Thus $\lambda \sov{s}. (\sov{s}.1,\ccThicken(\sov{s}.2)) : \ccSigma \sov{c}_1{:}\sov{C}(\cce1). \ccPath(\sov{C},\sov{c}_0,\sov{c}_1) \to \ccSigma \sov{c}_1{:}\sov{C}(\cce1). \twistCTT\ccPath(\soabs{\sov{i}_0,\sov{i}_1}\sov{C}(\sov{i}_0),\sov{c}_0,\sov{c}_1)$ is a map between contractible types and therefore an equivalence.
  It follows \cite[Theorem 4.7.7]{hott-book} that it is also a fiberwise equivalence.
\end{proof}

We can use \Cref{thicken-is-equiv} inside the definition of equivalence to construct $\twistCTT\ccGlue$.

\begin{component}[$\twistCTT$, glue]
  Over $(\sov{A} : \ccTy, \sov{P} : \ccCof, \sov{T} : [\sov{P}] \to \ccTy, \sov{e} : [\sov{P}] \to \sov{T} \simeq_\twistCTT \sov{A})$, define $\twistCTT\ccGlue(\sov{A},\sov{P},\sov{T},\sov{e}) \defeq \ccGlue(\sov{A},\sov{P},\sov{T},\widehat{\mathsf{e}})$ where $\widehat{\mathsf{e}}$ is derived from $\sov{e}$ by using \Cref{thicken-is-equiv} to replace each use of $\twistCTT\ccPath$ with $\ccPath$.
  Set $\twistCTT\ccEnglue(\sov{a},\sov{t}) \defeq \ccEnglue(\sov{a},\sov{t})$ and $\twistCTT\ccUnglue(\sov{g}) \defeq \ccUnglue(\sov{g})$.
\end{component}

The interpretation of suspensions is similar, but we use $\ccThicken$ and $\ccAnti$ more directly.

\begin{component}[$\twistCTT$, suspension]
  Define
  \[
    \begin{array}[t]{lcl}
      \twistCTT\ccSusp(\sov{A}) &\defeq& \ccSusp(\sov{A}) \\
      \twistCTT\ccNorth &\defeq& \ccNorth \\
      \twistCTT\ccSouth &\defeq& \ccSouth
    \end{array}
    \qquad
    \begin{array}[t]{lcl}
      \twistCTT\ccMerid(\sov{a}) &\defeq& \ccThicken(\ccMerid(\sov{a})) \\
      \twistCTT\ccSuspElim(\sov{C},\sov{n},\sov{s},\sov{m},\sov{t}) &\defeq& \ccSuspElim(\sov{C},\sov{n},\sov{s},\soabs{\sov{a}}\ccAnti(\sov{m}(\sov{a})),\sov{t})
    \end{array}
  \]
  For $\twistCTT\ccSuspElimMerid(\sov{C},\sov{n},\sov{s},\sov{m},\sov{a})$, we compose
  $\ccCong_\ccThicken(\ccSuspElimMerid(\sov{C},\sov{n},\sov{s},\soabs{\sov{a}}\ccThicken^{-1}(\sov{m}(\sov{a})),\sov{a}))$ with the path $\ccThicken(\ccThicken^{-1}(\sov{q})) \sim \sov{q}$, using that $\ccThicken$ is an equivalence, then thicken the composed path to get a $\twistCTT$-path.
\end{component}

This completes the definition of $\twistCTT$, as summarized in the following theorem.
We record that it preserves the constructs of $\rmcMLTT$ and the cofibration judgments for future use.

\begin{theorem}
  \label{twist-theorem}
  For every self-dual interval theory $(\Phi,\phi)$, there is a representable map functor $\twistCTT \colon \rmcCTT[\iota\thRev{\Phi}{\phi}] \to \rmcCTT[\iota\Phi]$ in the coslice $\Coslice{(\rmcMLTTU + \rmcCof)}{\RMC}$.
\end{theorem}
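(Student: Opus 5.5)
The plan is to invoke the universal property of the syntactic RMC $\rmcCl{\rmcCTT[\iota\thRev{\Phi}{\phi}]}$ \cite[Theorem 4.8.18]{uemura:21}. An RMC functor out of it is determined, up to isomorphism, by an interpretation: a choice of image for each declaration of the SOGAT, such that every declared equation holds strictly in $\rmcCTT[\iota\Phi]$ for the chosen images. The Components above give these images. The proof therefore reduces to two things: checking that each image is well typed in $\rmcCTT[\iota\Phi]$, and checking that each equation of $\rmcCTT[\iota\thRev{\Phi}{\phi}]$ is sent to a derivable strict equation. We proceed in the order the SOGAT is declared, since later declarations may refer to the interpretations of earlier ones.

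\textbf{Sorts and the interval.} Representability is preserved because $\twistCTT\ccII = \ccII \times \ccII$ is a context extension by two representable variables, and the other sorts go to themselves. For the interval theory:
\begin{enumerate}[(a)]
\item The equations of $\Phi$ hold in the first coordinate because they hold in $\Phi$, and in the second coordinate because $\phi$ is an isomorphism $\flipII(\Phi) \cong \Phi$, which transports equations of $\Phi$ to equations between the duals $\phi(\sov{r})$.
\item For the reversal, $\lnot\lnot \equiv \id$ holds since swapping twice is the identity, and $\twistCTT\lnot$ exchanges $(\cce0,\cce1)$ with $(\cce1,\cce0)$.
\item The equations $\lnot \sov{r}(\vec{\sov{i}}) \equiv \phi(\sov{r})(\lnot\vec{\sov{i}})$ reduce, componentwise, to $\phi(\phi(\sov{r})) \equiv \sov{r}$. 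This is exactly the coherence $\flipII(\phi)\circ\phi = \theta_\Phi$ in the definition of a self-dual interval theory.
\end{enumerate}

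\textbf{Cofibrations.} Strict propositionality and the $\ccCap$, $\ccCup$, $\ccTop$, $\ccBot$ axioms and eliminators are preserved trivially, since they are sent to themselves. For the interval equations:
\begin{enumerate}[(a)]
\item $\twistCTT(\cce0 \ccCofEq \cce1) = (\cce0\ccCofEq\cce1)\ccCap(\cce1\ccCofEq\cce0)$, which entails $\ccBot$.
\item Reflexivity holds componentwise.
\item $\ccCofEq$ implies $\equiv$ because both component equalities hold, and hence so does equality of pairs.
\end{enumerate}

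\textbf{Filling.} For Component~\ref{twist-filling}, the boundary equation on $\sov{P}$ follows from the boundary equations of the two inner $\ccFill$s. Here the compatibility side condition of the outer fill, namely that the inner fill agrees with $\sov{a}(\sov{k}_0,\sov{j}_1)$ under $\sov{P}$, is exactly the inner boundary equation. The equation at $\sov{k} = \sov{j}$ follows by applying the endpoint equation twice.

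\textbf{Type formers.} Path types go to a $\Sigma$ of paths. The $\beta$ and $\eta$ laws and the endpoint laws follow from $\Sigma$-$\eta$ and the path laws. The endpoint laws work because the fixed corners $(\cce0,\cce1)$ and $(\cce1,\cce0)$ are definitionally the endpoints of $\sov{p}_{\bullet1}$ and $\sov{p}_{\bullet0}$ in the iterated path type. Glue types are sent to glue types over the transported equivalence $\widehat{\mathsf{e}}$. Their equations hold because glue, unglue and the reductions under $\sov{P}$ are sent to themselves. The one point to check is that $\widehat{\mathsf{e}}.1$ agrees strictly with $\sov{e}.1$: \Cref{thicken-is-equiv} only alters the isEquiv component, so the underlying function is untouched. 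Suspensions only need the strict $\beta$ laws for $\ccNorth$ and $\ccSouth$, which are preserved verbatim. The $\ccSuspElimMerid$ path is constructed, not checked. The universe and its closure conditions repeat these constructions.

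\textbf{Expected obstacle and coslice condition.} I expect the main obstacle to be the bookkeeping of strict equations in the glue and suspension cases. The interpretations there insert $\ccThicken$, $\ccAnti$ and the equivalence of \Cref{thicken-is-equiv}, and one must confirm that no declared strict equation of the source theory mentions these transported data in a way that would demand a strict rather than propositional equality. The opacity of $\rmcCTT$, with no fill-computation rules and only a path for $\ccSuspElimMerid$, is exactly what makes this go through. Finally, the coslice claim holds because $\twistCTT$ was defined as the identity on $\ccTy$, $\ccTm$, $\Sigma$, unit, identity types, $\ccU$, $\ccEl$, $\ccCof$ and $\ccTrue$. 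Hence its composite with the inclusion of $\rmcMLTTU + \rmcCof$ is isomorphic to that inclusion, again by the universal property.
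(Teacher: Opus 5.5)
Your proposal is correct and takes the same route as the paper. The paper proves this theorem only by giving the Components as an interpretation, implicitly invoking the universal property of $\rmcCl{-}$, and it lands in the coslice because $\Sigma$, unit, identity types, $\ccU$, $\ccEl$, $\ccCof$ and $\ccTrue$ are sent to themselves. Your explicit checks of the strict equations are accurate and simply spell out what the paper leaves implicit: that the coherence $\flipII(\phi)\circ\phi=\theta_\Phi$ validates $\lnot\sov{r}(\vec{\sov{i}})\equiv\phi(\sov{r})(\lnot\vec{\sov{i}})$, that the outer fill's side condition is the inner fill's boundary equation, and that $\widehat{\mathsf{e}}.1=\sov{e}.1$ holds strictly for the glue reductions.
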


\section{Spans}
\label{sec:span}

Abstracting from the particular case of $\twistCTT$, we now develop tools---span RMCs and the \emph{span interpretation} between suitable RMC functors $F,G \colon \rmcCTT[\iota\Phi] \to \rmcCTT[\iota\Psi]$---that we use in \S\ref{sec:conservativity} to prove that certain morphisms of models induced by RMC functors are weak equivalences.
This construction at the level of RMCs is inspired by and resembles path object constructions at the level of models \cite[\S5]{kapulkin-lumsdaine:18}, as well as Tabareau, Tanter, and Sozeau's \emph{univalent parametricity} translation for the Calculus of Inductive Constructions \cite{tabareau-tanter-sozeau:21}.

\subsection{The representable map category of spans}

We write $\Span{\cat{C}}$ for the category of \emph{spans in a category $\cat{C}$}, \ie, the category of functors from the diagram category $\{ 0 \leftarrow \R \rightarrow 1 \}$ into $\cat{C}$.
Given $X \in \Span{\cat{C}}$, we write $\spanl \colon X_\R \to X_0$ and $\spanr \colon X_\R \to X_1$ for its two projections.

\begin{proposition}
  If $\rmc{R}$ is an RMC, then $\Span{\rmc{R}}$ is an RMC when equipped with the class of levelwise representable maps.
\end{proposition}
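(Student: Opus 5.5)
We must check three things: $\Span{\rmc{R}}$ has finite limits, levelwise representable maps are closed under pullback, and pullback along a representable map has a right adjoint. Finite limits in a functor category are computed pointwise, so the first two are immediate: a pullback in $\Span{\rmc{R}}$ is a levelwise pullback, and representable maps in $\rmc{R}$ are pullback-stable at each of the three positions $0$, $\R$, $1$.

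The substance is the pushforward. Let $f \colon Y \to X$ be a map of spans with $f_0,f_1,f_\R$ representable in $\rmc{R}$, and let $E \to Y$ be an object of $\Slice{\Span{\rmc{R}}}{Y}$. At the legs $0$ and $1$ we set $(f_*E)_k \defeq (f_k)_*E_k$ for $k = 0,1$, computed in $\Slice{\rmc{R}}{X_k}$. The apex cannot simply be $(f_\R)_*E_\R$, because it also needs maps to these legs over $X_\R$. The naive pointwise choice fails because a map from a span $Z$ into $f_*E$ must restrict at the apex to a map $Z_\R \times_{X_\R} Y_\R \to E_\R$ compatible with the leg data. We therefore define $(f_*E)_\R$ as the pullback over $X_\R$ of
\[
  (f_\R)_* E_\R \longrightarrow (f_\R)_*\bigl(\subst{(\spanl)}E_0 \times_{Y_\R} \subst{(\spanr)}E_1\bigr) \longleftarrow \subst{(\spanl)}(f_0)_*E_0 \times_{X_\R} \subst{(\spanr)}(f_1)_*E_1 .
\]
Here $\subst{(\spanl)}E_0$ denotes the pullback of $E_0 \to Y_0$ along $\spanl \colon Y_\R \to Y_0$, and similarly for $\spanr$. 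The left arrow is $(f_\R)_*$ applied to the span maps $E_\R \to E_0, E_1$. The right arrow is the composite of the Beck--Chevalley comparison $\subst{(\spanl)}(f_0)_* \to (f_\R)_*\subst{(\spanl)}$, taken over the square formed by $\spanl$ on $Y$ and $X$, with its analogue for $\spanr$, using that $(f_\R)_*$ preserves products over $Y_\R$. The two projections give the span structure of $(f_*E)_\R$. This is the standard Reedy/cotensor-style formula for right adjoints to pullback in functor categories indexed by an inverse category such as $\{0 \leftarrow \R \rightarrow 1\}$. It only requires pushforward along each $f_k$ and finite limits, both of which are available.

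It remains to verify the adjunction $\subst{f} \dashv f_*$. A map $Z \to f_*E$ over $X$ consists of maps $Z_k \to (f_k)_*E_k$ for $k=0,1$, together with a map $Z_\R \to (f_*E)_\R$ compatible with the legs. By the universal property of the pullback defining $(f_*E)_\R$, the apex map amounts to a map $Z_\R \to (f_\R)_*E_\R$ whose composite into $(f_\R)_*(\ldots)$ agrees with the image of the leg data. Transposing each component along $\subst{(f_k)} \dashv (f_k)_*$, and using naturality of the Beck--Chevalley comparison, this corresponds exactly to maps $\subst{f_k}Z_k \to E_k$ together with $Z_\R \times_{X_\R} Y_\R \to E_\R$ commuting with the span legs. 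That is precisely a map $\subst{f}Z \to E$ in $\Slice{\Span{\rmc{R}}}{Y}$, since pullback in spans is levelwise.

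The main obstacle is this bookkeeping: checking that the transpose of the compatibility condition really is commutativity of the span squares. The key point is that the square $Y_\R \to Y_k$ over $X_\R \to X_k$ is not a pullback, so the mate $\subst{(\spanl)}(f_0)_* \to (f_\R)_*\subst{(\spanl)}$ is genuinely needed and is not invertible. Its triangle identities are what make the two descriptions of the compatibility condition coincide. The argument is natural in $Z$ and $E$, and it establishes the right adjoint, so levelwise representable maps make $\Span{\rmc{R}}$ an RMC.
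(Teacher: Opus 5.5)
Your proposal is correct and is essentially the paper's proof. The legs are pointwise pushforwards, and your apex (the pullback of $(f_\R)_*E_\R$ along the mate-induced map out of $X_\R \times_{X_0 \times X_1}((f_0)_*E_0 \times (f_1)_*E_1)$) is isomorphic to the paper's apex $\ran{q}\subst{k}Z_\R$. The isomorphism comes from Beck--Chevalley for pullback squares, together with the fact that the transpose of your mate is exactly the paper's counit-induced map $k$. In the paper's notation (representable $p$, object $Z$ over $Y$), the only difference is that the paper pushes forward along the base change $q$ of $p_\R$ rather than along $p_\R$ itself, and that you sketch the adjunction check, which the paper simply asserts.
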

\begin{proof}
  As a functor category, $\rmc{R}$ has finite limits computed pointwise in $\rmc{R}$.
  In particular, the fact that representable maps are closed under pullback in $\rmc{R}$ implies the same of $\Span{\rmc{R}}$.

  It remains to show that the representable maps in $\Span{\rmc{R}}$ are exponentiable.
  Let $f \colon Z \to Y$ and $p \colon Y \to X$ be maps in $\Span{\rmc{R}}$ and suppose $p$ is representable.
  As $p_0$ and $p_1$ are exponentiable, we have dependent products $g_0 \defeq \ran{(p_0)}f_0 \colon \Pi_{p_0}Z_0 \to X_0$ and $g_1 \defeq \ran{(p_1)}f_1 \colon \Pi_{p_1}Z_1 \to X_1$.
  Write $k$ for the composite
  \[
    \begin{tikzcd}[column sep=8em]
      Y_\R \times_{X_0 \times X_1} \left(\Pi_{p_0}Z_0 \times \Pi_{p_1}Z_1\right) \ar{d}[sloped]{\cong} \\
      Y_\R \times_{Y_0 \times Y_1} \left((Y_0 \times_{X_0} \Pi_{p_0}Z_0) \times (Y_1 \times_{X_1} \Pi_{p_1}Z_1)\right) \ar{r}{Y_\R \times_{Y_0 \times Y_1} (\epsilon_{Z_0} \times \epsilon_{Z_1}))} &
      Y_\R \times_{Y_0 \times Y_1} Z_0 \times Z_1
    \end{tikzcd}
  \]
  induced by the counits of the (pullback, pushforward) adjunction, and write $q$ for the (representable) pullback
  \[
    \begin{tikzcd}
      Y_\R \times_{X_0 \times X_1} \left(\Pi_{p_0}Z_0 \times \Pi_{p_1}Z_1\right) \pullback \ar[fib,dashed]{d}[left]{q} \ar{r} & Y_\R \ar[fib]{d}{p_\R} \\
      X_\R \times_{X_0 \times X_1} \left(\Pi_{p_0}Z_0 \times \Pi_{p_1}Z_1\right) \ar{r} & X_\R \rlap{.}
    \end{tikzcd}
  \]
  Writing the components of
  $\ran{q}\subst{k}\pair{f_\R}{\pair{\spanl}{\spanr}} \colon \Pi_q \subst{k}Z_\R \to X_\R \times_{X_0 \times X_1} \left(\Pi_{p_0}Z_0 \times \Pi_{p_1}Z_1\right)$ as $\soabs{g_\R,\spanl,\spanr}$, the morphism of spans
  \[
    \begin{tikzcd}
      \Pi_{p_0} Z_0 \ar{d}[left]{g_0} & \ar{l}[above]{\spanl} \Pi_q \subst{k}Z_\R \ar{d}{g_\R} \ar{r}{\spanr} & \Pi_{p_1} Z_1 \ar{d}{g_1} \\
      X_0 & \ar{l}{\spanl} X_\R \ar{r}[below]{\spanr} & X_1
    \end{tikzcd}
  \]
  is a pushforward of $f$ along $p$.
\end{proof}

By definition, the projections $\pi_0,\pi_1 \colon \Span{\rmc{R}} \to \rmc{R}$ are RMC functors.
Restricting our attention now to $\rmcMLTT$, we define an RMC functor $\spanMLTT$ fitting in the diagram
\[
  \begin{tikzcd}
    & \rmcMLTT \ar[equals]{dl} \ar[dashed]{d}{\spanMLTT} \ar[equals]{dr} \\
    \rmcMLTT & \ar{l}[below]{\pi_0} \Span{\rmcMLTT} \ar{r}[below]{\pi_1} & \rmcMLTT
  \end{tikzcd}
\]
by giving an interpretation.
For $\Phi \in \rmcMLTT$, we write the span $\spanMLTT\Phi$ as
$\Phi \overset{\spanl_\Phi}{\leftarrow} \apexMLTT\Phi\overset{\spanr_\Phi}{\to} \Phi$, \ie, with $\apexMLTT \colon \rmcMLTT \to \rmcMLTT$ denoting the composite of $\spanMLTT$ with the apex projection.

To interpret the type and term judgments, we will use the environment $\ccTyEqv$ of 1-to-1 correspondences from \cref{correspondence}.

\begin{definition}
  Write $\ccTmEqv \defeq ((\sov{A},\sov{A}',\overline{\sov{A}},\cdots) : \ccTyEqv, \sov{a} : \sov{A}, \sov{a}' : \sov{A}', \overline{\sov{a}} : \overline{\sov{A}}(\sov{a},\sov{a}'))\in \rmcMLTT$
  and $\spanl,\spanr \colon \ccTmEqv \to \ccTm$ for the instantiations projecting $(\sov{A},\sov{a})$ and $(\sov{A}',\sov{a}')$ respectively.
\end{definition}

\begin{component}[$\spanMLTT$, sorts]
  For sorts, we define $\spanMLTT\ccTy \defeq \braces{\ccTy \overset{\spanl}\leftarrow \ccTyEqv \overset{\spanr}\to \ccTy}$ and $\spanMLTT\ccTm \defeq \braces{\ccTm \overset{\spanl}\leftarrow \ccTmEqv \overset{\spanr}\to \ccTm}$, with $\spanMLTT\clTm \colon \spanMLTT\ccTy \fib \spanMLTT\ccTm$ the evident projection.
\end{component}

Defining $\spanMLTT$ for a type former $T \colon \Phi \soto \ccTy$ now amounts to giving, over the environment $(\sov{p} : \apexMLTT\Phi)$, a 1-to-1 correspondence $\relMLTT(\sov{p},-,-)$ between $T(\spanl_\Phi(\sov{p}))$ and $T(\spanr_\Phi(\sov{p}))$.
Similarly, interpreting a term former $t \colon (\sov{x} : \Phi) \soto \ccTm(I(\sov{x}))$ amounts to giving over $(\sov{p} : \apexMLTT\Phi)$ an inhabitant of $\relMLTT(\sov{p},t(\spanl_\Phi(\sov{p})),t(\spanr_\Phi(\sov{p})))$.

\begin{component}[$\spanMLTT$, unit type]
  We interpret the unit type former by the 1-to-1 correspondence $\relMLTT\mathsf{Unit}(\_,\_,\_) = \mathsf{Unit}$ and its unique inhabitant by the unique witness.
\end{component}

\begin{component}[$\spanMLTT$, $\Sigma$ types]
  \label{span-sigma}
  In the environment consisting of $\sov{A} : \ccTy$, $\sov{A}' : \ccTy$, a 1-to-1 correspondence $\overline{\sov{A}} : (\sov{A}, \sov{A}') \to \ccTy$, families $\sov{B} : \sov{A} \to \ccTy$, $\sov{B'} : \sov{A}' \to \ccTy$, and a family of 1-to-1 correspondences
  $
    \overline{\sov{B}} : ([\sov{a} : \sov{A}, \sov{a}' : \sov{A}'], \overline{\sov{a}} : \overline{\sov{A}}(\sov{a},\sov{a}'), \sov{b} : \sov{B}(\sov{a}), \sov{b}' : \sov{B}'(\sov{a}')) \to \ccTy
  $,
  we define $\relMLTT\ccSigma$ at $\sov{s} : \ccSigma(A,B)$ and $\sov{s}' : \ccSigma(A',B')$ to be
  $\ccSigma\overline{\sov{a}}:\overline{\sov{A}}(\sov{s}.1,\sov{s}'.1).\ \overline{\sov{B}}(\overline{\sov{a}},\sov{s}.2,\sov{s}'.2)$.
  We interpret pairing and projection by pairing and projection in this $\Sigma$ type.
\end{component}

\begin{component}[$\spanMLTT$, identity types]
  In the environment consisting of $\sov{A} : \ccTy$, $\sov{A}' : \ccTy$, a 1-to-1 correspondence $\overline{\sov{A}} : (\sov{A}, \sov{A}') \to \ccTy$, terms $\sov{a}_0\ \sov{a}_1 : \sov{A}$ and $\sov{a}_0'\ \sov{a}_1' : \sov{A}'$, and $\overline{\sov{a}}_0 : \overline{\sov{A}}(\sov{a}_0,\sov{a}_0')$ and $\overline{\sov{a}}_1 : \overline{\sov{A}}(\sov{a}_1,\sov{a}_1')$,
  we define $\relMLTT\Id$ at $\sov{p} : \sov{a}_0 \asymp \sov{a}_1$ and $\sov{p}' : \sov{a}_0' \asymp \sov{a}_1'$ to be the type of identities between $\overline{\sov{a}}_0 : \overline{\sov{A}}(\sov{a}_0,\sov{a}_0')$ and $\overline{\sov{a}}_1 : \overline{\sov{A}}(\sov{a}_1,\sov{a}_1')$ over $\sov{p}$ and $\sov{p}'$, \ie, the type of identities between the transport of $\overline{\sov{a}}_0$ along these identities and $\overline{\sov{a}}_1$.
\end{component}

This completes the definition of $\spanMLTT \colon \rmcMLTT \to \Span{\rmcMLTT}$.
Given an RMC $i \colon \rmcMLTT \to \rmc{R}$ under $\rmcMLTT$, we can now regard $\Span{\rmc{R}}$ as an RMC under $\rmcMLTT$ by way of the composite $\Span{i} \circ \spanMLTT \colon \rmcMLTT \to \Span{\rmc{R}}$.
In particular, we have $\Synt{\Span{\rmc{R}}} \in \Mod{\rmcMLTT}$ as in \Cref{models-from-syntax}.

\begin{proposition}
  \label{span-legs-weak-equivalences}
  For any $i \colon \rmcMLTT \to \rmc{R}$, the morphisms $\Synt{\pi_0}, \Synt{\pi_1} \colon \Synt{\Span{\rmc{R}}} \to \Synt{\rmc{R}}$ in $\Mod{\rmcMLTT}$ are weak equivalences.
\end{proposition}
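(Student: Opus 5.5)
By the symmetry swapping the two legs of spans (which is an automorphism of $\Span{\rmc{R}}$ commuting with $\spanMLTT$ up to the evident reversal of 1-to-1 correspondences), it suffices to treat $\Synt{\pi_0}$. Concretely, a context $\Gamma$ of $\Synt{\Span{\rmc{R}}}$ is a span $\Gamma_0 \leftarrow \Gamma_\R \to \Gamma_1$ in $\rmc{R}$ built up by iterated extensions along the representable maps $\Span{i}(\spanMLTT\clTm)$. Its image under $\Synt{\pi_0}$ is $\Gamma_0$. A type over $\Gamma$ in the span model is a triple $(A_0, A_1, \overline{A})$, with $A_0$ over $\Gamma_0$, $A_1$ over $\Gamma_1$, and $\overline{A}$ a 1-to-1 correspondence over $\Gamma_\R$ between the pullbacks of $A_0$ and $A_1$. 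Terms are triples $(a_0,a_1,\overline{a})$ in the same way. Since $\pi_0$ just projects, $\Synt{\pi_0}$ sends such data to $A_0$ (resp.\ $a_0$).

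For weak type lifting, given $B$ over $\Gamma_0 = \Synt{\pi_0}(\Gamma)$, we take the preimage $A \defeq (B, B', \overline{B})$. We must supply $B'$ over $\Gamma_1$ together with a correspondence over $\Gamma_\R$, and the equivalence $E$ in the target is then the identity correspondence $a \asymp^B a'$ on $B$, which is 1-to-1 by contractibility of singletons in $\rmcMLTT$. The key step is producing $B'$ over $\Gamma_1$ from $B$ over $\Gamma_0$, i.e.\ transporting along the correspondence of contexts. I would do this by induction on the contextual structure of $\Gamma$, strengthening the statement to: every contextual span $\Gamma$ has both legs $\spanl\colon \Gamma_\R \to \Gamma_0$ and $\spanr\colon\Gamma_\R\to\Gamma_1$ equivalent to ``projections with contractible fibres'', so that $\Gamma_\R \simeq \Gamma_0$ via $\spanl$ up to homotopy, with a section $\sigma\colon \Gamma_0 \to \Gamma_\R$ and, symmetrically, $\tau\colon\Gamma_1\to\Gamma_\R$. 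For an extension $\Gamma.\spanMLTT(A)$, the apex is $\Gamma_\R.A_0.A_1.\overline{A}$, and the 1-to-1 property of $\overline{A}$ exactly says $\Sigma a_1.\overline{A}(a_0,a_1)$ is contractible. This lets us extend the sections, taking the center of contraction in the $A_1,\overline{A}$ components.

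With $\tau\colon\Gamma_1\to\Gamma_\R$ in hand, set $B' \defeq (\spanl\circ\tau)^*B$ over $\Gamma_1$. Then the correspondence over $\Gamma_\R$ between $\spanl^*B$ and $\spanr^*B'$ is given by transport in $B$ along the identity, in the identity type of $\Gamma_0$ (an iterated $\Sigma$/identity type), between $\spanl(\gamma)$ and $\spanl(\tau(\spanr\gamma))$. This identity exists because $\tau\circ\spanr$ is homotopic to the identity of $\Gamma_\R$, again via contractibility of the fibres. We then use that transport along an identity is a 1-to-1 correspondence. Weak term lifting is analogous. Given $b$ in $A_0$, produce $b'$ in $A_1$ and $\overline{b}$ as the center of contraction of $\Sigma a_1.\overline{A}(b,a_1)$, transported along $\tau$. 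Here the issue is that $b'$ lives over $\Gamma_1$, not over $\Gamma_\R$, so we transport with respect to the fibre contractibility. Finally take $p$ to be reflexivity.

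The main obstacle is the context-level step: making precise that the apex projections $\Gamma_\R \to \Gamma_i$ of a contextual span admit homotopy sections in $\rmc{R}$, and doing so internally with only $\Sigma$ and $\Id$ (no $\Pi$). I would first set up the induction over contextual objects (\cref{democracy}) carrying explicit identity data. I would then rely on Kapulkin--Lumsdaine's path-object style lemmas that transport along identities in iterated contexts is well-defined, reducing everything to the single-extension case, where the 1-to-1 hypothesis of $\ccTyEqv$ applies directly.
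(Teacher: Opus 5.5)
Your proposal is correct and is essentially the paper's argument: your homotopy section $\tau$ is the paper's $\unitr$, and your $B' \defeq (\spanl\circ\tau)^*B$ is its $B\spanl\unitr$. In both, the correspondence over $\Gamma_\R$ comes from the homotopy $\spanl\tau\spanr \sim \spanl$, so $A_0 = B$ holds on the nose and $E$ is the identity correspondence.

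You also sketch the induction over context extensions, using contractibility of $\Sigma a_1.\overline{A}(a_0,a_1)$, that produces these homotopy inverses, which the paper only asserts. For $\pi_1$ it is enough to run the mirror-image argument; a leg-swapping automorphism would additionally require checking its compatibility with $\spanMLTT$.
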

\begin{proof}
  We consider $\pi_0 \colon \Span{\rmc{R}} \to \rmc{R}$, the case of $\pi_1$ being symmetric.
  An object of $\Gamma \in \Ctx{\Synt{\Span{\rmc{R}}}}$ is a span $\braces{\Gamma_0 \overset{\spanl}\leftarrow \Gamma_\R \overset{\spanr}\to \Gamma_1}$ obtained by iterated context extension of the terminal span with respect to the representable map $\ccTmEqv \to \ccTyEqv$ as described in \Cref{democracy}.
  It follows that $\Gamma_0$, $\Gamma_1$, and $\Gamma_\R$ are contexts in $\rmc{R}$, \ie, environments of term hypotheses, and that we have instantiations $\unitl \colon \Gamma_0 \to \Gamma_\R$ and $\unitr \colon \Gamma_1 \to \Gamma_\R$ that are homotopy inverses of $\spanl \colon \Gamma_\R \to \Gamma_0$ and $\spanr \colon \Gamma_\R \to \Gamma_1$ at each entry.
  To show weak type lifting for $\pi_0 \colon \Span{\rmc{R}} \to \rmc{R}$, we are given $B \colon \Gamma_0 \to \ccTy$ in $\rmc{R}$ and must construct an $A \colon \Gamma \to \spanMLTT\ccTy$ in $\Span{\rmc{R}}$ for which $A_0$ is equivalent to $B$.
  We have
  \[
    \begin{tikzcd}
      \Gamma_0 \ar{d}[left]{B} & \ar{l}[above]{\spanl} \Gamma_\R \ar{r}{\spanr} & \Gamma_1 \ar{d}{B\spanl\unitr} \\
      \ccTy & \ccTyEqv \ar{l}{\spanl} \ar{r}[below]{\spanr} & \ccTy
    \end{tikzcd}
  \]
  and, since $\spanl\unitr\spanr$ is homotopic to $\spanl$ at each component, we can find a map $\Gamma_r \to \ccTyEqv$ that makes the diagram commute.
  Taking the result as our $A$, we have not only an equivalence from $A_0$ to $B$ but an equality.
  The construction of term lifting is analogous.
\end{proof}

\subsection{Relating interpretations using spans}

For this section, we fix two RMC functors $F,G \colon \rmcCTT[\iota\Phi] \to \rmcCTT[\iota\Psi]$ in the coslice under the combined sub-SOGAT $\rmcMLTTU + \rmcCof$, where $\rmcMLTTU$ is the extension of $\rmcMLTT$ by a universe $\ccU$ with $\ccEl$.
We construct a third RMC functor $\spanCTT{F}{G} \colon \rmcCTT[\iota\Phi] \to \Span{\rmcCTT[\iota\Psi]}$ in the coslice under $\rmcMLTT$ that fits in the diagram
\[
  \begin{tikzcd}
    & \rmcCTT[\iota\Phi] \ar{dl}[above left]{F} \ar[dashed]{d}{\spanCTT{F}{G}} \ar{dr}{G} \\
    \rmcCTT[\iota\Psi] & \ar{l}[below]{\pi_0} \Span{\rmcCTT[\iota\Psi]} \ar{r}[below]{\pi_1} & \rmcCTT[\iota\Psi] \rlap{.}
  \end{tikzcd}
\]
The effect of this functor, which we exploit in \S\ref{sec:conservativity}, is to show that $F$ and $G$ are approximately ``the same''.
Note that we need to know very little about $F$ and $G$ to obtain $\spanCTT{F}{G}$: this reflects that the constructs of $\rmcCTT[\iota\Phi]$ are all characterized up to equivalence by their universal properties, so an interpretation has little choice in where to send them.
It is key here that we are looking at \emph{second-order} models, \ie, RMC functors; we would not have the same result for morphisms of first-order models.

For $\Theta \in \rmcCTT$, we write the span $\spanCTT{F}{G}\Theta$ as
$F\Theta \overset{\spanl_\Theta}{\leftarrow} \apexCTT{F}{G}\Theta \overset{\spanr_\Theta}{\to} G\Theta$.
Because we intend $\spanCTT{F}{G}$ to be a morphism in the coslice under $\rmcMLTT$, the interpretations of the constructs of $\rmcMLTT$ are determined by the definition of $\spanMLTT$ from the previous section.

From this point until the summary statement \Cref{span-theorem}, we omit the annotations on $\spanCTT{F}{G}$ and $\apexCTT{F}{G}$ and simply write $\spanCTT*{F}{G}$ and $\apexCTT*{F}{G}$.

\begin{component}[$\spanCTT*{F}{G}$, sorts]
  \label{span-sorts}
  Set $\spanCTT*{F}{G}\ccTy \defeq \braces{\ccTy \overset{\spanl}\leftarrow \ccTyEqv \overset{\spanr}\to \ccTy}$ and $\spanCTT*{F}{G}\ccTm \defeq \braces{\ccTm \overset{\spanl}\leftarrow \ccTmEqv \overset{\spanr}\to \ccTm}$ as required by the definition of $\spanMLTT$.
  For the remaining sorts:
  \begin{enumerate}
  \item Set $\spanCTT*{F}{G}\ccII \defeq \braces{F\ccII \overset{\spanl}\leftarrow F\ccII \times G\ccII \overset{\spanr}\to G\ccII}$.
  \item Set $\apexCTT*{F}{G}\ccCof \defeq (\sov{P}\;\sov{P}'\;\overline{\sov{P}} : \ccCof, [\overline{\sov{P}} \to \sov{P}, \overline{\sov{P}} \to \sov{P}'])$ with $\spanl_\ccCof,\spanr_\ccCof$ projecting $\sov{P}$ and $\sov{P}'$ respectively.
  \item Set $\apexCTT*{F}{G}\ccTrue \defeq ((\sov{P},\sov{P}',\overline{\sov{P}}) : \apexCTT*{F}{G}\ccCof, \overline{\sov{P}})$ with  $\spanl_\ccTrue,\spanr_\ccTrue$ applying the implications $\overline{\sov{P}} \to \sov{P}$ and $\overline{\sov{P}} \to \sov{P}'$, and define $\apexCTT*{F}{G}\clTrue$ to be the evident projection.
  \end{enumerate}
\end{component}

\begin{component}[$\spanCTT*{F}{G}$, interval theory]
  By definition of $\spanCTT*{F}{G}\ccII$, the interpretation of the interval theory is forced by $F$ and $G$.
  Unfolding, the interpretation $\apexCTT*{F}{G}f$ of each operation $f \colon \ccII^n \to \II$ of the interval theory is $(F\ccII \times G\ccII)^n \cong F\ccII^n \times G\ccII^n \overset{Ff \times Gf}\longrightarrow F\II \times G\ccII$.
\end{component}

\begin{component}[$\spanCTT*{F}{G}$, cofibration theory]
  We interpret the cofibration operations as follows.
  \begin{align*}
    (\sov{i},{\sov{x}}) \ccCofEq_{\apexCTT*{F}{G}} (\sov{j},{\sov{y}}) &\defeq (\sov{i} \ccCofEq_F \sov{j}, {\sov{x}} \ccCofEq_G {\sov{y}}, (\sov{i} \ccCofEq_F \sov{j}) \ccCap ({\sov{x}} \ccCofEq_G {\sov{y}})) \\
    \apexCTT*{F}{G}\ccTop &\defeq (F\ccTop, G\ccTop, \ccTop) \\
    (\sov{P},\sov{P}',\overline{\sov{P}}) \ccCap_{\apexCTT*{F}{G}} (\sov{Q},\sov{Q}',\overline{\sov{Q}}) &\defeq (\sov{P} \ccCap_F \sov{Q}, \sov{P}' \ccCap_G \sov{Q}', \overline{\sov{P}} \ccCap \overline{\sov{Q}}) \\
    \apexCTT*{F}{G}\ccBot &\defeq (F\ccBot, G\ccBot, \ccBot) \\
    (\sov{P},\sov{P}',\overline{\sov{P}}) \ccCup_{\apexCTT*{F}{G}} (\sov{Q},\sov{Q}',\overline{\sov{Q}}) &\defeq (\sov{P} \ccCup_F \sov{Q}, \sov{P}' \ccCup_G \sov{Q}', \overline{\sov{P}} \ccCup \overline{\sov{Q}})
  \end{align*}
  The axioms for cofibrations ensure that these definitions preserve the implicit requirement that for $(\sov{P}, \sov{P}', \overline{\sov{P}}) : \apexCTT*{F}{G}\ccCof$ we have $\overline{\sov{P}} \to \sov{P}$ and $\overline{\sov{P}} \to \sov{P}'$.
  We use this to interpret the $\ccCaseTy$ and $\ccCaseTm$ eliminators.
  For $\ccCaseTy$, for example, we are given $(\sov{P},\sov{P}',\overline{\sov{P}}) : \apexCTT*{F}{G}\ccCof$, $(\sov{Q},\sov{Q}',\overline{\sov{Q}}) : \apexCTT*{F}{G}\ccCof$, compatible $\sov{A} : [\sov{P}] \to \ccTy$ and $\sov{B} : [\sov{Q}] \to \ccTy$, compatible $\sov{A}' : [\sov{P}'] \to \ccTy$ and $\sov{B}' : [\sov{Q}'] \to \ccTy$, and compatible 1-to-1 correspondences $\overline{\sov{A}} : ([\overline{\sov{P}}],\sov{A},\sov{A}') \to \ccTy$ and $\overline{\sov{B}} : ([\overline{\sov{Q}}],\sov{B},\sov{B}') \to \ccTy$, and we need to extend these to a 1-to-1 correspondence between $F\ccCaseTy(\sov{P},\sov{Q},\sov{A},\sov{B})$ and $G\ccCaseTy(\sov{P}',\sov{Q}',\sov{A}',\sov{B}')$ assuming $\overline{\sov{P}} \ccCup \overline{\sov{Q}}$.
  To do so we case on $\overline{\sov{P}} \ccCup \overline{\sov{Q}}$ and use that we either have both $\sov{P}$ and $\sov{P}'$ or both $\sov{Q}$ and $\sov{Q}'$ as a consequence.
\end{component}

\begin{component}[$\spanCTT*{F}{G}$, filling]
  To define $\spanCTT*{F}{G}\ccFill$, we are given inputs
  \[
    \begin{array}[t]{lcl}
      \sov{A} &:& F\ccII \to \ccTy \\
      \sov{A}' &:& G\ccII \to \ccTy \\
      \overline{\sov{A}} &:& (\sov{i} : F\ccII, {\sov{x}} : G\ccII, \sov{a} : \sov{A}(\sov{i}), \sov{a}' : \sov{A}'({\sov{x}})) \to \ccTy \\
      (\sov{P},\sov{P}',\overline{\sov{P}}) &:& \apexCTT*{F}{G}\ccCof \\
      \sov{a} &:& (\sov{i} : F\ccII, \sov{P}) \to \sov{A}(\sov{i}) \\
      \sov{a}' &:& ({\sov{x}} : G\ccII, \sov{P}') \to \sov{A}'({\sov{x}}) \\
      \overline{\sov{a}} &:& (\sov{i} : F\ccII, {\sov{x}} : G\ccII, \overline{\sov{P}}) \to \overline{\sov{A}}(\sov{i},{\sov{x}},\sov{a}(\sov{i}),\sov{a}'({\sov{x}}))
    \end{array}
    \qquad
    \begin{array}[t]{lcl}
      (\sov{j},{\sov{y}}) &:& \apexCTT*{F}{G}\ccII \\
      \sov{a}_0 &:& \sov{A}(\sov{j}) \\
      \sov{a}'_0 &:& \sov{A}'({\sov{y}}) \\
      \overline{\sov{a}}_0 &:& \overline{\sov{A}}(\sov{j},{\sov{y}},\sov{a}_0,\sov{a}'_0) \\
      (\sov{k},{\sov{z}}) &:& \apexCTT*{F}{G}\ccII
    \end{array}
  \]
  satisfying $\sov{P} \to \sov{a}(\sov{j}) \equiv \sov{a}_0 : \sov{A}(\sov{j})$, $\sov{P}' \to \sov{a}'({\sov{y}}) \equiv \sov{a}'_0 : \sov{A}'({\sov{y}})$, and $\overline{\sov{P}} \to \overline{\sov{a}}(\sov{j},{\sov{y}}) \equiv \overline{\sov{a}}_0 : \overline{\sov{A}}(\sov{a}_0,\sov{a}'_0)$.
  Abbreviating $\sov{a}_+(\sov{k}) \defeq F\ccFill^{\sov{j}\to\sov{k}}(\sov{A},[\sov{P} \mapsto \sov{a}],\sov{a}_0)$ and $\sov{a}'_+(\sov{z}) \defeq G\ccFill^{\sov{y}\to\sov{z}}(\sov{A}',[\sov{P}' \mapsto \sov{a}'],\sov{a}'_0)$, we must exhibit a term of type $\overline{\sov{A}}(\sov{k},{\sov{z}},\sov{a}_+(\sov{k}),\sov{a}'_+({\sov{z}}))$.
  We take the iterated filling expression
  \[
    \def\arraystretch{1.2}
    \def\arraycolsep{0pt}
    G\ccFill^{{\sov{y}}\to{\sov{z}}}
    \begin{array}[t]{l}
      (\soabs{{\sov{x}}}\overline{\sov{A}}(\sov{k},{\sov{x}},\sov{a}_+(\sov{k}),\sov{a}'_+({\sov{x}})), [\overline{\sov{P}} \mapsto \soabs{{\sov{x}}}\overline{\sov{a}}(\sov{k},{\sov{x}})], \\
      \phantom{(}F\ccFill^{\sov{j}\to\sov{k}}(\soabs{\sov{i}}\overline{\sov{A}}(\sov{i},{\sov{y}},\sov{a}_+(\sov{i}),\sov{a}'_0), [\overline{\sov{P}} \mapsto \soabs{\sov{i}}\overline{\sov{a}}(\sov{i},{\sov{y}})], \overline{\sov{a}}_0)) \rlap{.}
    \end{array}
  \]
\end{component}

\begin{component}[$\spanCTT*{F}{G}$, $\Pi$ types]
  \label{span-pi}
  In the environment consisting of $\sov{A} : \ccTy$, $\sov{A}' : \ccTy$, a 1-to-1 correspondence $\overline{\sov{A}} : (\sov{A}, \sov{A}') \to \ccTy$, families $\sov{B} : \sov{A} \to \ccTy$, $\sov{B'} : \sov{A}' \to \ccTy$, and 1-to-1 correspondences
  $
    \overline{\sov{B}} : ([\sov{a} : \sov{A}, \sov{a}' : \sov{A}'], \overline{\sov{a}} : \overline{\sov{A}}(\sov{a},\sov{a}'), \sov{b} : \sov{B}(\sov{a}), \sov{b}' : \sov{B}'(\sov{a}')) \to \ccTy
  $,
  we take the relation sending $\sov{f} : F\ccPi(\sov{A},\sov{B})$ and $\sov{f}' : G\ccPi(\sov{A}',\sov{B}')$ to
  $\ccPi\sov{a}:\sov{A}.\ \ccPi\sov{a}':\sov{A}'.\ \ccPi\overline{\sov{a}}:\overline{\sov{A}}(\sov{a},\sov{a}').\ \overline{\sov{B}}(\overline{\sov{a}},\sov{f}(\sov{a}),\sov{f}'(\sov{a}'))$.
\end{component}

For $\ccPath$ types, we exploit the fact that we can convert between non-dependent $\ccPath$, $F\ccPath$, and $G\ccPath$ types, which follows from the fact that both types support coercion.

\begin{lemma}
  \label{decode-path}
  Over $([\sov{C} : \ccTy, \sov{c}_0 : \sov{C}$, $\sov{c}_1 : \sov{C}], \sov{p} : F\ccPath(\soabs{\_}\sov{C}, \sov{c}_0, \sov{c}_1))$, we have a term $\ccDecode{F}(\sov{p}) : \sov{c}_0 \sim^{\sov{C}} \sov{c_1}$.
\end{lemma}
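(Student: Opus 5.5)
We define $\ccDecode{F}(\sov{p})$ by coercion in the source-level structure transported along $F$. Transport the ordinary reflexivity path of $\sov{c}_0$ along the $F$-path $\sov{p}$, using the $F$-interpreted coercion operator over a family of \emph{ordinary} path types indexed by $F\ccII$. This works because $F$ is a morphism in the coslice under $\rmcMLTTU + \rmcCof$. So $F\ccTy = \ccTy$, $F\ccTm = \ccTm$, $F\ccCof = \ccCof$ and $F\ccBot = \ccBot$. Consequently $F\ccFill$, and hence $F\ccCoe$, is an operation in $\rmcCTT[\iota\Psi]$ that accepts an arbitrary line of types $F\ccII \to \ccTy$ of the target theory, including ones built from the target's own $\ccPath$.

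\textbf{Step 1 (the family).} Over $\sov{i} : F\ccII$, set
\[
D(\sov{i}) \defeq \sov{c}_0 \sim^{\sov{C}} (\sov{p} \ccPathApp_F \sov{i}),
\]
a family $F\ccII \to \ccTy$. Since $F$ preserves the endpoint equations of $\ccPath$, we have $\sov{p} \ccPathApp_F F\cce0 \equiv \sov{c}_0$ and $\sov{p} \ccPathApp_F F\cce1 \equiv \sov{c}_1$ strictly. Hence $D(F\cce0) \equiv (\sov{c}_0 \sim^{\sov{C}} \sov{c}_0)$ and $D(F\cce1) \equiv (\sov{c}_0 \sim^{\sov{C}} \sov{c}_1)$.

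\textbf{Step 2 (coerce reflexivity).} Define
\[
\ccDecode{F}(\sov{p}) \defeq F\ccCoe^{F\cce0 \to F\cce1}(\soabs{\sov{i}} D(\sov{i}), \ccPathLam*\_. \sov{c}_0).
\]
Here $F\ccCoe$ is $F\ccFill$ applied to $F\ccBot = \ccBot$ together with $F\ccAbortTm = \ccAbortTm$, as in the definition of $\ccCoe$. The input $\ccPathLam*\_.\sov{c}_0$ has type $D(F\cce0)$ by Step 1. The typing rule of $F\ccFill$, being the image of the rule for $\ccFill$, gives the output type $D(F\cce1) \equiv (\sov{c}_0 \sim^{\sov{C}} \sov{c}_1)$.

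\textbf{Main obstacle.} The only delicate point is that everything typechecks strictly. The strict equalities $\sov{p} \ccPathApp_F F\cce0 \equiv \sov{c}_0$ and its counterpart at $F\cce1$ hold because RMC functors preserve all equations of the SOGAT. The family $D$ legitimately lives over $F\ccII$ because $F\ccII$ is representable and $\ccTy$ is shared between the two sides of $F$. No homotopical reasoning is needed, so the term is obtained directly. The same recipe with $G$ in place of $F$, or in the reverse direction using $F\ccPathLam$ and ordinary coercion, yields the other conversions alluded to before the statement.
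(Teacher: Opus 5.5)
Your proposal is correct and matches the paper's proof: you instantiate $F\ccCoe$ with the family $\soabs{\sov{x}}(\sov{c}_0 \sim^{\sov{C}} \sov{p} \ccPathApp_F \sov{x})$ over $F\ccII$ and coerce $\ccPathLam*\_.\sov{c}_0$ from $F\cce0$ to $F\cce1$. One harmless slip: the coslice condition does not give $F\ccBot = \ccBot$, because the sub-SOGAT $\rmcCof$ contains only the sorts $\ccCof$ and $\ccTrue$; you never need it, though, since $F\ccCoe$ is simply the image under $F$ of the derived operation $\ccCoe$ and has the required typing in any case.
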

\begin{proof}
  We have $F\ccCoe : (\sov{A} : ({\sov{x}} : F\ccII) \to \ccTy, {\sov{y}} : F\ccII, \sov{a}_0 : \sov{A}({\sov{y}}), {\sov{z}} : F\ccII) \soto \sov{A}({\sov{z}})$, and instantiating with the arguments $(\soabs{{\sov{x}}}(\sov{c}_0 \sim^{\sov{C}} \sov{p} \ccPathApp_F {\sov{x}}), F\cce0, (\ccPathLam*\_. \sov{c}_0), F\cce1)$ yields the desired expression.
\end{proof}

\begin{corollary}
  \label{F-singleton-contractibility}
  Over $(\sov{C} : F\ccII \to \ccTy, \sov{c}_0 : \sov{C}(F\cce0))$, $\ccSigma \sov{c}_1{:}\sov{C}(F\cce1). F\ccPath(\sov{C},\sov{c}_0,\sov{c}_1)$ is contractible.
\end{corollary}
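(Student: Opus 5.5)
The plan is to transport the proof of \Cref{singleton-contractibility} along $F$, and then use \Cref{decode-path} to turn the resulting $F\ccIsContr$ into a genuine $\ccIsContr$. This is the same pattern as in the proof of \Cref{thicken-is-equiv}.

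First, the proof of \Cref{singleton-contractibility} uses only constructs of $\rmcCTT[\iota\Phi]$: $\Sigma$, $\Pi$, path types, $\ccFill$, cofibration equations and case analysis. Since $F$ is an RMC functor, it carries that derivation to a term in $\rmcCTT[\iota\Psi]$, over the environment $F(\sov{C} : \ccII \to \ccTy, \sov{c}_0 : \sov{C}(\cce0)) = (\sov{C} : F\ccII \to \ccTy, \sov{c}_0 : \sov{C}(F\cce0))$. The transported term has type
\[
  F\ccIsContr(\ccSigma \sov{c}_1{:}\sov{C}(F\cce1).\ F\ccPath(\sov{C},\sov{c}_0,\sov{c}_1)).
\]
Because $F$ lies in the coslice under $\rmcMLTTU + \rmcCof$, it preserves $\Sigma$ and $\Pi$ strictly. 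Hence $F\ccIsContr(X)$ unfolds to $\ccSigma \sov{x}_0{:}X.\ \ccPi \sov{x}_1{:}X.\ F\ccPath(\soabs{\_}X,\sov{x}_0,\sov{x}_1)$: a center together with a family of $F$-paths to it, where these paths lie in the constant line of types $X$.

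Second, I apply \Cref{decode-path} pointwise to the contraction component. Concretely, I send $(\sov{x}_0,\sov{h})$ to $(\sov{x}_0, \lambda \sov{x}_1.\ \ccDecode{F}(\sov{h}(\sov{x}_1)))$. This yields an element of $\ccIsContr(X)$ with $X = \ccSigma \sov{c}_1{:}\sov{C}(F\cce1).\ F\ccPath(\sov{C},\sov{c}_0,\sov{c}_1)$, which is exactly the claim.

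The only point needing care is checking that the $F$-image of the proof really has the stated type. In particular, the non-dependent path type $\sim$ appearing inside $\ccIsContr$ must map to the non-dependent $F\ccPath$ over a constant line. This holds because $F$ preserves substitution and the weakening $\soabs{\_}X$ along $F\ccII$. Apart from that check the argument is bookkeeping, and I do not expect a genuine obstacle.
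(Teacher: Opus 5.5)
Your route is the paper's: push the proof of \Cref{singleton-contractibility} through $F$ to get an inhabitant of $F\ccIsContr(X)$ for $X \defeq \ccSigma \sov{c}_1{:}\sov{C}(F\cce1).\ F\ccPath(\sov{C},\sov{c}_0,\sov{c}_1)$, then turn the $F$-paths into paths with \Cref{decode-path}. However, one justification is false. Being in the coslice under $\rmcMLTTU + \rmcCof$ does \emph{not} make $F$ preserve $\ccPi$. The sub-SOGAT $\rmcMLTTU$ is $\rmcMLTT$, which has only $\ccSigma$, unit and identity types, plus $\ccU$ and $\ccEl$. $\ccPi$ only enters at the level of $\rmcCTT$; this is why the span interpretation needs its own \Cref{span-pi}, while $\ccSigma$ is simply inherited from $\spanMLTT$. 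Consequently $F\ccIsContr(X)$ is $\ccSigma \sov{x}_0{:}X.\ (F\ccPi)(\sov{x}_1{:}X).\ F\ccPath(\soabs{\_}X,\sov{x}_0,\sov{x}_1)$. Your map $(\sov{x}_0,\sov{h}) \mapsto (\sov{x}_0, \lambda \sov{x}_1.\ \ccDecode{F}(\sov{h}(\sov{x}_1)))$ is therefore not well-typed as written, because $\sov{h}$ inhabits an $F\ccPi$-type, not a $\ccPi$-type.

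The repair is exactly the extra step in the paper's proof. $F\ccPi$, together with the $F$-images of $\lambda$ and application, satisfies the rules of $\ccPi$-types, since $F$ carries the $\beta\eta$ equations along. So you can compare $F\ccPi(A,B)$ with $\ccPi(A,B)$; $\beta$ and $\eta$ even make the comparison an isomorphism. For your purposes one direction suffices: send $\sov{h}$ to the base $\lambda$-abstraction over $\sov{x}_1$ of the $F$-application of $\sov{h}$ to $\sov{x}_1$. Insert this before applying $\ccDecode{F}$ and you get the required element of $\ccIsContr(X)$; with this fix your argument coincides with the paper's. The paper also routes $\ccSigma$ through a comparison $F\ccSigma \simeq \ccSigma$, but there your appeal to the coslice is legitimate.
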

\begin{proof}
  Applying $F$ to singleton contractibility (\Cref{singleton-contractibility}), we get over the environment $(\sov{C} : F\ccII \to \ccTy, \sov{c}_0 : \sov{C}(F\cce0))$ a term of type
  $
    F\ccIsContr(F\ccSigma \sov{c}_1{:}\sov{C}(F\cce1). F\ccPath(\sov{C},\sov{c}_0,\sov{c}_1))\rlap{.}
  $
  The type formers $F\Sigma$ and $F\Pi$ satisfy the rules for $\Sigma$- and $\Pi$-types, so we can define equivalences $F\ccSigma(A,B) \simeq \ccSigma(A,B)$ and $F\ccPi(A,B) \simeq \ccPi(A,B)$. Combined with \Cref{decode-path}, we can therefore derive $\ccIsContr(\ccSigma \sov{c}_1{:}\sov{C}(F\cce1). F\ccPath(\sov{C},\sov{c}_0,\sov{c}_1))$.
\end{proof}

Of course, \Cref{F-singleton-contractibility} also holds when we replace $F$ with $G$.

\begin{component}[$\spanCTT*{F}{G}$, path types]
  \label{span-paths}
  To define $\spanCTT*{F}{G}\ccPath$, we are given $\sov{A} : F\II \to \ccTy$ and $\sov{A}' : G\II \to \ccTy$ with a 1-to-1 correspondence $\overline{\sov{A}} : (\sov{i} : F\ccII, {\sov{x}} : G\ccII, \sov{a} : \sov{A}(\sov{i}), \sov{a}' : \sov{A}'({\sov{x}})) \to \ccTy$, terms $\sov{a}_0 : \sov{A}(F\cce0)$ and $\sov{a}'_0 : \sov{A}'(G\cce0)$ with $\overline{\sov{a}}_{00} : \overline{\sov{A}}(F\cce0,G\cce0,\sov{a}_0,\sov{a}'_0)$, and terms $\sov{a}_1 : \sov{A}(F\cce1)$ and $\sov{a}'_1 : \sov{A}'(G\cce1)$ with $\overline{\sov{a}}_{11} : \overline{\sov{A}}(F\cce1,G\cce1,\sov{a}_1,\sov{a}'_1)$.

  We need to define a 1-to-1 correspondence between $F\ccPath(\sov{A},\sov{a}_0,\sov{a}_1)$ and $G\ccPath(\sov{A}',\sov{a}'_0,\sov{a}'_1)$.
  We take the relation sending $\sov{p}$ and $\sov{p}'$ to the iterated $\Sigma$-type with components
  \begin{equation}
    \label{span-path-sigma}
    \def\arraystretch{1.2}
    \begin{array}{lcl}
      \overline{\sov{a}}_{10} : \overline{\sov{A}}(F\cce1,G\cce0,\sov{a}_1,\sov{a}'_0). \\
      \overline{\sov{a}}_{01} : \overline{\sov{A}}(F\cce0,G\cce1,\sov{a}_0,\sov{a}'_1). \\
      \overline{\sov{a}}_{\bullet0} : F\ccPath(\soabs{\sov{i}}\overline{\sov{A}}(\sov{i},G\cce0,\sov{p} \ccPathApp_F\sov{i}, \sov{a}'_0), \overline{\sov{a}}_{00}, \overline{\sov{a}}_{10}). \\
      \overline{\sov{a}}_{\bullet1} : F\ccPath(\soabs{\sov{i}}\overline{\sov{A}}(\sov{i},G\cce1,\sov{p} \ccPathApp_F\sov{i}, \sov{a}'_1), \overline{\sov{a}}_{01}, \overline{\sov{a}}_{11}). \\
      \overline{\sov{a}}_{0\bullet} : G\ccPath(\soabs{{\sov{x}}}\overline{\sov{A}}(F\cce0,{\sov{x}},\sov{a}_0, \sov{p}' \ccPathApp_G {\sov{x}}), \overline{\sov{a}}_{00}, \overline{\sov{a}}_{01}). \\
      \overline{\sov{a}}_{1\bullet} : G\ccPath(\soabs{{\sov{x}}}\overline{\sov{A}}(F\cce1,{\sov{x}},\sov{a}_1, \sov{p}' \ccPathApp_G {\sov{x}}), \overline{\sov{a}}_{10}, \overline{\sov{a}}_{11}). \\
      \def\arraycolsep{0pt}
      \overline{\sov{a}}_{\bullet\bullet} : F\ccPath(\soabs{\sov{i}}G\ccPath(\soabs{{\sov{x}}}\overline{\sov{A}}(\sov{i},{\sov{x}},\sov{p} \ccPathApp_F\sov{i}, \sov{p}' \ccPathApp_G {\sov{x}}), \overline{\sov{a}}_{\bullet0} \ccPathApp_F\sov{i}, \overline{\sov{a}}_{\bullet1} \ccPathApp_F\sov{i}), \overline{\sov{a}}_{0\bullet}, \overline{\sov{a}}_{1\bullet})\rlap{.}
    \end{array}
  \end{equation}
  An element consists effectively of a family of witnesses $\overline{\sov{a}}_{\bullet\bullet} \ccPathApp_F \sov{i} \ccPathApp_G {\sov{x}} : \overline{\sov{A}}(\sov{i},{\sov{x}},\sov{p} \ccPathApp_F \sov{i}, \sov{p}' \ccPathApp_G {\sov{x}})$
  satisfying $\overline{\sov{a}}_{\bullet\bullet} \ccPathApp_F F\cce0 \ccPathApp_G G\cce0 \equiv \overline{\sov{a}}_{00}$ and $\overline{\sov{a}}_{\bullet\bullet} \ccPathApp_F F\cce1 \ccPathApp_G G\cce1 \equiv \overline{\sov{a}}_{11}$.
  We define $\apexCTT*{F}{G}\ccPathLam$ and $\ccPathApp_{\apexCTT*{F}{G}}$ to be abstraction and application of such families.

  It remains to check that this relation is a 1-to-1 correspondence.
  Fix $\sov{p} : F\ccPath(\sov{A},\sov{a}_0,\sov{a}_1)$ and consider the type of pairs of $\sov{p}' : G\ccPath(\sov{A}',\sov{a}'_0,\sov{a}'_1)$ with the data \eqref{span-path-sigma}.
  Given the preceding data, the types of pairs $(\overline{\sov{a}}_{10},\overline{\sov{a}}_{\bullet 0})$ and $(\overline{\sov{a}}_{1\bullet}, \overline{\sov{a}}_{\bullet\bullet})$ as in \eqref{span-path-sigma} are dependent $F$-singletons, so contractible by \Cref{F-singleton-contractibility}.
  The type of pairs $(\overline{\sov{a}}_{01},\overline{\sov{a}}_{0\bullet})$ is likewise a dependent $G$-singleton and thus contractible.
  After contracting all of these, we are left with $\sov{p}' : G\ccPath(\sov{A}',\sov{a}'_0,\sov{a}'_1)$ and $\overline{\sov{a}}_{0\bullet} : G\ccPath(\soabs{{\sov{x}}}\overline{\sov{A}}(\cce0,{\sov{x}},\sov{a}_0, \sov{p}' \ccPathApp_G {\sov{x}}), \overline{\sov{a}}_{00}, \widetilde{a}_{01})$ where $\widetilde{a}_{01}$ is some expression.
  The type of such pairs is equivalent to $G\ccPath(\soabs{{\sov{x}}}\ccSigma\sov{a}':\sov{A}'({\sov{x}}).\overline{\sov{A}}(\cce0,{\sov{x}},\sov{a}_0,\sov{a}'), \pair{\sov{a}'_0}{\overline{\sov{a}}_{00}}, \pair{\sov{a}'_1}{\widetilde{a}_{01}})$, which is a $G\ccPath$-type over a contractible type and thus contractible.
  A symmetric argument deals with the case where we fix $\sov{p}'$ and allow $\sov{p}$ to vary freely.
\end{component}

\begin{component}[$\spanCTT*{F}{G}$, universes]
  Using the assumption that $F\ccU = G\ccU = \ccU$, we interpret the universe $\ccU$ by the relation sending $\sov{A} : \ccU$ and $\sov{A}' : \ccU$ to the type of $\ccU$-valued 1-to-1 correspondences between $\sov{A}$ and $\sov{A}'$.
  That this relation is itself a 1-to-1 correspondence is a consequence of univalence of $\ccU$ \cite[Theorem 5.8.4(iv)$\Rightarrow$(v)]{hott-book}.
  We interpret $\ccEl$, again using the assumption $F\ccEl(\sov{A}) = G\ccEl(\sov{A}) = \ccEl(\sov{A})$, as extracting the 1-to-1 correspondence.
\end{component}

\begin{component}[$\spanCTT*{F}{G}$, glue]
  \label{span-glue}
  To define $\spanCTT*{F}{G}\ccGlue$, we are given inputs
  \[
    \begin{array}[t]{lcl}
      (\sov{A},\sov{A}',\overline{\sov{A}}) &:& \apexCTT*{F}{G}\ccTy \\
      (\sov{P},\sov{P}',\overline{\sov{P}}) &:& \apexCTT*{F}{G}\ccCof \\
      \sov{T} &:& [\sov{P}] \to \ccTy \\
      \sov{T}' &:& [\sov{P}'] \to \ccTy \\
      \overline{\sov{T}} &:& ([\overline{\sov{P}}], \sov{t} : \sov{T}, \sov{t}' : \sov{T}') \to \ccTy \\
    \end{array}
    \qquad
    \begin{array}[t]{lcl}
      \sov{e} &:& [\sov{P}] \to \sov{T} \simeq_F \sov{A} \\
      \sov{e}' &:& [\sov{P}'] \to \sov{T}' \simeq_G \sov{A}' \\
      \overline{\sov{e}} &:& [\overline{\sov{P}}] \to \mathsf{R}_\simeq(\overline{\sov{A}},\overline{\sov{T}},\sov{e},\sov{e'}) \\
    \end{array}
  \]
  where $\overline{\sov{A}}$ and $\overline{\sov{T}}$ are 1-to-1 correspondences and $\mathsf{R}_{\simeq}(\overline{\sov{A}},\overline{\sov{T}},-,-)$ is the 1-to-1 correspondence between $\sov{T} \simeq_F \sov{A}$ and $\sov{T}' \simeq_G \sov{A}'$ given by the span interpretation of $(- \simeq -)$ at $\overline{\sov{T}}$ and $\overline{\sov{A}}$.
  We need to define a 1-to-1 correspondence between $F\ccGlue(\sov{A},\sov{P},\sov{T},\sov{e})$ and $G\ccGlue(\sov{A}',\sov{P}',\sov{T}',\sov{e}')$.
  We take the relation sending $\sov{g} : F\ccGlue(\sov{A},\sov{P},\sov{T},\sov{e})$ and $\sov{g'} : G\ccGlue(\sov{A}',\sov{P}',\sov{T}',\sov{e}')$ to
  \[
    \ccGlue(\overline{\sov{A}}(F\ccUnglue(\sov{g}),G\ccUnglue(\sov{g}')), \overline{\sov{P}}, \overline{\sov{T}}(\sov{g},\sov{g}'), \widehat{\mathsf{e}})
  \]
  where it remains to define $\widehat{\mathsf{e}} : \overline{\sov{T}}(\sov{g},\sov{g}') \simeq \overline{\sov{A}}(F\ccUnglue(\sov{g}),G\ccUnglue(\sov{g}'))$ under $\overline{\sov{P}}$.

  By the reduction equations for $F\ccUnglue(\sov{g})$ and $G\ccUnglue(\sov{g}')$ under $\sov{P}$ and $\sov{P}'$, the type for $\widehat{\mathsf{e}}$ simplifies to $\overline{\sov{T}}(\sov{g},\sov{g}') \simeq \overline{\sov{A}}(\sov{e}.1(\sov{g}),\sov{e}'.1(\sov{g}'))$.
  Per the interpretations of $\Sigma$ and $\Pi$ (\cref{span-sigma,span-pi}), $\overline{\sov{e}}$ contains a map $(\sov{t} : \sov{T}, \sov{t}' : \sov{T}') \to \overline{\sov{T}}(\sov{t},\sov{t}') \to \overline{\sov{A}}(\sov{e}.1(\sov{t}),\sov{e}'.1(\sov{t}'))$ as its first component.
  We take this map, instantiated at $\sov{g}$ and $\sov{g}'$, as the forward function of $\widehat{\mathsf{e}}$.
  To see that it is an equivalence, it suffices \cite[Theorem 11.1.6]{rijke:25} to check that the induced map on total spaces $(\Sigma\sov{t}:\sov{T}.\overline{\sov{T}}(\sov{t},\sov{g}')) \to (\Sigma\sov{a}:\sov{A}.\overline{\sov{A}}(\sov{a},\sov{e}'.1(\sov{g}')))$ is an equivalence, as the base map $\sov{e}.1 \co \sov{T} \to \sov{A}$ is an $F$-equivalence and thus an equivalence.
  This is the case because $\overline{\sov{A}}$ and $\overline{\sov{T}}$ are 1-to-1 correspondences and thus both sides are contractible.

  With this interpretation of $\ccGlue$, we can give the interpretations of $\ccEnglue$ and $\ccUnglue$ as $\ccEnglue$ and $\ccUnglue$.
\end{component}

To interpret suspension, we make essential use of identity types.

\begin{definition}
  Over the environment $([\sov{A} : \ccTy, \sov{A}' : \ccTy], \sov{f} : \sov{A} \to \sov{A}')$, define the type-valued relation
  $\ccGraph(\sov{f}) \defeq \soabs{\sov{a},\sov{a}'}(\sov{f}(\sov{a}) \asymp \sov{a}') : (\sov{a} : \sov{A}, \sov{a}' : \sov{A}') \to \ccTy$.
\end{definition}

For a map $\sov{f}$ that is an equivalence, $\ccGraph(\sov{f})$ is a 1-to-1 correspondence.
Conversely, a 1-to-1 correspondence $\overline{\sov{A}}$ between $\sov{A}$ and $\sov{A}'$ contains a map $\ccFwd_{\overline{\sov{A}}} : \sov{A} \to \sov{A}'$ that is an equivalence.

Over the environment $([\sov{A} : \ccTy, \sov{A}' : \ccTy], \sov{f} : \sov{A} \to \sov{A}')$, define $\ccSuspMap(\sov{f}) : F\ccSusp(\sov{A}) \to G\ccSusp(\sov{A}')$ by $F\ccSusp$-elimination so that $\ccSuspMap(\sov{f})(F\ccNorth) = G\ccNorth$, $\ccSuspMap(\sov{f})(F\ccSouth) = G\ccSouth$, and $\ccCong_{\ccSuspMap(\sov{f})}(F\ccMerid(\sov{a})) \sim G\ccMerid(\sov{a}')$.
If $f : A \to A'$ is an equivalence, then $\ccSuspMap(f)$ is an equivalence, by the elimination principles for $F\ccSusp(A)$ and $G\ccSusp(A')$.

\begin{component}[$\spanCTT*{F}{G}$, suspension]
  Over an environment with $\sov{A}\;\sov{A}' : \ccTy$ and a 1-to-1 correspondence $\overline{\sov{A}} \colon (\sov{A},\sov{A}') \to \ccTy$, we must construct a 1-to-1 correspondence between $F\ccSusp(\sov{A})$ and $G\ccSusp(\sov{A}')$; we take $\ccGraph(\ccSuspMap(\ccFwd_{\overline{\sov{A}}}))$.
  We interpret $\ccNorth$ and $\ccSouth$ by the reflexive identities $\ccSuspMap(\ccFwd_{\overline{\sov{A}}})(\ccNorth) \asymp \ccNorth$ and $\ccSuspMap(\ccFwd_{\overline{\sov{A}}})(\ccSouth) \asymp \ccSouth$.

  To interpret $\ccMerid$ applied to $\sov{a} : \sov{A}$, $\sov{a}' : \sov{A}'$, and $\overline{\sov{a}} : \overline{\sov{A}}(\sov{a},\sov{a}')$,  we first convert the path $\ccCong_{\ccSuspMap(\ccFwd_{\overline{\sov{A}}})}(F\ccMerid(\sov{a})) \sim G\ccMerid(\ccFwd_{\overline{\sov{A}}}(\sov{a}))$ to an identity, then rewrite along the identity $\ccFwd_{\overline{\sov{A}}}(\sov{a}) \asymp \sov{a}'$ obtained from $\overline{\sov{a}}$ to get an identity $\ccCong_{\ccSuspMap(\ccFwd_{\overline{\sov{A}}})}(F\ccMerid(\sov{a})) \asymp G\ccMerid(\sov{a}')$.
  Using $\apexCTT*{F}{G}\ccPathLam$, we convert this to an $\apexCTT*{F}{G}$-path in the necessary identity type.

  Now we interpret the eliminator.
  Over the environment
  \[
    (\sov{A} : \ccTy, \sov{C} : \ccSusp(\sov{A}) \to \ccTy, \sov{n} : \sov{C}(\ccNorth), \sov{s} : \sov{C}(\ccSouth), \sov{m} : (\sov{a} : \sov{A}) \to \ccPath(\soabs{\sov{i}}\sov{C}(\ccMerid(\sov{a}) \ccPathApp \sov{i}), \sov{n}, \sov{s}))
  \]
  we have a type
  \[
    \begin{array}{lcl}
      \mathsf{D} &\defeq& \ccSigma \sov{f}:(\Pi\sov{t}{:}\ccSusp(\sov{A}). \sov{C}(\sov{t})).\ \ccSigma \sov{p}_{\sov{n}} : \sov{f}(\ccNorth) \sim \sov{n}.\ \ccSigma \sov{p}_{\sov{s}} : \sov{f}(\ccSouth) \sim \sov{s}. \\
                 && \ccPath(\def\arraycolsep{0pt}\begin{array}[t]{l}\soabs{\sov{j}}\ccPath(\soabs{\sov{i}}\sov{C}(\ccMerid(\sov{a}) \ccPathApp \sov{i}),\sov{p}_{\sov{n}} \ccPathApp \sov{j},\sov{p}_{\sov{s}} \ccPathApp \sov{j}), (\ccPathLam*\sov{i}.\sov{f}(\ccMerid(\sov{a}) \ccPathApp \sov{i})),\sov{m})
                   \end{array}
    \end{array}
  \]
  of dependent functions into $\sov{C}$ defined on the constructors $\ccNorth$, $\ccSouth$, and $\ccMerid$ by $\sov{n}$, $\sov{s}$, and $\sov{m}$ respectively, up to homotopy.
  By virtue of the eliminator, $\mathsf{D}$ is contractible, as are $F\mathsf{D}$ and $G\mathsf{D}$.
  Thus every pair of elements from $F\mathsf{D}$ and $G\mathsf{D}$ is related in $\spanCTT*{F}{G}\mathsf{D}$, in particular the pair obtained from $F\ccSuspElim$ and $G\ccSuspElim$.
  This gives us an almost-interpretation of $\ccSuspElim$: we have an eliminator that may satisfy the point constructor computation rules only up to paths.

  We then correct our almost-interpretation on the point constructors.
  To interpret the eliminator, we are given type families $\sov{C}$ and $\sov{C}'$ and, over
  $(\sov{t} : F\ccSusp(\sov{A}), \sov{t}' : G\ccSusp(\sov{A}'), \overline{\sov{t}} : \ccSuspMap(\ccFwd_{\overline{\sov{A}}})(\sov{t}) \asymp \sov{t}')$, 1-to-1 correspondences
  $\overline{\sov{C}}(\sov{t}, \sov{t}', \overline{\sov{t}}) : (\sov{C}(\sov{t}), \sov{C}'(\sov{t}')) \to \ccTy$.
  We want to relate $F\ccSuspElim(\sov{C},\sov{n},\sov{s},\sov{m},\sov{t})$ and $G\ccSuspElim(\sov{C}',\sov{n}',\sov{s}',\sov{m}',\sov{t}'))$ in $\overline{\sov{C}}(\sov{t},\sov{t}',\overline{\sov{t}})$ for all related inputs.
  We go by $F\ccSusp$-elimination from $\sov{t}$ and identity elimination from $\overline{\sov{t}}$.
  For the point cases $\sov{t} = F\ccNorth$ and $\sov{t} = F\ccSouth$, we choose the values that the point computation rules require.
  For the $F\ccMerid$ case, we apply the almost-eliminator to the input data to get a section of $\overline{\sov{C}}$, evaluate it at the corresponding meridian, then coerce the result along the almost-eliminator's point computation paths to get a path of the correct type.
\end{component}

This completes the definition of $\spanCTT{F}{G} \colon \rmcCTT[\iota\Phi] \to \Span{\rmcCTT[\iota\Psi]}$.
In summary:

\begin{theorem}
  \label{span-theorem}
  Let $F,G \colon \rmcCTT[\iota\Phi] \to \rmcCTT[\iota\Psi]$ in the coslice under $\rmcMLTTU + \rmcCof$.
  There is a $\spanCTT{F}{G} \colon \rmcCTT[\iota\Phi] \to \Span{\rmcCTT[\iota\Psi]}$ in $\Coslice{\rmcMLTT}{\RMC}$ such that $\pi_0\spanCTT{F}{G} \cong F$ and $\pi_1\spanCTT{F}{G} \cong G$.
\end{theorem}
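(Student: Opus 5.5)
The plan is to read off $\spanCTT{F}{G}$ from the components just given, using the universal property of the syntactic RMC $\rmcCl{\rmcCTT[\iota\Phi]}$ \cite[Theorem 4.8.18]{uemura:21}. An RMC functor out of $\rmcCl{\rmcCTT[\iota\Phi]}$ is determined, up to isomorphism, by an interpretation of each declaration. The target $\Span{\rmcCTT[\iota\Psi]}$ is an RMC by the proposition on span RMCs. So it suffices to check two things. First, each sort is sent to an object, and each representable sort to a levelwise representable map. Second, each operation is sent to a morphism of spans, and each equation holds strictly in all three positions $0$, $\R$, $1$.

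In each component the $0$- and $1$-legs are literally $F$ and $G$ applied to the corresponding declaration. This holds for the sorts ($\ccII$, $\ccCof$, $\ccTrue$, $\ccTy$, $\ccTm$) and for every operation, since each apex datum lies over $F(\ldots)$ and $G(\ldots)$ via $\spanl,\spanr$. Hence $\pi_0\spanCTT{F}{G}$ and $F$ agree on all generators, and likewise for $\pi_1$ and $G$. The uniqueness half of the universal property then gives $\pi_0\spanCTT{F}{G}\cong F$ and $\pi_1\spanCTT{F}{G}\cong G$.

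For the coslice condition, the declarations of $\rmcMLTT$ (unit, $\Sigma$, $\Id$, $\ccTy$, $\ccTm$) are interpreted exactly as in $\spanMLTT$. We use that $F$ and $G$ lie under $\rmcMLTTU+\rmcCof$, so that $F\ccSigma=\ccSigma$ and so on. Precomposing with $\rmcMLTT\to\rmcCTT[\iota\Phi]$ therefore gives an interpretation agreeing with $\Span{\iota}\circ\spanMLTT$ on generators, and hence a 2-cell in $\Coslice{\rmcMLTT}{\RMC}$.

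The substantive work is verifying the remaining equations strictly at the apex. The $\rmcMLTT$ equations hold because $\relMLTT$ uses genuine $\Sigma$/$\Pi$ pairing and application. The cofibration axioms ($\ccTrue$ a strict proposition, the $\ccCap$/$\ccCup$ rules, $\cce0\ccCofEq\cce1\to\ccBot$, $\ccCofEq\to\equiv$) hold because $\overline{\sov P}$ is an honest cofibration entailing $\sov P$ and $\sov P'$. For example, $\overline{\sov P}=(\sov i\ccCofEq_F\sov j)\ccCap(\sov x\ccCofEq_G\sov y)$ yields strict equality of both coordinates. The $\ccCaseTy$/$\ccCaseTm$ equations follow from those of the base theory.

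The filling boundary equations follow from those of $G\ccFill$ and $F\ccFill$. At $(\sov k,\sov z)=(\sov j,\sov y)$ the outer fill reduces to the inner one, which reduces to $\overline{\sov a}_0$. Under $\overline{\sov P}$ both reduce to $\overline{\sov a}$, using that $\overline{\sov P}$ implies $\sov P$ and $\sov P'$, so the types $\overline{\sov A}(\ldots,\sov a_+,\sov a'_+)$ align. The path $\beta$/$\eta$/endpoint rules are strict because the apex path type is a type of families with fixed corners. Glue reductions under $\overline{\sov P}$ hold by the $\ccGlue$ equations together with the unglue reductions used to define $\widehat{\mathsf e}$. $\ccEl$ and $\ccU$ are immediate.

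I expect suspension to be the main obstacle. The point computation rules must hold strictly, which is why the eliminator is defined by $F\ccSusp$-elimination with chosen point values. The computation $\ccSuspElim(\ldots,\ccNorth)\equiv\sov n$ must hold judgmentally in the apex. It does hold, because the $F\ccSusp$ and $\Id$-eliminators compute strictly at $F\ccNorth$ and at $\mathsf{refl}$, and $\ccNorth$ is interpreted by reflexivity. Only $\ccSuspElimMerid$ is a path, and it is supplied by the coercion along the almost-eliminator's point paths. I would check this case in detail and treat the rest as routine.
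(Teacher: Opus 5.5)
Your proposal is correct and follows the paper's route: the theorem is the summary of the component-by-component interpretation of $\rmcCTT[\iota\Phi]$ into $\Span{\rmcCTT[\iota\Psi]}$, assembled via Uemura's universal property of the syntactic RMC. As you say, the legs are $F$ and $G$ by construction, and the $\rmcMLTT$ fragment is forced by $\spanMLTT$. Your explicit checks of the strict apex equations (cofibration axioms, filling boundaries, path $\beta$/$\eta$/endpoints, glue and suspension point rules) spell out what the paper leaves implicit, but the substantive content of the components is showing that each apex relation is a 1-to-1 correspondence, and you are taking that from the components as given.
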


\section{Conservativity}
\label{sec:conservativity}

\begin{proposition}[2-out-of-6]
  \label{2-out-of-6}
  Weak equivalences of democratic models of $\rmcMLTT$ are closed under 2-out-of-6.
  That is, given morphisms of democratic models of $\rmcMLTT$
  \begin{tikzcd}[cramped]
    \mod{M}
    \ar[r, "\mod{F}"]
  &
    \mod{N}
    \ar[r, "\mod{G}"]
  &
    \mod{O}
    \ar[r, "\mod{H}"]
  &
    \mod{P}
  \end{tikzcd}
  where $\mod{G}\mod{F}$ and $\mod{H}\mod{G}$ are weak equivalences, the maps $\mod{F}$, $\mod{G}$, $\mod{H}$, and the composite $\mod{H}\mod{G}\mod{F}$ are weak equivalences.
\end{proposition}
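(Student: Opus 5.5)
The plan is to reduce 2-out-of-6 to two standard facts about Kapulkin--Lumsdaine weak equivalences of democratic models of $\rmcMLTT$: closure under composition and 2-out-of-3. Both are established by Kapulkin and Lumsdaine \cite{kapulkin-lumsdaine:18} for contextual categories with $\Sigma$ and identity types. Under Uemura's correspondence, democratic models of $\rmcMLTT$ are essentially such categories, and \Cref{weak-equivalence} is a direct translation of their definition. So the first step is to check that this translation respects the relevant structure. Concretely: for democratic $\mod{M}$, every context is an iterated extension, so the lifting conditions of \Cref{weak-equivalence} for types and terms in arbitrary contexts correspond to theirs.

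I will verify composition directly. Given weak equivalences $\mod{F}\colon\mod{M}\to\mod{N}$ and $\mod{G}\colon\mod{N}\to\mod{O}$ and a type $B$ over $\mod{G}\mod{F}(\Gamma)$, lift $B$ along $\mod{G}$ to $A'$ over $\mod{F}(\Gamma)$, then lift $A'$ along $\mod{F}$ to $A$. Transport the 1-to-1 correspondence $\mod{F}(A)\simeq A'$ by $\mod{G}$, which preserves $\ccTyEqv$ since it is a morphism of models of $\rmcMLTT$. Then compose it with $\mod{G}(A')\simeq B$; composing 1-to-1 correspondences is definable in $\rmcMLTT$. Term lifting is handled the same way, using transitivity of identity types.

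Next, I will obtain the right-cancellation form of 2-out-of-3: if $\mod{G}\mod{F}$ and $\mod{G}$ are weak equivalences, then so is $\mod{F}$. I will also need the left-cancellation form: if $\mod{G}\mod{F}$ and $\mod{F}$ are weak equivalences, then so is $\mod{G}$. The latter is the harder direction. For it I would appeal to \cite[Lemma 3.4--Proposition 3.6]{kapulkin-lumsdaine:18}, or reprove it. The idea is that for democratic models a weak equivalence is essentially surjective on contexts up to equivalence of contexts. So a type over $\mod{F}(\Gamma')$ for arbitrary $\Gamma'\in\Ctx{\mod{N}}$ can be transported to a context of the form $\mod{F}(\Gamma)$ along a context equivalence, lifted there, and transported back. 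Democracy is essential here, because it lets us build the context equivalence inductively by lifting each type in the telescope.

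The main step is then formal. Since $\mod{G}\mod{F}$ and $\mod{H}\mod{G}$ are weak equivalences, I first show that $\mod{G}$ has the lifting properties, in two parts. Weak type and term lifting along $\mod{G}$ at contexts $\Gamma=\mod{F}(\Gamma_0)$ follows from that along $\mod{G}\mod{F}$, by taking the $\mod{F}$-image of the lift. The same lifting at arbitrary $\Gamma\in\Ctx{\mod{N}}$ then needs the context-transport argument above. For this I use that $\mod{G}$ reflects context equivalences, which follows from $\mod{H}\mod{G}$ being a weak equivalence. Concretely, a term in $\mod{O}$ over $\mod{G}(\Gamma)$ lifts first through $\mod{H}\mod{G}$ after applying $\mod{H}$, and the resulting identification in $\mod{P}$ is pulled back.

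I expect this step to be the genuine obstacle: extracting that $\mod{G}$ is a weak equivalence from the two composite hypotheses. The first thing I would try is to derive a \enquote{homotopy essential surjectivity plus homotopy full faithfulness} characterization of weak equivalences for democratic models, which makes $\mod{G}$ both. Once $\mod{G}$ is a weak equivalence, 2-out-of-3 gives that $\mod{F}$ and $\mod{H}$ are weak equivalences, and composition gives $\mod{H}\mod{G}\mod{F}$.
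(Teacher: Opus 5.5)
The reduction itself is fine: show $\mod{G}$ is a weak equivalence, then obtain $\mod{F}$ and $\mod{H}$ by 2-out-of-3 and $\mod{H}\mod{G}\mod{F}$ by composition. But all the content is in the step you leave open, and your sketch of it fails at two points.

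\textbf{Where the sketch fails.} Type lifting along $\mod{G}$ at a context $\mod{F}(\Gamma_0)$ does follow from $\mod{G}\mod{F}$ as you say. Term lifting does not. \Cref{weak-equivalence}(b) for $\mod{G}$ quantifies over \emph{all} types $A$ over $\mod{F}(\Gamma_0)$ in $\mod{N}$, whereas $\mod{G}\mod{F}$ only handles types of the form $\mod{F}(A_0)$. Type lifting along $\mod{G}\mod{F}$ gives $\mod{G}\mod{F}(A_0)\simeq\mod{G}(A)$ in $\mod{O}$. Turning this into $\mod{F}(A_0)\simeq A$ in $\mod{N}$ already requires $\mod{H}\mod{G}$.

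Second, $\mod{H}\mod{G}$ can never directly produce an identification $\mod{G}(a)\sim b$ in $\mod{O}$ for a term $b$ of $\mod{O}$. Lifting $\mod{H}(b)$ along $\mod{H}\mod{G}$ identifies $\mod{H}\mod{G}(a)$ with $\mod{H}(b)$ only in $\mod{P}$. Moving that identification to $\mod{O}$ would need $\mod{H}$ to reflect identifications, which is not a hypothesis. Term lifting along $\mod{H}\mod{G}$ only pulls identifications back into $\mod{N}$, at identity types of $\mod{N}$.

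\textbf{The missing idea.} Terms must always come from $\mod{G}\mod{F}$, and $\mod{H}\mod{G}$ is used only to reflect equivalences into $\mod{N}$.
\begin{enumerate}
\item[(i)] Let $X, Y$ be types over $\Gamma\in\Ctx{\mod{N}}$ with $\mod{G}(X)\simeq\mod{G}(Y)$. Lift $\mod{H}$ of the two maps along $\mod{H}\mod{G}$ in the contexts $\Gamma.X$ and $\Gamma.Y$. Then lift the homotopies at identity types of $\mod{N}$. This gives $X\simeq Y$.
\item[(ii)] Combined with (i), type lifting along $\mod{G}\mod{F}$ gives type lifting along $\mod{F}$. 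By democracy and induction on telescopes, every context of $\mod{N}$ is then homotopy equivalent to some $\mod{F}(\Gamma_0)$.
\item[(iii)] Lifting along $\mod{G}$ at arbitrary $\Gamma$ and $A$ now reduces by transport to the case $\Gamma=\mod{F}(\Gamma_0)$, $A=\mod{F}(A_0)$. There $\mod{G}\mod{F}$ supplies both the term and the identification in $\mod{O}$. The transports contribute only $\mod{G}$-images of equivalence data from $\mod{N}$.
\end{enumerate}

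\textbf{Comparison with the paper.} Your fallback, a homotopy essential surjectivity plus full faithfulness characterization, is essentially the route behind the paper's proof. The paper simply cites Kapulkin and Lumsdaine's Corollary~3.4. There these weak equivalences are identified with weak equivalences of the fibration categories of contexts, i.e.\ functors inducing an equivalence of homotopy categories. 2-out-of-6 is then inherited directly from equivalences of categories, so the detour through composition and 2-out-of-3 is unnecessary.
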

\begin{proof}
  See Kapulkin and Lumsdaine \cite[Corollary 3.4]{kapulkin-lumsdaine:18}.
\end{proof}

A corollary of 2-out-of-6 is 2-out-of-3: given composable morphisms $\mod{G}$ and $\mod{F}$ between democratic models of $\rmcMLTT$, if two of the three morphisms $\mod{F}$, $\mod{G}$, and $\mod{G}\mod{F}$ are weak equivalences, then so is the third.

\begin{theorem}
  \label{rmc-biimplication-to-weak-equivalance}
  For $F \colon \rmcCTT[\iota\Phi] \to \rmcCTT[\iota\Psi]$ and $G \colon \rmcCTT[\iota\Psi] \to \rmcCTT[\iota\Phi]$ in the coslice under $\rmcMLTT + \rmcCof$, the induced morphisms
  $\Synt{F} \colon \Synt{\rmcCTT[\iota\Phi]} \to \Synt{\rmcCTT[\iota\Psi]}$ and $\Synt{G} \colon \Synt{\rmcCTT[\iota\Psi]} \to \Synt{\rmcCTT[\iota\Phi]}$ are weak equivalences.
\end{theorem}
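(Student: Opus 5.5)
Apply \Cref{span-theorem} to the two composites $GF$ and $\Id$ on $\rmcCTT[\iota\Phi]$, and likewise to $FG$ and $\Id$ on $\rmcCTT[\iota\Psi]$, then conclude by 2-out-of-6 (\Cref{2-out-of-6}). The hypotheses of \Cref{span-theorem} require functors in the coslice under $\rmcMLTTU + \rmcCof$. $F$ and $G$ are given under $\rmcMLTT + \rmcCof$, so I first check that the universe is preserved, or else arrange it. If $F$ and $G$ are not strictly under $\ccU$, I would rerun the universe component of the span interpretation using that $F\ccU$ is a univalent universe equivalent to $\ccU$ (via glue/univalence). I expect the cleaner route is to note that the components of \Cref{span-theorem} only use the universe through its univalence, which $F\ccU$ also satisfies. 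This is a point to verify rather than a deep step.

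Write $\rmc{R} = \rmcCTT[\iota\Phi]$. \Cref{span-theorem} gives $S \defeq \spanCTT{GF}{\Id} \colon \rmc{R} \to \Span{\rmc{R}}$ in $\Coslice{\rmcMLTT}{\RMC}$, with $\pi_0 S \cong GF$ and $\pi_1 S \cong \Id$. Passing to syntactic models as in \Cref{models-from-syntax}, we get morphisms $\Synt{S} \colon \Synt{\rmc{R}} \to \Synt{\Span{\rmc{R}}}$ and $\Synt{\pi_0}, \Synt{\pi_1} \colon \Synt{\Span{\rmc{R}}} \to \Synt{\rmc{R}}$. These satisfy $\Synt{\pi_0}\Synt{S} \cong \Synt{GF}$ and $\Synt{\pi_1}\Synt{S} \cong \id$. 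I need that weak equivalence is invariant under isomorphism of morphisms of models, which is immediate from \Cref{weak-equivalence}. By \Cref{span-legs-weak-equivalences}, both $\Synt{\pi_0}$ and $\Synt{\pi_1}$ are weak equivalences. Since $\Synt{\pi_1}\Synt{S}$ is the identity up to isomorphism, 2-out-of-3 shows that $\Synt{S}$ is a weak equivalence. Hence $\Synt{GF} \cong \Synt{\pi_0}\Synt{S}$ is a weak equivalence. All models involved are hearts, hence democratic, so 2-out-of-3 and 2-out-of-6 apply. Symmetrically, $\Synt{FG}$ is a weak equivalence on $\Synt{\rmcCTT[\iota\Psi]}$.

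Next I use functoriality $\Synt{GF} \cong \Synt{G}\Synt{F}$, which holds because $\Synt{-}$ is defined from the coslice and heart constructions. Then apply \Cref{2-out-of-6} to the chain
\[
\Synt{\rmcCTT[\iota\Phi]} \xrightarrow{\Synt{F}} \Synt{\rmcCTT[\iota\Psi]} \xrightarrow{\Synt{G}} \Synt{\rmcCTT[\iota\Phi]} \xrightarrow{\Synt{F}} \Synt{\rmcCTT[\iota\Psi]}.
\]
Here $\Synt{G}\Synt{F}$ and $\Synt{F}\Synt{G}$ are weak equivalences, so $\Synt{F}$ and $\Synt{G}$ are weak equivalences.

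The main technical obstacle is bookkeeping rather than mathematics. First, $\Synt{-}$ must send the given 2-isomorphisms and composites in $\Coslice{\rmcMLTT}{\RMC}$ to isomorphic morphisms of models, including the passage to hearts. Second, the coslice hypothesis must be matched, in particular the universe assumption of \Cref{span-theorem} versus the $\rmcMLTT + \rmcCof$ hypothesis stated here.
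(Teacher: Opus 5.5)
Your main argument is the paper's proof. You apply \Cref{span-theorem} to $GF$ and $\Id$, use \Cref{span-legs-weak-equivalences} with 2-out-of-3 twice to get that $\Synt{GF}$ is a weak equivalence, and argue symmetrically for $\Synt{FG}$. You then finish with 2-out-of-6 on $\Synt{F}\circ\Synt{G}\circ\Synt{F}$.

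You are also right that the stated hypothesis (coslice under $\rmcMLTT + \rmcCof$) does not match what \Cref{span-theorem} needs. The paper's proof simply invokes \Cref{span-theorem}, so the hypothesis should be read as the coslice under $\rmcMLTTU + \rmcCof$. That is also what \Cref{twist-theorem} supplies in the application.

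Drop your proposed workaround, though. Univalence of $F\ccU$ with respect to $F\ccEl$ says nothing about which types $F\ccU$ classifies. So it cannot yield the required 1-to-1 correspondence between $GF\ccU$ and $\ccU$ in the universe component of the span interpretation. For instance, the type formers the universe is closed under all preserve contractibility. One can therefore interpret $\ccU$ as the subuniverse $\ccSigma \sov{A}{:}\ccU.\ \ccIsContr(\ccEl(\sov{A}))$ and everything else as itself. This gives an endofunctor under $\rmcMLTT + \rmcCof$ for which weak type lifting at $\ccU$ fails, for example by a size argument in a cubical-set model. So preservation of the universe is a genuine hypothesis, not bookkeeping.
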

\begin{proof}
  By \Cref{span-theorem}, we have an RMC functor fitting in the diagram in $\Coslice{\rmcMLTT}{\RMC}$ to the left below.
  \[
    \begin{tikzcd}[column sep=small]
      & \rmcCTT[\iota\Phi] \ar{dl}[above left]{GF} \ar[dashed]{d}{\spanCTT{GF}{\Id}} \ar[equal]{dr} \\
      \rmcCTT[\iota\Phi] & \ar{l}[below]{\pi_0} \Span{\rmcCTT[\iota\Phi]} \ar{r}[below]{\pi_1} & \rmcCTT[\iota\Phi]
    \end{tikzcd}
    \qquad
    \begin{tikzcd}[column sep=small]
      & \Synt{\rmcCTT[\iota\Phi]} \ar{dl}[above left]{\Synt{GF}} \ar{d}{} \ar[equal]{dr} \\
      \Synt{\rmcCTT[\iota\Phi]} & \ar{l}[below]{\sim} \Synt{\Span{\rmcCTT[\iota\Phi]}} \ar{r}[below]{\sim} & \Synt{\rmcCTT[\iota\Phi]}
    \end{tikzcd}
  \]
  This induces a diagram in $\Mod{\rmcMLTT}$ as shown to the right, where the morphisms marked $\sim$ are weak equivalences by \Cref{span-legs-weak-equivalences}.
  By two applications of 2-out-of-3, first in the right triangle and then in the left, it follows that $\Synt{GF}$ is a weak equivalence.

  By the same argument, $\Synt{FG} \colon \Synt{\rmcCTT[\iota\Psi]} \to \Synt{\rmcCTT[\iota\Psi]}$ is a weak equivalence.
  The claim now follows by 2-out-of-6 applied to the string of morphisms $\Synt{F} \circ \Synt{G} \circ \Synt{F}$.
\end{proof}

\begin{theorem}[Conservativity of reversals]
  \label{reversal-conservativity}
  For every self-dual interval theory $(\Phi,\phi)$, the inclusion $\rmcCTT[\iota\Phi] \to \rmcCTT[\iota\thRev{\Phi}{\phi}]$ induces a weak equivalence $\Synt{\rmcCTT[\iota\Phi]} \to \Synt{\rmcCTT[\iota\thRev{\Phi}{\phi}]}$ in $\Mod{\rmcMLTT}$.
\end{theorem}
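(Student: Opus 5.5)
We apply \Cref{rmc-biimplication-to-weak-equivalance} with $F$ the inclusion $\iota_\lnot \colon \rmcCTT[\iota\Phi] \to \rmcCTT[\iota\thRev{\Phi}{\phi}]$ and $G$ the twist interpretation $\twistCTT \colon \rmcCTT[\iota\thRev{\Phi}{\phi}] \to \rmcCTT[\iota\Phi]$ of \Cref{twist-theorem}. That theorem asserts that $\Synt{F}$ and $\Synt{G}$ are weak equivalences whenever $F$ and $G$ are RMC functors in opposite directions between cubical type theories, lying in the coslice under $\rmcMLTT + \rmcCof$. So the proof reduces to checking that both functors are morphisms of that coslice. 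The morphism $\Synt{\iota_\lnot}$ is exactly the one in the statement, so the first half of the conclusion is what we want.

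The inclusion $\iota_\lnot$ is induced by the evident map of SOGATs. It extends the interval theory $\Phi$ to $\thRev{\Phi}{\phi}$ and leaves all other declarations unchanged. By the universal property of $\rmcCl{-}$ \cite[Theorem 4.8.18]{uemura:21}, it is an RMC functor that fixes $\ccTy$, $\ccTm$, $\ccCof$, $\ccTrue$, $\Sigma$, unit, identity types and $\ccU$/$\ccEl$ on the nose. Hence it lies in the coslice under $\rmcMLTTU + \rmcCof$, and a fortiori under $\rmcMLTT + \rmcCof$. For $\twistCTT$, \Cref{twist-theorem} already places it in the coslice under $\rmcMLTTU + \rmcCof$. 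Restricting along $\rmcMLTT + \rmcCof \to \rmcMLTTU + \rmcCof$ puts it in the required coslice.

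The main point to watch is that \Cref{rmc-biimplication-to-weak-equivalance} invokes \Cref{span-theorem} for the composites $GF$ and $FG$ against identities. \Cref{span-theorem} requires these functors to be in the coslice under $\rmcMLTTU + \rmcCof$, because the universe component uses $F\ccU = G\ccU = \ccU$. Both $\iota_\lnot$ and $\twistCTT$ fix $\ccU$ and $\ccEl$, so their composites $\twistCTT\circ\iota_\lnot$ and $\iota_\lnot\circ\twistCTT$ do as well. It is harmless that these composites are not identities; for instance, $\twistCTT\circ\iota_\lnot$ sends $\ccII$ to $\ccII\times\ccII$. The span construction needs only the coslice structure, and 2-out-of-6 (\Cref{2-out-of-6}) then applies to $\Synt{F}\circ\Synt{G}\circ\Synt{F}$. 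The models involved are hearts and hence democratic, as \Cref{2-out-of-6} requires.

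Finally, we unfold the conclusion in turnstile terms to recover the informal statement of \S\ref{sec:contributions}. Weak type lifting and weak term lifting for $\Synt{\iota_\lnot}$ say the following for a context $\Gamma$ of the base theory. Every type $\Gamma \vdash_\lnot B$ is equivalent (via a 1-to-1 correspondence) to the image of some $\Gamma \vdash A$. Every term $\Gamma \vdash_\lnot N : A$ is identified with the image of some base term $M$. Identities can be converted to paths when desired.
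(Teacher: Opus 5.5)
Your proposal is correct and follows the paper's proof, which is exactly the one-line application of \Cref{rmc-biimplication-to-weak-equivalance} to the inclusion together with the twist interpretation $\twistCTT$ of \Cref{twist-theorem}. Your additional check is worth keeping: \Cref{span-theorem} needs the composites to lie under $\rmcMLTTU + \rmcCof$, and you verify this by noting that both functors fix $\ccU$ and $\ccEl$, a detail the paper leaves implicit.
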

\begin{proof}
  By \Cref{rmc-biimplication-to-weak-equivalance} with \Cref{twist-theorem}.
\end{proof}

\section{Interpreting strict cubical type theory with reversals in spaces}
\label{sec:spaces}

Kapulkin and Lumsdaine \cite{kapulkin-lumsdaine:18} show that every democratic model $\mod{M} \in \Mod{\rmcMLTT}$ induces a \emph{fibration category} structure on its category of contexts $\Ctx{\mod{M}}$.
Such a structure, which is specified by two classes of morphisms in $\Ctx{\mod{M}}$ called \emph{fibrations} and \emph{weak equivalences}, induces in turn an \emph{$(\infty,1)$-category} \cite{szumilo:16} or ``homotopy theory''.
It is in this way that we judge the kind of higher structure described by a model of $\rmcMLTT$.
The homotopy theory of topological spaces corresponds to one such $(\infty,1)$-category, that of \emph{$\infty$-groupoids}.

Awodey et al.~\cite{accrs:24} and Cavallo and Sattler \cite{cavallo-sattler:25} exhibit constructive models $\mod{M}$ of strict cubical type theories without reversals whose induced $(\infty,1)$-categories are classically equivalent to the $(\infty,1)$-category of $\infty$-groupoids.
These models are not themselves democratic, so here we mean that their hearts $\heart{\mod{M}}$ present these $(\infty,1)$-categories in the above sense (\Cref{democracy}).
Typically, however, these models are analyzed by means of a \emph{Quillen model structure}, another form of presentation of an $(\infty,1)$-category, on $\mod{M}$ itself.
Such a structure is defined by three classes of maps: \emph{cofibrations}, \emph{weak equivalences}, and \emph{fibrations}.

\begin{definition}
  The \emph{Quillen model structure presented by $\mod{M} \in \Mod{\rmcMLTT}$}, if it exists, is the unique model structure on $\Ctx{\mod{M}}$ such that
  \begin{enumerate}[(a)]
  \item the fibrations are the retracts in $\Ctx{\mod{M}}^\to$ of context extensions, \ie, of morphisms $p_A \colon \Gamma.A \to \Gamma$ arising as pullbacks in $\PSh{\Ctx{\mod{M}}}$ of $\mod{M}(\clTm)$;
  \item the unique map $0 \to \Gamma$ is a cofibration for all $\Gamma \in \Ctx{\mod{M}}$.
  \end{enumerate}
\end{definition}

The uniqueness follows from a result of Joyal \cite[Theorem 15.3.1]{riehl:14}.
A model structure on a category $\cat{E}$ induces a fibration category structure on the full subcategory of $X \in \cat{E}$ such that $X \to 1$ is a fibration; for a model structure presented by $\mod{M} \in \Mod{\rmcMLTT}$, this is exactly $\Ctx{\heart{\mod{M}}}$, and the induced fibration category is exactly Kapulkin and Lumsdaine's.

\Cref{reversal-conservativity} allows us to translate proofs in ``opaque'' cubical type theory with reversals into proofs that do not use reversals, which can then be interpreted in $\infty$-groupoids via the aforementioned models.
However, it does not allow us to translate proofs in strict cubical type theories.
Fortunately, we can also use the twist construction to directly construct models of strict cubical type theory with reversals in $\infty$-groupoids.
In fact, we can reuse existing model constructions of the kind pioneered by Orton and Pitts~\cite{orton-pitts:18} out of the box.

\subsection{Orton--Pitts models}

Orton and Pitts~\cite{orton-pitts:18} give an abstract description of Cohen, Coquand, Huber, and M\"ortberg's model of cubical type theory in De Morgan cubical sets~\cite{cohen-coquand-huber-mortberg:15}.
Abstracting from the case of cubical sets, they fix a topos $\cat{E}$ equipped with an interval object $I$ and a suitable subobject $\OPCof \mono \Omega$ of the subobject classifier and isolate axioms on this data sufficient to construct a model of a strict cubical type theory in $\cat{E}$ where the interval is interpreted by $I$ and the cofibrations by $\OPCof$.
They assume that the interval $I$ has connections, but Angiuli, Brunerie, Coquand, Harper, Favonia, and Licata (ABCHFL)~\cite{abcfhl:21} subsequently gave a similar construction for intervals without such structure.
Extracting what we need from their main result and rephrasing in our language, we have:

\begin{proposition}[{{\cite[Theorem 2]{abcfhl:21}}}]
  \label{abcfhl-theorem}
  Let $\cat{E} = \PSh{\cat{C}}$ be a presheaf category on a finite product category $\cat{C}$, let $I \in \cat{E}$ be a representable object with distinct points $0,1 \colon 1 \to I$, and let $\OPCof \mono \Omegadec$ be a subobject of the levelwise decidable subobject classifier in $\PSh{\cat{C}}$ that classifies the diagonal $I \to I \times I$ and is closed under finite conjunction, finite disjunction, and universal quantification over $I$.
  Then there is a model $\mod{M}$ of $\rmcCTTs$ such that
  \begin{enumerate}[(a)]
  \item $\Ctx{\mod{M}} = \cat{E}$.
  \item $\mod{M}(\ccII) = \yo I \in \PSh{\cat{E}}$.
  \item the maps $p_A \colon \Gamma.A \to \Gamma$ arising as pullbacks in $\PSh{\cat{E}}$ of $\mod{M}(\clTm)$ are those equipped with a \emph{diagonal Kan composition structure} \cite[Definition 1]{abcfhl:21}.
  \end{enumerate}
  This model interprets the interval theory of all $f \colon I^n \to I$ in $\cat{C}$ and equations between them.
\end{proposition}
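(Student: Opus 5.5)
The plan is to unpack the data of \cite[Theorem 2]{abcfhl:21} and repackage it as an interpretation of the SOGAT $\rmcCTTs$ in the RMC $\PSh{\cat{E}}$. By Uemura's universal property \cite[Theorem 4.8.18]{uemura:21}, giving an RMC functor $\rmcCTTs \to \PSh{\cat{E}}$ amounts to one thing for each declaration of $\rmcCTTs$: an object for each sort, a representable map for each representable sort, a morphism for each operator, and the validity of each equation. So the proof reduces to exhibiting these and checking them against ABCHFL's constructions. We take $\Ctx{\mod{M}} = \cat{E}$, which has a terminal object, giving (a).

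\textbf{Sorts.}
\begin{enumerate}[(i)]
\item For the interval, set $\mod{M}(\ccII) \defeq \yo I$. This is a representable map in the sense of \cref{presheaf-rmc} over $1$, since for any $\Gamma \in \cat{E}$ the pullback along $\yo\Gamma \to 1$ is $\yo(\Gamma \times I)$. This gives (b).
\item For cofibrations, set $\mod{M}(\ccCof) \defeq \yo\OPCof$.
\item For cofibration truth, let $\mod{M}(\ccTrue) \to \mod{M}(\ccCof)$ be $\yo$ of the generic subobject $1 \to \OPCof$. It is representable because pulling back a subobject of $\Gamma$ classified by $\Gamma \to \OPCof$ yields an object of $\cat{E}$.
\item For types and terms, I would follow ABCHFL (and Orton--Pitts): $\mod{M}(\ccTy)$ sends $\Gamma$ to the set of maps $\Gamma \to \mathcal{U}_{\mathrm{fib}}$ into a (Hofmann--Streicher based) classifier of families equipped with diagonal Kan composition structure. $\mod{M}(\ccTm)$ is the corresponding presheaf of pointed elements. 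This makes substitution strict, and the pullback of $\mod{M}(\clTm)$ along $A \colon \yo\Gamma \to \mod{M}(\ccTy)$ is $\yo(\Gamma.A)$ for the comprehension $\Gamma.A \to \Gamma$ of the classified family. This makes $\mod{M}(\clTm)$ representable and gives (c) essentially by construction: context extensions are exactly the maps carrying a diagonal Kan composition structure.
\end{enumerate}

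\textbf{Operators.} Next I would transcribe ABCHFL's constructions operator by operator.
\begin{itemize}
\item Each $f \colon I^n \to I$ in $\cat{C}$ is interpreted by $\yo$ of its image in $\cat{E}$. Equations between such maps then hold on the nose, which gives the final clause of the statement.
\item The cofibration connectives come from the assumed closure of $\OPCof$ under conjunction and disjunction. The equality cofibration $\ccCofEq$ comes from $\OPCof$ classifying the diagonal $I \to I \times I$. The axiom $\cce0 \ccCofEq \cce1 \to \ccBot$ uses that the points $0,1$ are distinct together with decidability in $\Omegadec$.
\item $\ccCaseTy$ and $\ccCaseTm$ are obtained by gluing along unions of decidable subobjects.
\item $\ccFill$ with arbitrary source $\sov{j}$ and target $\sov{k}$ is exactly what diagonal Kan composition provides.
\item $\ccSigma$, $\ccPi$, $\ccPath$, $\ccId$, $\ccGlue$, $\ccU$, and $\ccSusp$, together with the strict filling equations of $\rmcCTTs$, are ABCHFL's type formers. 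Universal quantification over $I$ is used for $\ccGlue$ and the universe.
\end{itemize}
The span-free parts (equations for $\Sigma$, $\Pi$, paths) hold because they hold for the underlying presheaf constructions.

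\textbf{Pushforwards and the main obstacle.} An RMC functor must preserve pushforwards along representable maps. Concretely, operators with binders, such as $\ccPathLam$ or $\sov{A} : \ccII \to \ccTy$, must be interpreted as maps out of the exponentials $\yo\Gamma \mapsto \mod{M}(X)(\Gamma \times I)$ and $\mod{M}(X)(\Gamma.A)$. ABCHFL present their model as stable under substitution with binders taken in the slices $\cat{E}/\Gamma$. I would verify naturality in $\Gamma$ with respect to arbitrary maps of $\cat{E}$, and that the higher-order arguments match Uemura's pushforward. This is the step I expect to be the main obstacle: it requires ensuring that the type formers are defined on the classifier $\mathcal{U}_{\mathrm{fib}}$ (so they are strictly stable), not merely pseudo-stably on families. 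I would handle it as Orton--Pitts and ABCHFL do, by defining every former internally in $\cat{E}$ as a map between classifiers. Strict stability is then automatic, and the interpretation in $\PSh{\cat{E}}$ is obtained by applying $\yo$.
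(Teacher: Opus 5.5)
Your proposal is correct and takes the same approach as the paper. The paper gives no argument of its own: it simply invokes ABCHFL's Theorem 2 and rephrases it as an RMC functor into $\PSh{\cat{E}}$, and your sort-by-sort and operator-by-operator unpacking, with strict stability obtained from a classifier $\mathcal{U}_{\mathrm{fib}}$ of diagonal-Kan fibrations, faithfully elaborates that rephrasing. The one ingredient you leave implicit is representability of $I$ in the finite product category $\cat{C}$: it makes the product of a representable presheaf with $I$ again representable, so fibration structure is local and the Hofmann--Streicher-style classifier $\mathcal{U}_{\mathrm{fib}}$ you rely on exists.
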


We call $(\cat{C},I,0,1,\OPCof)$ satisfying the conditions of \cref{abcfhl-theorem} an \emph{ABCHFL setup} and write $\mod{M}(\cat{C},0,1,I,\OPCof)$ for the resulting model.
The maps with diagonal Kan composition structure can be described by a simple lifting property.
Awodey proves the following for cartesian cubical sets, but the same proof applies in the setting of \Cref{abcfhl-theorem}.

\begin{proposition}[{{\cite[Proposition 4.15(2)$\Leftrightarrow$(3)]{awodey:26}}}]
  \label{abcfhl-fibrations}
  Let $(\cat{C},I,0,1,\OPCof)$ be an \emph{ABCHFL setup}.
  A morphism $f \colon Y \to X$ admits a diagonal Kan composition structure if and only if it has the right lifting property against the unique dashed map
  \[
    \begin{tikzcd}[row sep=small]
      A \ar{d}[left]{m} \ar{r}{\pair{\id}{zm}} &[1em] A \times I \ar{d} \ar[bend left=20]{ddr}{m \times I} \\
      B \ar{r} \ar[bend right=20]{drr}[below left,pos=0.2]{(\id,z)} & \pushout \bullet \ar[dashed]{dr}[below left]{\gencof{m}{z}} \\
      & & B \times I
    \end{tikzcd}
  \]
  for every $m \colon A \mono B$ classified by $\OPCof$ and $z \colon B \to I$.
\end{proposition}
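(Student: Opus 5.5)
Both sides are characterized by a lifting property relative to the same generating data, so the plan is to unwind the definition of diagonal Kan composition structure \cite[Definition 1]{abcfhl:21} and match it directly against lifts for the maps $\gencof{m}{z}$. A diagonal Kan composition structure on $f \colon Y \to X$ assigns to every generalized composition problem a solution. Such a problem has the following data: a stage $\Gamma$, a line $x \colon \Gamma \times I \to X$, a cofibration $\varphi \colon \Gamma \to \OPCof$, a partial tube over $\varphi$, and a cap at a generic source $r \colon \Gamma \to I$ agreeing with the tube at $r$. The assignment must be natural in $\Gamma$ and must return the cap itself when the target equals the source ("diagonal"). The proof then splits into the two directions below.

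For (structure $\Rightarrow$ lifting), fix $m \colon A \mono B$ classified by $\OPCof$ and $z \colon B \to I$, together with a square from $\gencof{m}{z}$ to $f$. I read this square as one composition problem at stage $\Gamma = B$ with source $r = z$ and cofibration $\varphi = [m]$.
\begin{itemize}
\item The bottom map $B \times I \to X$ is the line $x$.
\item The leg $A \times I \to Y$ of the pushout is the tube.
\item The leg $B \to Y$ along $(\id,z)$ is the cap.
\end{itemize}
Compatibility of the cap and tube is exactly the pushout condition along $\pair{\id}{zm}$. Composing to every target $s \colon B \times I \to I$, naturally in $B$, gives the filler $B \times I \to Y$ by taking $s$ to be the projection. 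The diagonal condition makes this filler restrict to the cap along $(\id,z)$, and the tube condition makes it restrict to the tube on $A \times I$, so it is a lift.

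For (lifting $\Rightarrow$ structure), the subtlety is that right lifting only gives some lift, with no choice and no naturality. Following Awodey, I would not argue pointwise over all problems but use the fact that in $\PSh{\cat{C}}$ it suffices to solve the universal problem. The generic composition problem lives over the object $\Gamma_u$ classifying tuples $(x,\varphi,\text{tube},r,\text{cap})$. This object is built using the exponential $(-)^I$, the classifier $\OPCof$, and the partial-map classifier for $\OPCof$, all of which exist because $I$ is representable and $\OPCof$ is a subobject of $\Omegadec$. On $\Gamma_u$ I take $m$ to be the generic cofibration and $z$ the generic source $r$, and solve the single lifting problem against $\gencof{m}{z}$. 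Pulling this one lift back along the classifying map of an arbitrary problem yields a structure, and it is automatically natural because it is defined by pullback of a single solution. The diagonal condition holds because the lift restricts to the cap along $(\id,z)$.

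The main obstacle is this converse direction. I expect the step needing the most care to be checking that the universal object $\Gamma_u$ and the pullback stability of $\gencof{m}{z}$ work for a general ABCHFL setup rather than only for cartesian cubical sets. This is where representability of $I$ (so that $(-)\times I$ has a right adjoint) and the closure assumptions on $\OPCof$ enter. I would verify that Awodey's argument \cite{awodey:26} uses only these features.
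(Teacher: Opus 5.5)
Your proposal is correct and follows essentially the route the paper relies on by deferring to Awodey's argument. One direction reads a square against $\gencof{m}{z}$ as a composition problem over $B$, reindexed to $B \times I$ with the projection as target; the other solves the single generic lifting problem and pulls it back, which gives naturality and the diagonal condition for free.

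One small correction to your commentary: the generic problem object and the pullback-stability of $\gencof{m}{z}$ use only that $\PSh{\cat{C}}$ is a topos. The exponential $(-)^I$ exists and pushouts are pullback-stable for any $I$, and $(-)\times I$ always has a right adjoint, so representability of $I$ is not what this step rests on.
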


We call a map an \emph{$(I,\OPCof)$-fibration} when it satisfies the property in \Cref{abcfhl-fibrations}.
As a lifting property, it is closed under retracts \cite[Lemma 11.1.4]{riehl:14}.

\begin{proposition}
  \label{twist-setup}
  If $(\cat{C},I,0,1,\OPCof)$ is an ABCHFL setup, then $(\cat{C},I \times I,r,s,\OPCof)$ is an ABCHFL setup for every $r \neq s \colon 1 \to I \times I$.
  Moreover, the classes of $(I \times I,\OPCof)$- and $(I,\OPCof)$-fibrations coincide.
\end{proposition}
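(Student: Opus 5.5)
The plan is to verify the ABCHFL conditions for $I \times I$ directly, and then compare the two lifting properties of \Cref{abcfhl-fibrations} by exhibiting each generating map for one interval as a retract of (or pushout-product-related to) a generating map for the other.

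\emph{Setup.} Since $\cat{C}$ has finite products and $I$ is representable, $I \times I$ is representable. Points $r \neq s$ are given, and $\OPCof$ is unchanged, so closure under finite conjunction and disjunction is inherited. The diagonal of $I \times I$ is the intersection of the pullbacks of the diagonal of $I$ along the two projections $(I\times I)^2 \to I^2$. Hence it is classified by $\OPCof$, by closure under pullback and conjunction. Universal quantification over $I \times I$ is iterated quantification over $I$, so $\OPCof$ is closed under it.

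\emph{($I$-fibrations are $(I\times I)$-fibrations).} Fix $m \colon A \mono B$ in $\OPCof$ and $z = (z_0,z_1) \colon B \to I \times I$. Factor the target $B \times I \times I$ through two steps. First take $\gencof{m}{z_0}$ in the first coordinate. Then take the analogous map in the second coordinate, for the cofibration $m' \colon (A \times I) \cup_{A} B \mono B \times I$ and the point $z_1 \circ \pi_B \colon B \times I \to I$. This mirrors \Cref{twist-filling}: fill in one coordinate, then the other. Concretely, I will show that $\gencof{m}{z}$ for the interval $I\times I$ is a composite of pushouts of generating maps of the form $\gencof{m''}{w}$ for $I$. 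The main check is that each intermediate domain is the correct union of subobjects, which uses that $\OPCof$ classifies $\pair{\id}{z_0}$ and is closed under joins and pullback along $B \times I \to B$. It then follows that maps with the right lifting property against the $I$-generators lift against the $(I\times I)$-generators, since lifting properties are closed under composition and pushout of the left maps.

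\emph{($(I\times I)$-fibrations are $I$-fibrations).} This is the step I expect to be the main obstacle, since $I$ is not obviously a retract of $I \times I$ \emph{over} $z$ in a way that respects the point. My first attempt is to use $z \colon B \to I$ and form $(z,z) \colon B \to I \times I$. I will then exhibit $\gencof{m}{z}$ (for $I$) as a retract of $\gencof{m}{(z,z)}$ (for $I\times I$). The section uses $\id \times \Delta \colon B \times I \to B \times I \times I$ and the retraction uses $\id \times \pi_0$. Both are compatible with the boundary, because $\Delta$ and $\pi_0$ carry $(\id,z)$ to $(\id,(z,z))$ and back, and carry $A \times I$ into $A \times I \times I$ and back. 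With this in place, the pushouts are matched functorially and the retraction composes to the identity. The claim then follows because lifting properties are closed under retracts \cite[Lemma 11.1.4]{riehl:14}. Note that this argument does not involve the points $r,s$ at all, matching the statement's "for every $r \neq s$". If the retract fails on the nose, I would instead fall back on Awodey's characterization and compare uniform filling structures directly.
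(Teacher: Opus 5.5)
The setup part is correct and matches the paper. So is your retract argument that $(I\times I,\OPCof)$-fibrations are $(I,\OPCof)$-fibrations. Using $(z,z)$ with section $\id \times \Delta$ and retraction $\id \times \pi_0$ works just as well as the paper's choice of $(z,0)$ with section $B \times (I \times 0)$. So the step you expected to be the main obstacle is routine.

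The gap is in the other direction: the second cofibration you chose is wrong. Inside $B \times I \times I$, the domain of $\gencof{m}{z}$ is the union of $\{(b,z_0 b,z_1 b)\}$ with $A \times I \times I$. Pushing out $\gencof{m}{z_0}$, placed at second coordinate $z_1 b$, enlarges this to $S = \{(b,i,z_1 b)\} \cup (A \times I \times I)$.

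For your $m' \colon (A\times I)\cup_A B \mono B\times I$ and $w = z_1\pi_B$, the domain of $\gencof{m'}{w}$ is $S \cup \{(b,z_0 b,j)\}$. This is strictly larger than $S$ in general. For $A=\emptyset$, $B=1$, $z=(0,0)$ it is the L-shape $I\times\{0\}\cup\{0\}\times I$, while $S = I\times\{0\}$. So your two maps do not compose to $\gencof{m}{z}$, and the check you deferred ("each intermediate domain is the correct union") fails.

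The reason is that feeding the output cofibration back in gives the Leibniz product of $m$ with the inclusion of the cross $\{i = z_0\}\cup\{j=z_1\}$. What you need is the Leibniz product with the inclusion of the point $(z_0,z_1)$, which factors as a composite of two point inclusions.

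The fix is to take $m\times I\colon A\times I\mono B\times I$ as the second cofibration. Then $\gencof{(m\times I)}{z_1\pi_B}$ has domain exactly $S$. So $\gencof{m}{z}$ is a pushout of $\gencof{m}{z_0}$ followed by $\gencof{(m\times I)}{z_1\pi_B}$, which is the paper's decomposition. This version also needs only pullback-stability of $\OPCof$, not that $\OPCof$ contains the graph of $z_0$.

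Alternatively, you can keep your $m'$ by inserting a middle step $S \to S\cup\{(b,z_0 b,j)\}$, which is a pushout of $\gencof{m}{z_1}$. That gives a valid three-step factorization.
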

\begin{proof}
  Because $\cat{C}$ is a finite product category by assumption, $I \times I$ is also representable.
  The diagonal $\Delta_{I \times I} \colon I \times I \to (I \times I) \times (I \times I)$ is the conjunction of $\subst{(\pi_0 \times \pi_0)}\Delta_I$ and $\subst{(\pi_1 \times \pi_1)}\Delta_I$, the pullbacks of $\Delta_I \colon I \to I \times I$ along the projections $\pi_0 \times \pi_0, \pi_1 \times \pi_1 \colon (I \times I) \times (I \times I) \to I \times I$.
  Thus it is classified by $\OPCof$.
  Universal quantification of $I \times I$ is iterated universal quantification over $I$, so $\OPCof$ is closed under this operation.
  This completes the proof of the first claim.

  For the second claim, recall that the class of maps with the left lifting property against a map is closed under retracts, composition, and pushouts along arbitrary maps \cite[Lemma 11.1.4]{riehl:14}.
  Given $m \colon A \mono B$ classified by $\OPCof$ and $z \colon B \to I$, we can write $\gencof{m}{z}$ as a retract of $\gencof{m}{z \times 0}$ where $z \times 0 \colon B \to I \times I$:
  \[
    \begin{tikzcd}
      B \sqcup_A (A \times I) \ar[mono]{d}[left]{\gencof{m}{z}} \ar{r} & B \sqcup_A (A \times (I \times I)) \ar[mono]{d}{\gencof{m}{z \times 0}} \ar{r} & B \sqcup_A (A \times I) \ar[mono]{d}{\gencof{m}{z}} \\
      B \times I \ar{r}[below]{B \times (I \times 0)} & B \times (I \times I) \ar{r}[below]{B \times \pi_0} & B \times I \rlap{.}
    \end{tikzcd}
  \]
  Thus every $(I \times I,\OPCof)$-fibration is an $(I,\OPCof)$-fibration.
  For the converse, given $m \colon A \mono B$ classified by $\OPCof$ and $z \colon B \to I \times I$, we read off the commutative diagram
  \[
    \begin{tikzcd}
      A \ar{d}[left]{m} \ar{r}{(\id,\pi_0zm)} & A \times I \ar{r}{(\id,\pi_1zm\pi_0)} \ar{d} &[-2em] (A \times I) \times I \ar{d} \ar[bend left=50]{dddr}[pos=0.3]{(m \times I) \times I} &[-2em] \\
      B \ar[bend right]{dr}[below left]{(\id, \pi_0z)} \ar{r} & \pushout B \sqcup_A (A \times I) \ar[dashed]{d}{\gencof{m}{\pi_0z}} \ar{r} & \pushout B \sqcup_A (A \times I) \times I) \ar{d} \ar[dashed,bend left=20]{ddr}[pos=.3]{\gencof{m}{z}} \\
      & B \times I \ar[bend right=15]{drr}[below left,pos=.3]{(\id,\pi_1z\pi_0)} \ar{r} & \pushout (B \times I) \sqcup_{A \times I} (B \times I) \times I \ar[dashed]{dr}[below left]{\gencof{(m \times I)}{\pi_1z\pi_1}} \\[-1em]
      & & & (B \times I) \times I
    \end{tikzcd}
  \]
  that $\gencof{m}{z}$ is a composite of a pushout of $\gencof{m}{\pi_0z}$ and $\gencof{(m \times I)}{\pi_1z\pi_1}$.
  This semantic counterpart to \Cref{twist-filling} shows that all $(I,\OPCof)$-fibrations are $(I \times I,\OPCof)$-fibrations.
\end{proof}

Using \Cref{twist-setup}, we can turn any ABCHFL model that presents $\infty$-groupoids into an ABCHFL model with reversals that also presents $\infty$-groupoids.
A suitable input is the category of presheaves on the \emph{semilattice cube category} $\square_{\lor}$, \ie, the cartesian cube category with one connection.
This is the algebraic theory generated by an object $I$ with points $0,1 \colon 1 \to I$ and a connection $\lor \colon I \times I \to I$ satisfying the axioms of a bounded join-semilattice.

\begin{proposition}[{{\cite[Theorems 4.34 \& 7.8]{cavallo-sattler:25}}}]
  \label{one-connection-model}
  The ABCHFL model $\mod{M}(\square_{\lor},I,0,1,\Omegadec)$ presents a Quillen model structure.
  Assuming classical logic, this model structure presents the $(\infty,1)$-category of $\infty$-groupoids.
\end{proposition}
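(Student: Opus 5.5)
This statement is imported from our earlier work, so the plan is to check that the hypotheses here match the cited results \cite[Theorems 4.34 \& 7.8]{cavallo-sattler:25}, and to recall how those results are proved. First I would check that $(\square_{\lor},I,0,1,\Omegadec)$ is an ABCHFL setup, so that \Cref{abcfhl-theorem} applies. The category $\square_{\lor}$ is the Lawvere theory of a bounded join-semilattice, so it has finite products and $I$ is representable. The diagonal of $I$ is levelwise decidable, because equality in free bounded join-semilattices on finitely many generators is decidable. $\Omegadec$ is closed under finite conjunction and disjunction, and under universal quantification over $I$, since $I$ is representable with decidable structure.

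For the existence of the Quillen model structure (the first claim, \cite[Theorem 4.34]{cavallo-sattler:25}), I would follow the now-standard recipe for cubical models. Take cofibrations to be the levelwise decidable monomorphisms and fibrations to be the $(I,\Omegadec)$-fibrations of \Cref{abcfhl-fibrations}. Both weak factorization systems come from the small object argument, or constructively from algebraic versions of it. Define weak equivalences as maps that factor as a trivial cofibration followed by a trivial fibration, or equivalently via fibrant replacement and homotopy equivalence. The key inputs are: the Frobenius property, obtained from the pushforward of fibrations along fibrations (the model interprets $\Pi$); the equivalence extension property, which comes from $\ccGlue$ types; and fibrancy of the universe. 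These three together give 2-out-of-3 for weak equivalences. I would then check that the fibrations are exactly the retracts of context extensions, which holds because the universe classifies fibrations, so this is the model structure \emph{presented} by $\mod{M}$ in the sense above.

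For the comparison with $\infty$-groupoids (the second claim, \cite[Theorem 7.8]{cavallo-sattler:25}), I would use triangulation $\PSh{\square_{\lor}} \to \mathrm{sSet}$, the left Kan extension of $[1]^n$, with right adjoint the cubical nerve. The steps are to show this adjunction is a Quillen adjunction and then that its derived unit and counit are equivalences. \textbf{This is the main obstacle.} For the cartesian interval without connections the analogous comparison fails, and the connection $\lor$ is exactly what makes it work. The connection lets every representable $I^n$ contract onto a vertex in a way compatible with triangulation, which yields that the nerve reflects weak equivalences. Classical logic enters where we identify weak equivalences between simplicial sets with Kan–Quillen weak equivalences; this requires Kan fibrant replacement and the classical theory of minimal fibrations and homotopy groups. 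Beyond checking the ABCHFL setup, I would not re-prove any of this here, and simply cite the two theorems.
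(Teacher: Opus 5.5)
This matches the paper, which gives no proof of its own and simply imports the statement from \cite[Theorems 4.34 \& 7.8]{cavallo-sattler:25}; your check that $(\square_{\lor},I,0,1,\Omegadec)$ is an ABCHFL setup (finite products, representable $I$, decidable diagonal, closure of $\Omegadec$ under $\forall_I$ because $I$ is representable) is correct. Your account of how the cited theorems are proved is not needed for the argument and is partly speculative (e.g.\ the remarks on the nerve reflecting weak equivalences and on where classical logic enters); also, that the fibrations are exactly the retracts of context extensions follows directly from \Cref{abcfhl-theorem}(c) and the retract-closed lifting characterization in \Cref{abcfhl-fibrations}, without a separate argument.
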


\begin{theorem}
  The ABCHFL model $\mod{M}(\square_{\lor},I \times I,(0,1),(1,0),\Omegadec)$ interprets strict cubical type theory with reversals and presents a Quillen model structure.
  Assuming classical logic, this model structure presents the $(\infty,1)$-category of $\infty$-groupoids.
\end{theorem}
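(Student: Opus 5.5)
The plan is to reduce everything to \Cref{twist-setup} and \Cref{one-connection-model}. First, $(\square_{\lor},I,0,1,\Omegadec)$ is an ABCHFL setup, since it is the input of \Cref{one-connection-model}. Applying \Cref{twist-setup} with $r=(0,1)$ and $s=(1,0)$, which are distinct because $0\neq 1$, shows that $(\square_{\lor},I\times I,(0,1),(1,0),\Omegadec)$ is again an ABCHFL setup. \Cref{abcfhl-theorem} then yields a model of $\rmcCTTs$ whose interval is $\yo(I\times I)$. This model interprets the interval theory of all maps $(I\times I)^n\to I\times I$ in $\square_{\lor}$ together with the equations between them.

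In particular, the swap $\sigma=(\pi_1,\pi_0)\colon I\times I\to I\times I$ is a morphism of $\square_{\lor}$, since it is built from product projections. It satisfies $\sigma(0,1)=(1,0)$, $\sigma(1,0)=(0,1)$ and $\sigma\sigma=\id$. So the interpreted interval theory contains $\thRev{\thCart}{\phi}$, and the model interprets strict cubical type theory with a reversal. If we also want connections, the twist formula $(i_0,i_1)\lor(j_0,j_1)=(i_0\lor j_0,\ i_1\lor j_1)$ need not be a De Morgan structure, because $\square_{\lor}$ has only a join. The claim is only about reversals, so I will not pursue this.

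For the homotopy theory, the fibrations of the Quillen model structure presented by a model are retracts of context extensions. By \Cref{abcfhl-theorem}(c), context extensions in either model are exactly the maps carrying a diagonal Kan composition structure. By \Cref{abcfhl-fibrations}, having such a structure amounts to being an $(I,\Omegadec)$-fibration, respectively an $(I\times I,\Omegadec)$-fibration. These classes are closed under retracts, so the fibrations of each presented structure are exactly these lifting classes. By the second part of \Cref{twist-setup}, the two classes coincide.

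Both structures live on the same category $\PSh{\square_{\lor}}$. In both, every map $0\to\Gamma$ is a cofibration, and the fibrations agree. By the cited uniqueness result (Joyal), a model structure is determined by its fibrations together with the cofibrancy of all objects, since the trivial fibrations, cofibrations and weak equivalences are recovered from these. The model structure from \Cref{one-connection-model} therefore also satisfies the defining conditions for the twisted model, so the twisted model presents it. It follows that this model structure presents $\infty$-groupoids classically.

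The main obstacle is this last step: making sure that ``presents'' is determined purely by the fibration class and cofibrancy, so that the existence statement transfers. I would handle it by unfolding the definition and citing the Joyal uniqueness theorem directly. A secondary point is to check that $\Omegadec$ meets the closure hypotheses, but those are already given since $\square_{\lor}$ is a valid setup.
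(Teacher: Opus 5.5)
Your proof is correct. It matches the paper's on the model of type theory (via \Cref{twist-setup} and \Cref{abcfhl-theorem}) and on identifying the two model structures (coincidence of fibrations plus uniqueness). It differs on how the \emph{existence} of the presented model structure is obtained.

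The paper establishes existence separately. It follows Awodey's construction using the general lemmas of Awodey et al.\ \cite[Lemma 3.7.2, Proposition 3.7.3]{accrs:24}, and only then uses uniqueness to identify the result with the structure of \Cref{one-connection-model}. You observe instead that the two models have the same category of contexts $\PSh{\square_{\lor}}$. They also have the same class of retracts of context extensions, by \Cref{abcfhl-theorem}(c), \Cref{abcfhl-fibrations}, retract-closure of lifting classes, and the second half of \Cref{twist-setup}. So conditions (a) and (b) of the definition are literally identical for both models, and the structure supplied by \Cref{one-connection-model} already witnesses existence for the twisted model. Uniqueness is then only needed to make ``the'' presented structure well defined.

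Your route is more economical and needs no new model-structural input; by symmetry it even shows that existence for the twisted setup is equivalent to existence for the base setup. The paper's route instead gives an independent construction directly from the twisted interval.

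You also spell out something the paper leaves implicit: the swap $(\pi_1,\pi_0)$ is a map of $\square_{\lor}$ satisfying the reversal equations for the endpoints $(0,1)$ and $(1,0)$.

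One slip in your side remark: the twist join is $(i_0 \lor j_0,\ i_1 \land j_1)$, not $(i_0 \lor j_0,\ i_1 \lor j_1)$, which fails the unit law at $(0,1)$. Your conclusion that $\square_{\lor}$ lacks the needed $\land$ is nonetheless right.
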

\begin{proof}
  \Cref{twist-setup,abcfhl-theorem} give the existence of the model of type theory.
  The existence of the model structure can be established by following Awodey's construction \cite{awodey:26}, but he considers only the cartesian cube category with its canonical interval object.
  The necessary components appear in generality in Awodey et al.~\cite[Lemma 3.7.2, Proposition 3.7.3]{accrs:24} (compare \cite[Theorem 4.4.9]{accrs:24}).
  By \Cref{twist-setup} and the uniqueness of the model structure presented by a model of type theory, this model structure is the same as that presented by $\mod{M}(\square_{\lor},I,0,1,\Omegadec)$, so \Cref{one-connection-model} proves the final claim.
\end{proof}

Although the original interval $I$ in $\square_{\lor}$ has a connection, this is not the case for $I \times I$ with the endpoints $(0,1)$ and $(1,0)$: in order to give the definition $(i_0,i_1) \lor (j_0,j_1) \defeq (i_0 \lor j_0, i_1 \land j_1)$ from \S\ref{sec:contributions}, we need \emph{both} connections in the base model.
Thus we only model cubical type theory with a reversal, not with a connection.
We expect that we can apply the procedure in this section to the second author's model of cubical type theory with two connections \cite{sattler:25}, even though it is not an ABCHFL model, but we leave this to future work.

\bibliography{refs}

\end{document}